\newtheorem{thm}{Theorem}[section]
\newtheorem{lem}[thm]{Lemma}
\newtheorem{prop}[thm]{Proposition}
\newtheorem{defn}[thm]{Definition}
\theoremstyle{remark}
\newtheorem{rem}[thm]{Remark}
\newtheorem{example}[thm]{Example}
\newcommand{\set}[1]{\left\{#1\right\}}
\newcommand{\trr}{\triangleright}
\newcommand{\rrt}{\triangleleft\,}
\newcommand{\abs}[1]{\left\vert#1\right\vert}
\newcommand{\ass}{\stackrel{\textup{\tiny def}}{=}}
\newcommand*\mycirc[1]{\!\!\begin{tikzpicture}[baseline=(C.base)]\node[draw,circle,inner sep=1.5pt](C) {\scalebox{0.8}{$#1$}};\end{tikzpicture}\!\!}
\newcommand{\ooplus}{\mycirc{\trr}}
\newcommand{\oominus}{\mycirc{\rrt}}
\newcommand{\oodiamond}{\mycirc{\diamond}}
 \newcommand{\dU}{ \, \raisebox{1.25pt}{\scalebox{.8}{$\coprod$}} \; }
\newcounter{saveenumerate} 
\newcommand{\enumeratext}[1]{%
\setcounter{saveenumerate}{\value{enum\romannumeral\the\@enumdepth}}
\end{enumerate}
#1
\begin{enumerate}
\setcounter{enum\romannumeral\the\@enumdepth}{\value{saveenumerate}}%
}
\begin{document}

\title[Tangle Machines \textrm{II}:\ \, Invariants]
{Tangle Machines \textrm{II}:\ \, Invariants}

\author{Daniel Moskovich}%
\address{Division of Mathematics, School of Physical and Mathematical Sciences, Nanyang Technological University, 21 Nanyang Link, Singapore 637371}%
\email{dmoskovich@gmail.com}%

\author{Avishy Y. Carmi}
\address{Department of Mechanical Engineering, Ben-Gurion University of the Negev, Beer-Sheva 8410501, Israel}
\email{avishycarmi@gmail.com}

\keywords{diagrammatic models, natural computing, recursion, adiabatic quantum computing, information theory, cybernetics, networks, knot theory, reidemeister moves}

\date{10th of April, 2014}


\begin{abstract}
The preceding paper constructed tangle machines as diagrammatic models, and illustrated their utility with a number of examples. The information content of a tangle machine is contained in characteristic quantities associated to equivalence classes of tangle machines, which are called \emph{invariants}. This paper constructs invariants of tangle machines. Chief among these are the \emph{prime~factorizations} of a machine, which are essentially unique. This is proven using low dimensional topology, through representing a colour-suppressed machine as a diagram for a network of jointly embedded spheres and intervals in $4$--space. The \emph{complexity} of a tangle machine is defined as its number of prime factors.
\end{abstract}
\maketitle


\section{Introduction}

The prequel to this paper defined \emph{tangle machines}, a low dimensional topological formalism for causality, computation, and information. Equivalent machines are considered `globally the same', meaning that one can be perfectly reproduced from another, but perhaps not `locally the same'. That paper provided examples of machines modeling recursion and Markov chains, networks of adiabatic quantum computations, and networks of distributed information processing. In each example three equivalent machines were presented, one `optimal', one `suboptimal', and one `abstract'.

The goal of the present paper is to extract information from machines in the form of \emph{machine invariants}. Invariants  are numbers, polynomials, and other well-understood mathematical objects associated to equivalence classes of machines. \emph{Information invariants} are those invariants $v$ such that, if $M_1\begin{minipage}{8pt}\includegraphics[width=8pt]{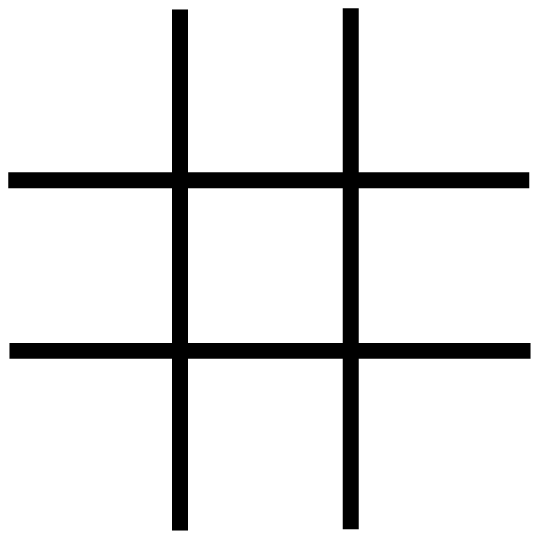}\end{minipage} M_2$ is the connect sum of $M_1$ with $M_2$, then

\begin{equation}
v(M_1\begin{minipage}{8pt}\includegraphics[width=8pt]{hash.eps}\end{minipage} M_2)=v(M_1)+v(M_2).
\end{equation}

Information invariants of tangle machines capture information theoretical quantities associated to the machine. These include:

\begin{itemize}
\item The relative influence of one part of the machine on another (Sections~\ref{SSS:LinkingGraph} and \ref{SSS:SpectralGraph}).
\item The number of nontrivial interactions contained in the machine (Section~\ref{SSS:InteractionNumber}).
\item The number of independently functioning components (\emph{factors}) of a machine (Section~\ref{S:Complexity}).
\item The maximal amount of information that a machine can contain (Section~\ref{SSS:ShannonCapacity}).
\end{itemize}

Low dimensional topology provides a toolbox with which to prove that information invariants, especially nontrivial interaction number and capacity, are indeed invariants, \textit{i.e.} that they take the same value for equivalent machines. The authors do not know how invariance may otherwise be proven, for indeed the only known proofs for parallel statements in knot theory are topological in nature.

This paper is organized as follows. In Section~\ref{S:TopPrelim} we recall from the prequel the relevant tangle machine definitions, and we provide a number of low dimensional topological preliminaries to subsequent sections. In Section~\ref{SS:SphereInterval} we reveal machines to be diagrams for networks of spheres and intervals jointly embedded in standard Euclidean $4$--space $\mathds{R}^4$. We prove a Reidemeister Theorem for machines, and provide two alternative diagrammatic formalisms for machines, via \emph{Roseman diagrams} and via hybrid \emph{Rosemeister diagrams}. Section~\ref{SS:SimpleInvariants} discusses various relatively simple information invariants of machines, and finally in Section~\ref{S:Complexity} we discuss machine complexity, that is the number of \emph{prime factors} of a machine (the maximal number of its nontrivial connect summands). The unique prime factorization theorem of that section has a parallel for classical knots, but not for virtual knots or for w-knots. It states roughly that every nontrivial factorization of a non-split machine has a unique maximal refinement, up to \emph{unit factors} each of which contain only one colour.

For ease of exposition, in this paper we assume throughout that the rack operation $\trr$ is the same at all crossings.

\section{Preliminaries}\label{S:TopPrelim}

\subsection{Machines}\label{SS:MachinePrelim}

In order to make this paper logically self-contained, we recall some definitions from the prequel.

\begin{defn}[Rack; Quandle]
\label{D:Rack}\hfill
\begin{itemize}
\item A \emph{rack} is a set $Q$ equipped with a binary operation $\trr$ such that $\trr\, z$ is an automorphism of $Q$ for all $z\in Q$, and such that $Q$ is closed under the inverse operation $\rrt$ of $\trr$.
 \item A rack all of whose colours are idempotent, \textit{i.e.} $x\trr x=x$ for all $x\in Q$, is called a \emph{quandle}.
\end{itemize}
\end{defn}

\begin{defn}[Tangle machines]\label{D:TangleMachine}
A \emph{tangle machine} $M$ is a triple $M\ass (G,\bm{\phi},\rho)$ consisting of:
\begin{itemize}
\item A disjoint union of directed path graphs $A_1,\ldots, A_k$ (\emph{open processes}) and directed cycles $C_1,\ldots,C_l$ (\emph{closed processes}),
\begin{equation}
G\,\ass\, \left(A_1\dU A_2\dU \cdots\dU A_k\right)\dU \left(C_1\dU C_2\dU\cdots\dU C_l\right),
\end{equation}
The graph $G$ is called the \emph{underlying graph} of $M$. Vertices of $G$ are called \emph{registers}.

\item A partially-defined \emph{interaction function}
\begin{equation}
\bm{\phi}\ass (\phi,\mathrm{sgn})\colon\, E(G)\to V(G)\times\{+,-\}
\end{equation}
\item A \emph{colouring function} $\rho$ from $V(G)$ to a rack $Q$ such that, if $v$ and $w$ are vertices in $M$ and if $e$ is an edge from $v$ to $w$, we have:
\begin{equation}\label{E:Compatability}\left\{
    \begin{array}{ll}
      \rho(v)\trr \rho(\phi(e))=\rho(w), & \hbox{If $\mathrm{sgn}(e)=+$;} \\
      \rho(v)\triangleleft \rho(\phi(e))=\rho(w), & \hbox{if $\mathrm{sgn}(e)=-$;}\\
      \rho(v)=\rho(w) & \hbox{if $e\notin \textrm{Domain}(\bm{\phi})$.}
    \end{array}
  \right.
\end{equation}
\end{itemize}
\end{defn}

If $Q$ is a quandle then a machine $M$ is said to be a \emph{quandle machine}. Conversely, we refer to $M$ as a \emph{rack machine} when we wish to stress that $Q$ is not a quandle.

Two machines $M_1$ and $M_2$ are considered \emph{equivalent} if they are related by an automorphism of $Q$ together with a finite sequence of the following \emph{Reidemeister moves}:

 \begin{equation}
\begin{tikzcd}[row sep=1em,
column sep=0.8em]
 x \rar[dash] & \ooplus \rar & x\triangleright y \rar[dash] & \oominus \rar & x\\
& & y \arrow[dash, dashed]{ur} \arrow[dash, dashed]{ul} & &
\end{tikzcd}\quad \stackrel{R2}{\rule{0pt}{5pt}\longleftrightarrow} \quad\  \begin{tikzcd}[row sep=1em, column sep=0.5em]
x \arrow{rr} & & x \arrow{rr} & & x \\ & & y & &
\end{tikzcd} \quad \stackrel{R2}{\rule{0pt}{5pt}\longleftrightarrow} \quad\  \begin{tikzcd}[row sep=1em,
column sep=0.8em]
 x \rar[dash] & \oominus \rar & x\rrt y \rar[dash] & \ooplus \rar & x\\
& & y \arrow[dash, dashed]{ur} \arrow[dash, dashed]{ul} & &
\end{tikzcd}
\end{equation}

\begin{equation}
\resizebox{.4\hsize}{!}{
\begin{tikzcd}[ampersand replacement=\&,row sep=1em, column sep=0.4em]
(x_1 \trr z) \trr (y \trr z) \& \& \cdots \& \& (x_k \trr z) \trr (y \trr z) \\
\ooplus \arrow{u} \& \& y \trr z \arrow[dash, dashed]{ll} \arrow[dash, dashed]{rr} \& \& \ooplus \arrow{u} \\
x_1 \trr z \arrow[dash]{u} \& \& \ooplus \arrow{u} \& y \arrow[dash]{l} \& x_k \trr z \arrow[dash]{u} \\
\ooplus \arrow{u} \& \& z \arrow[dash, dashed]{u} \arrow[dash, dashed]{ll} \arrow[dash, dashed]{rr} \& \& \ooplus \arrow{u} \\
x_1 \arrow[dash]{u} \& \& \cdots \& \& x_k \arrow[dash]{u}
\end{tikzcd}} \stackrel{R3}{\longleftrightarrow}
\resizebox{.4\hsize}{!}{\begin{tikzcd}[ampersand replacement=\&,row sep=1em, column sep=0.4em]
(x_1 \trr y) \trr z \& \& \cdots \& \& (x_k \trr y) \trr z \\
\ooplus \arrow{u} \& \& y \arrow[dash]{d} \arrow[dash, dashed, bend right=10]{lldd} \arrow[dash, dashed, bend left=10]{rrdd} \& \& \ooplus \arrow{u} \\
x_1 \trr y \arrow[dash]{u} \& \; \; \; \; \& \ooplus \arrow{r} \& y \trr z \& x_k \trr y \arrow[dash]{u} \\
\ooplus \arrow{u} \& \& z \arrow[dash, dashed]{u} \arrow[dash, dashed, bend left=10]{lluu} \arrow[dash, dashed, bend right=10]{rruu} \& \& \ooplus \arrow{u} \\
x_1 \arrow[dash]{u} \& \& \cdots \& \& x_k \arrow[dash]{u}
\end{tikzcd}}
\end{equation}

If $M$ is a quandle machine, we admit also the following move:

 \begin{equation}\label{E:R1}
\begin{tikzcd}[row sep=1em,
column sep=0.8em]
 x \rar[dash] & \oodiamond\arrow[dash,dashed, bend left=35]{l} \rar & \
\end{tikzcd}
\stackrel{R1}{\longleftrightarrow} \quad \begin{tikzcd}[row sep=1em,
column sep=0.8em]
 x \arrow{rr} & &  \
\end{tikzcd} \quad\quad\text{and}\quad\ \ \begin{tikzcd}[row sep=1em,
column sep=0.8em]
 \  \arrow{rr} & &  x
\end{tikzcd}\ \stackrel{R1}{\longleftrightarrow} \!\begin{tikzcd}[row sep=1em,
column sep=0.8em]
 \  \rar[dash] & \oodiamond \rar & x \arrow[dash,dashed, bend left=35]{l}
\end{tikzcd}
\end{equation}

The following move is called \emph{stabilization}, where one of the registers on the LHS must lie outside the image of $\phi$:

\begin{equation}\label{E:FalseStabilization}
\begin{tikzcd}[row sep=1em,
column sep=1em]
x \arrow{rr} & & x
\end{tikzcd}\quad\longleftrightarrow\quad \begin{tikzcd}[row sep=1em]
x
\end{tikzcd}
\end{equation}
If both registers on the LHS are in the image of $\phi$ then the above move is called \emph{false stabilization}.

Two machines $M_1$ and $M_2$ that are related by automorphisms of $Q$ on their connected components, a finite sequence of Reidemeister moves (and (de)stabilizations) are said to be \emph{(stably) equivalent}.

Another diagrammatic formalism for machines is perhaps easier for a human to work with. We redraw an interaction as a \emph{crossing}:

\begin{equation}\label{E:kebab}
\begin{tikzcd}[row sep= 3em, column sep=0.8em]
x_1 \rar[dash]& \oodiamond \rar & x_1\diamond y & x_2 \rar[dash]& \oodiamond \rar & x_2\diamond y &\cdots & x_k \rar[dash]& \oodiamond \rar & x_k\diamond y\\
\ & \ & \ & \ & \ & y \arrow[dash, dashed]{ullll} \arrow[dash, dashed]{ul} \arrow[dash, dashed]{urrr}
\end{tikzcd}\hspace{-10pt} \quad \quad \begin{minipage}{0.8in}\psfrag{a}[c]{$x_1$}\psfrag{x}[l]{$x_2$}\psfrag{y}[c]{$x_k$}\includegraphics[width=0.8in]{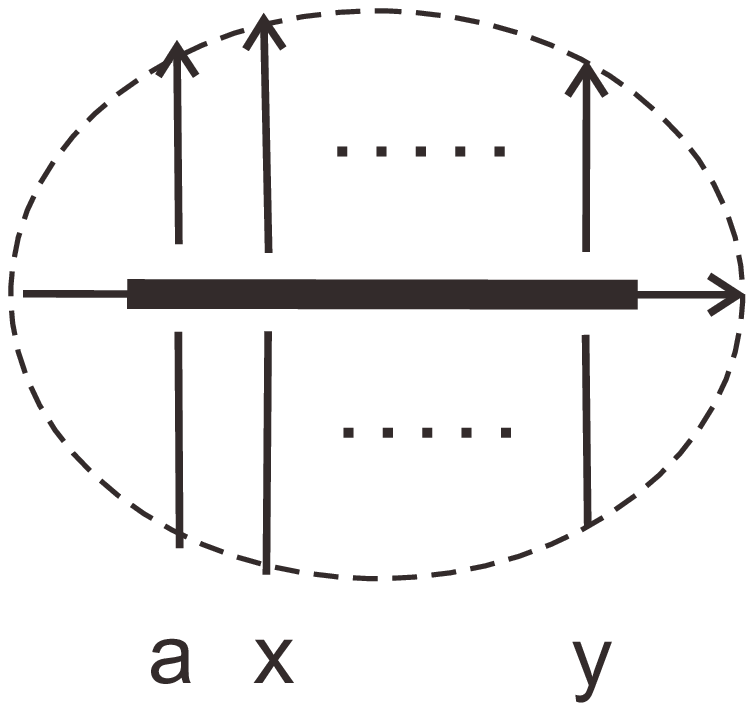}\end{minipage}
\end{equation}

Concatenating as required, we obtain a \emph{Reidemeister diagram} for our machine $M$. Two Reidemeister diagrams are said to be \emph{stably equivalent} if they are related by composing an automorphism of $Q$ with the colourings of their connected components, together with a finite sequence of the local moves listed in  Figure~\ref{F:local_moves_machines}. If stabilization is not used, then the two Reidemeister diagrams are said to be \emph{equivalent}. Stable equivalence classes of machines and of Reidemeister diagrams coincide.

\begin{figure}
\psfrag{T}[c]{\small \emph{VR1}}\psfrag{R}[c]{\small \emph{VR2}}\psfrag{S}[c]{\small \emph{VR3}}
\psfrag{Q}[c]{\small \emph{SV}}\psfrag{D}[c]{\small \emph{R1}}\psfrag{A}[c]{\small \emph{R2}}\psfrag{B}[c]{\small \emph{R3}}\psfrag{C}[c]{\small \emph{UC}}
\psfrag{X}[c]{\small \emph{ST}}
\includegraphics[width=0.9\textwidth]{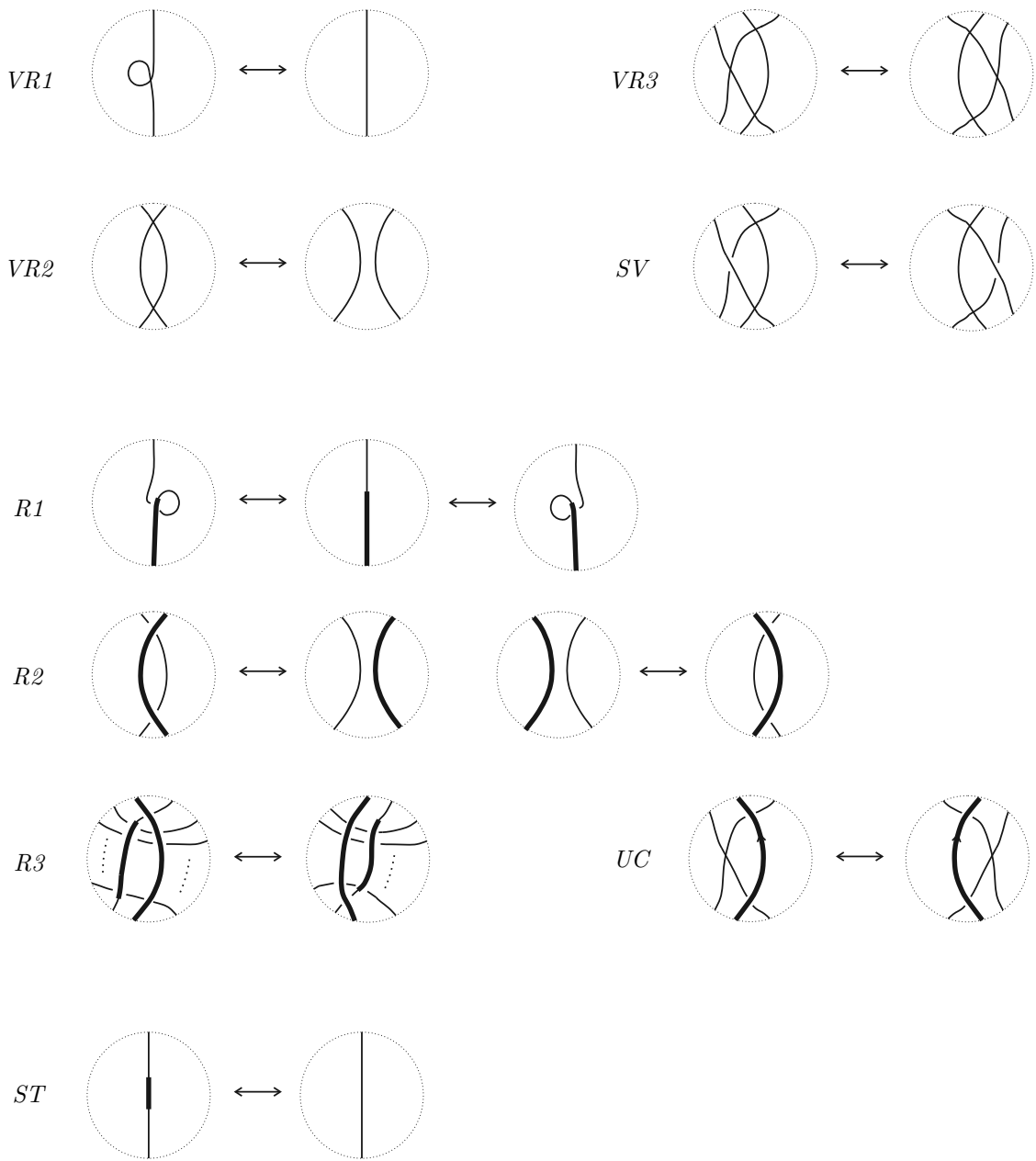}
\caption{\label{F:local_moves_machines}Local moves for machines, valid for any orientations of the strands. The R1 move is valid for quandle machines but not for rack machines.}
\end{figure}

Recall that a machine $M\ass (G,\bm{\phi},\rho)$ is a \emph{connect sum} of $M_1\ass (G,\bm{\phi}_1,\rho_1)$ and $M_2\ass (G,\bm{\phi}_2,\rho_2)$ if, writing $A_1\ass \mathrm{Domain}(\bm{\phi}_1)$ and $A_2\ass \mathrm{Domain}(\bm{\phi}_2)$, we have $A_1\cap A_2=\emptyset$ and $\mathrm{Domain}(\bm{\phi})=A_1\cup A_2$, with $\rho(r)=\rho_1(r)$ for all $r\in A$ and $\rho(r)=\rho_2(r)$ for all $r\in B$. In this case we write $M\ass M_1\begin{minipage}{8pt}\includegraphics[width=8pt]{hash.eps}\end{minipage} M_2$.

\begin{multline}
\resizebox{.4\hsize}{!}{
\begin{tikzcd}[ampersand replacement=\&,row sep=1em, column sep=1em]
\& \& \ooplus  \arrow[dash, dashed]{dd} \arrow{rr} \& \& y \arrow{rr}  \arrow[dash,dashed]{ddlll} \& \& y \arrow[dash]{rr}  \& \&  \ooplus  \arrow{drr} \& \& \\
x\trr y \arrow[dash]{urr} \arrow[dash, dashed, bend left=10]{drrr} \& \& \& \& \& \& \& \& \& \& y  \arrow[dash]{dl} \\
\&  \ooplus \arrow{ul} \rar[dash] \& x \&  \ooplus \lar \& y \arrow[dash]{l} \& \& y\arrow{ll}  \&  \ooplus \lar  \& y \arrow[dash]{l} \&  \ooplus \lar \&
\end{tikzcd}}\ \begin{minipage}{8pt}\includegraphics[width=8pt]{hash.eps}\end{minipage}\
\resizebox{.4\hsize}{!}{\begin{tikzcd}[ampersand replacement=\&,row sep=1em, column sep=1em]
\& \& \ooplus  \arrow{rr} \& \& y \arrow{rr} \& \& y \arrow[dash]{rr} \arrow[dash,dashed]{ddrrr} \& \&  \ooplus \arrow[dash,dashed]{dd} \arrow{drr} \& \& \\
y \arrow[dash]{urr} \& \& \& \& \& \& \& \& \& \& y\trr x  \arrow[dash]{dl} \\
\&  \ooplus \arrow{ul} \rar[dash] \& y \&  \ooplus \lar \& y \arrow[dash]{l} \& \& y\arrow{ll}  \&  \ooplus \arrow[dash,dashed, bend left=10]{urrr} \lar  \& x \arrow[dash]{l} \&  \ooplus \lar \&
\end{tikzcd}}\\ =\
\resizebox{.4\hsize}{!}{\begin{tikzcd}[ampersand replacement=\&,row sep=1em, column sep=1em]
\& \& \ooplus  \arrow[dash, dashed]{dd} \arrow{rr} \& \& y \arrow{rr} \arrow[dash,dashed]{ddlll}\& \& y \arrow[dash]{rr} \arrow[dash,dashed]{ddrrr} \& \&  \ooplus \arrow[dash,dashed]{dd} \arrow{drr} \& \& \\
x\trr y \arrow[dash]{urr} \arrow[dash, dashed, bend left=10]{drrr} \& \& \& \& \& \& \& \& \& \& y\trr x  \arrow[dash]{dl} \\
\&  \ooplus \arrow{ul} \rar[dash] \& x \&  \ooplus \lar \& y \arrow[dash]{l} \& \& y\arrow{ll}  \&  \ooplus \arrow[dash,dashed, bend left=10]{urrr} \lar  \& x \arrow[dash]{l} \&  \ooplus \lar \&
\end{tikzcd}}
\end{multline}

The converse of connect sum is \emph{cancellation}. To \emph{cancel} a factor $N=(H,\bm{\phi}_H,\rho_H)$ in $M=(G,\bm{\phi},\rho)$ is to replace $M$ by a machine $M-N\ass (G,\bm{\phi}_{G-H},\rho^H)$  where the $\rho^H$ satisfies $\rho^H(r)=\rho(r)$ for all $r\in G-H$. Here, $\bm{\phi}_{G-H}$ denotes the restriction of $\bm{\phi}$ to $G-H$.

\subsection{Knotted surfaces}\label{SSS:BrokenSurfaces}

In Section~\ref{SS:SphereInterval}, colour-suppressed Reidemeister diagrams of machines are conceived of as diagrams for jointly embedded networks of spheres and intervals. In this section we recall the rudiments of the classical theory of knotted surfaces.

Embeddings of $k$--dimensional objects in $k+2$--dimensional Euclidean space generalize classical knots. The $k=2$ case is the case of knotted surfaces in Euclidean $\mathds{R}^4$. Knotted surfaces are traditionally described by \emph{broken surfaces diagrams}, which are analogous to knot diagrams, and which we shall also call \emph{Roseman diagrams}. A reference for these is \citep{CarterKamadaSaito:04}, to which we refer the reader for details.

Let $K\colon\, \Sigma\to \mathds{R}^4$ be a smooth embedding in Euclidean $\mathds{R}^4$ of a closed surface $\Sigma$. Choose and fix a vector $v$ in $\mathds{R}^4$, which we once and for all identify with the $t$--axis. Its orthogonal complement is a hyperplane $H\subset\mathds{R}^4$, which is identified with $\mathds{R}^3$ with the $(x,y,z)$--axes. Project $K$ onto $H$ via a projection $\pi$. Generically, the singular points of the projection will be double-points, triple-points, and branch points. Neighbourhoods of each of these are as given in Figure~\ref{F:BranchPoints}.

\begin{figure}
\centering
\includegraphics[width=4.5in]{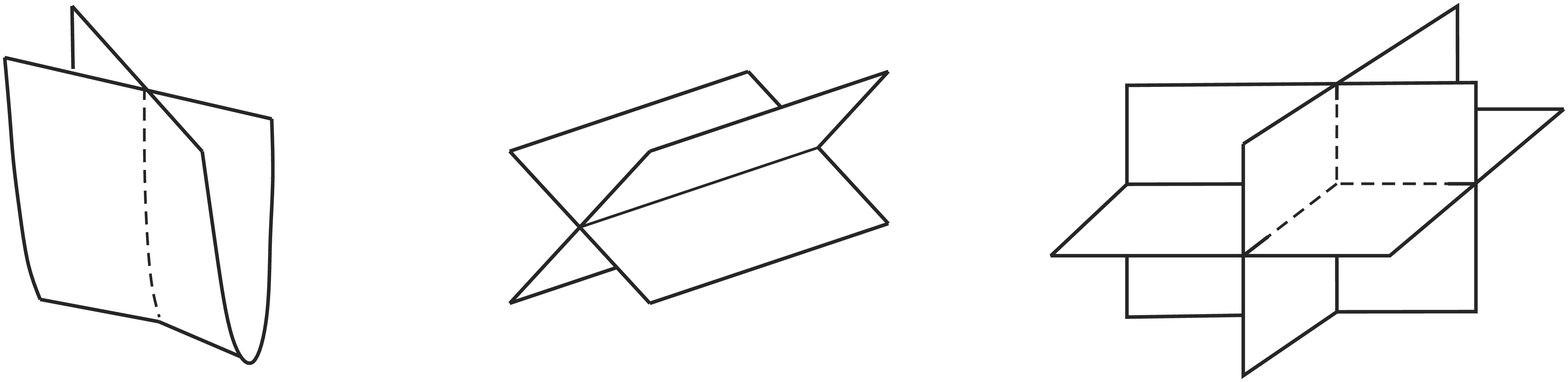}
\caption{\label{F:BranchPoints}Neighbourhoods of singular points of a generic $3$--dimensional projection of a knotted surface in $\mathds{R}^4$.}
\end{figure}

If the $t$--coordinate of a neighbourhood $N$ of a point $p$ in $K$ is greater than the $t$--coordinate of a neighbourhood $N^\prime$ of a point $p^\prime$ in $K$, and if $\pi(N)\cap\pi(N^\prime)\neq\emptyset$, break $\pi(N)$. This parallels the breaking, in the knot diagram case, of the line in the knot diagram whose pre-image is further from the projection plane into two undercrossing arcs.

The analogue to the Reidemeiser Theorem for knotted surfaces \citep{HommaNagase:85, CarterSaito:93, Roseman:98} reads as follows:

\begin{thm}[Roseman Theorem]
Two smooth embeddings $K_1,K_2$ of a closed surface are ambient isotopic if and only if any broken surfaces diagram $D_1$ of $K_1$ is related to any broken surfaces diagram $D_2$ of $K_2$ by a finite sequence of \emph{Roseman moves}, as shown in Figure~\ref{F:Roseman1}.
\end{thm}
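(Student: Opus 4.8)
The plan is to prove the two implications separately, following the template of the Reidemeister theorem for classical knots but carried out one dimension higher.

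The forward (``if'') direction is the routine one. Each Roseman move of Figure~\ref{F:Roseman1} is a local modification supported inside a small ball $B\subset\mathds{R}^3$, which lifts to a ball in $\mathds{R}^4$. I would exhibit, for each move in the finite list, an explicit ambient isotopy of $\mathds{R}^4$ compactly supported near the relevant chart carrying the before-picture to the after-picture while fixing everything outside. Since a composition of compactly supported isotopies is again an ambient isotopy, a finite sequence of Roseman moves yields an ambient isotopy $K_1 \simeq K_2$.

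The reverse (``only if'') direction is the substantive one, and I would prove it by general position and singularity theory. Given an ambient isotopy $H_s\colon \mathds{R}^4 \to \mathds{R}^4$ with $H_0=\mathrm{id}$ and $H_1\circ K_1 = K_2$, set $K_s \ass H_s\circ K_1$ and compose with the fixed projection $\pi$ to obtain a one-parameter family of maps $f_s \ass \pi\circ K_s\colon \Sigma\to\mathds{R}^3$, remembering in addition the $t$-coordinate data that records the over/under breaking. For generic $s$ the map $f_s$ is a \emph{stable} map of the surface into $3$-space, namely an immersion with only normal crossings (double curves and triple points) together with isolated cross-caps (the branch points), so that $\pi\circ K_s$ is a legitimate broken surfaces diagram of $K_s$. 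The key input is Mather's theory of stable and generic maps together with the multijet transversality theorem: the relevant space of maps is stratified, its open codimension-$0$ stratum consisting of stable maps, and I would perturb the path $\{f_s\}$ rel endpoints so that it is transverse to this stratification, meeting only strata of codimension $\le 1$ and crossing each codimension-$1$ wall transversally at finitely many parameter values.

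It then remains to analyze, locally, the transition occurring at each codimension-$1$ wall. Classifying the codimension-$1$ multigerms of maps $\Sigma^2\to\mathds{R}^3$ — the ways in which a generic one-parameter family can fail to be stable — yields a finite list of elementary transitions: births, deaths, and tangential passages of double curves, passage through a triple point, and the branch-point (cross-cap) transitions, each moreover decorated in finitely many ways by the $t$-height (over/under) data. The heart of the argument, and the step I expect to be the main obstacle, is to verify that this classification is complete and that every codimension-$1$ transition so obtained is realized by exactly one of the Roseman moves of Figure~\ref{F:Roseman1}; this uses finite determinacy to reduce each wall to a polynomial normal form followed by a case-by-case identification. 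Finally, since different choices of projection direction $v$ — equivalently, different broken surfaces diagrams of the same $K_i$ — are connected through a generic path of projections whose own codimension-$1$ degenerations are again Roseman moves, the conclusion is independent of these auxiliary choices, completing the proof.
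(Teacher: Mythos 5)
This theorem is not proved in the paper at all: it is quoted as background, with the proof deferred to the cited literature (Homma--Nagase, Carter--Saito, and Roseman). Your outline --- realizing each Roseman move as a compactly supported ambient isotopy for the easy direction, and for the converse making the one-parameter family of projections generic via (multi)jet transversality and then classifying the codimension-one transitions so that each is matched with a move in the finite list --- is precisely the strategy of those cited proofs, so it is essentially the same approach as the paper's source; the one caveat is that the completeness of the codimension-one classification, which you rightly flag as the main obstacle, is the actual content of the theorem and remains asserted rather than carried out in your sketch.
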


\begin{figure}[htb]
\centering
\includegraphics[width=0.95\textwidth]{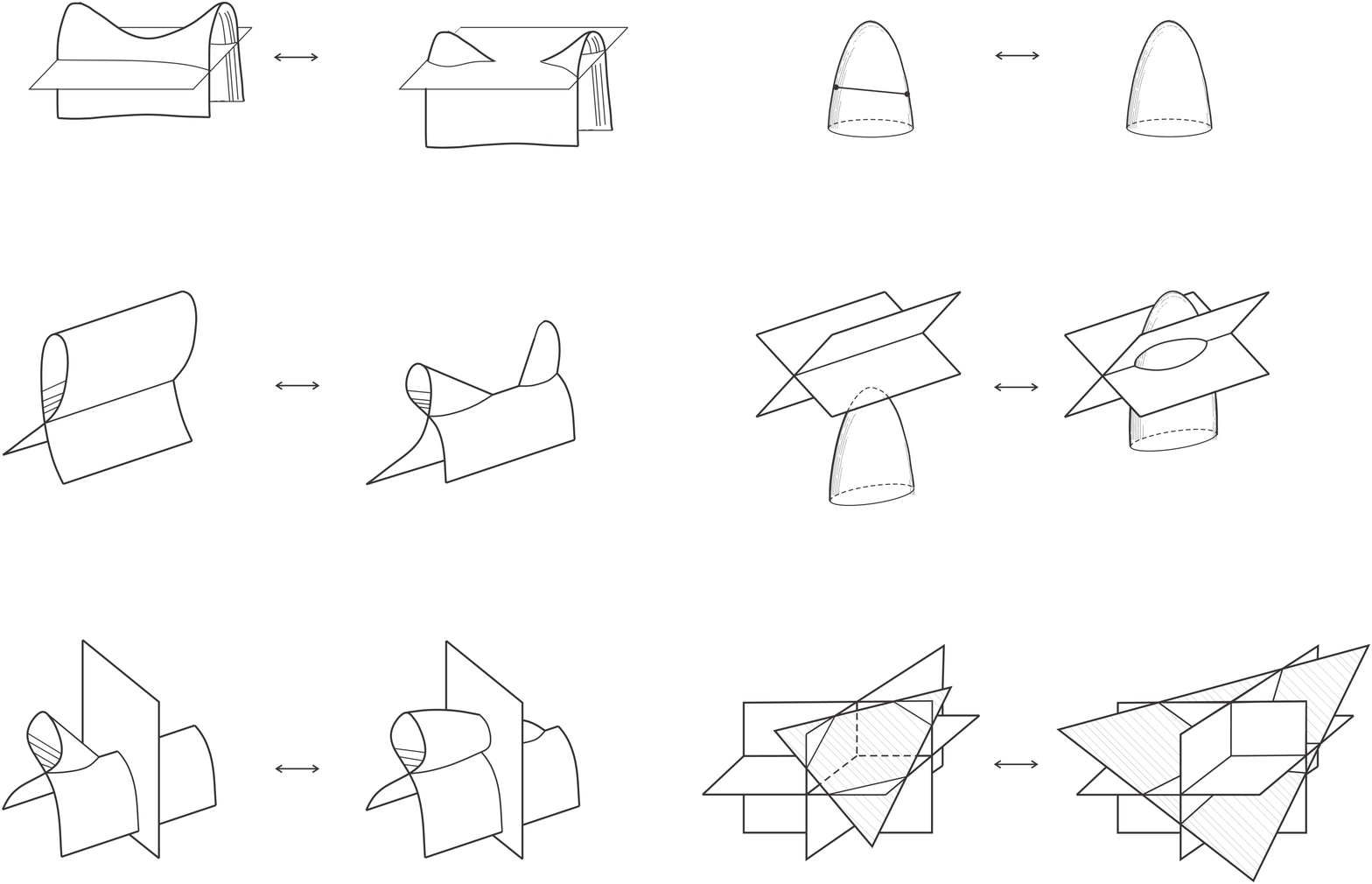}
\caption{\label{F:Roseman1}The Roseman moves. These are valid for all ways of breaking the above surfaces.}
\end{figure}

To illustrate our diagrammatic language, Figure~\ref{F:WKnot} presents some diagrams of a ribbon torus knot, the set of which is conjecturally in bijective correspondence with the set of w-knots.

\begin{figure}
\centering
\includegraphics[width=0.5\textwidth]{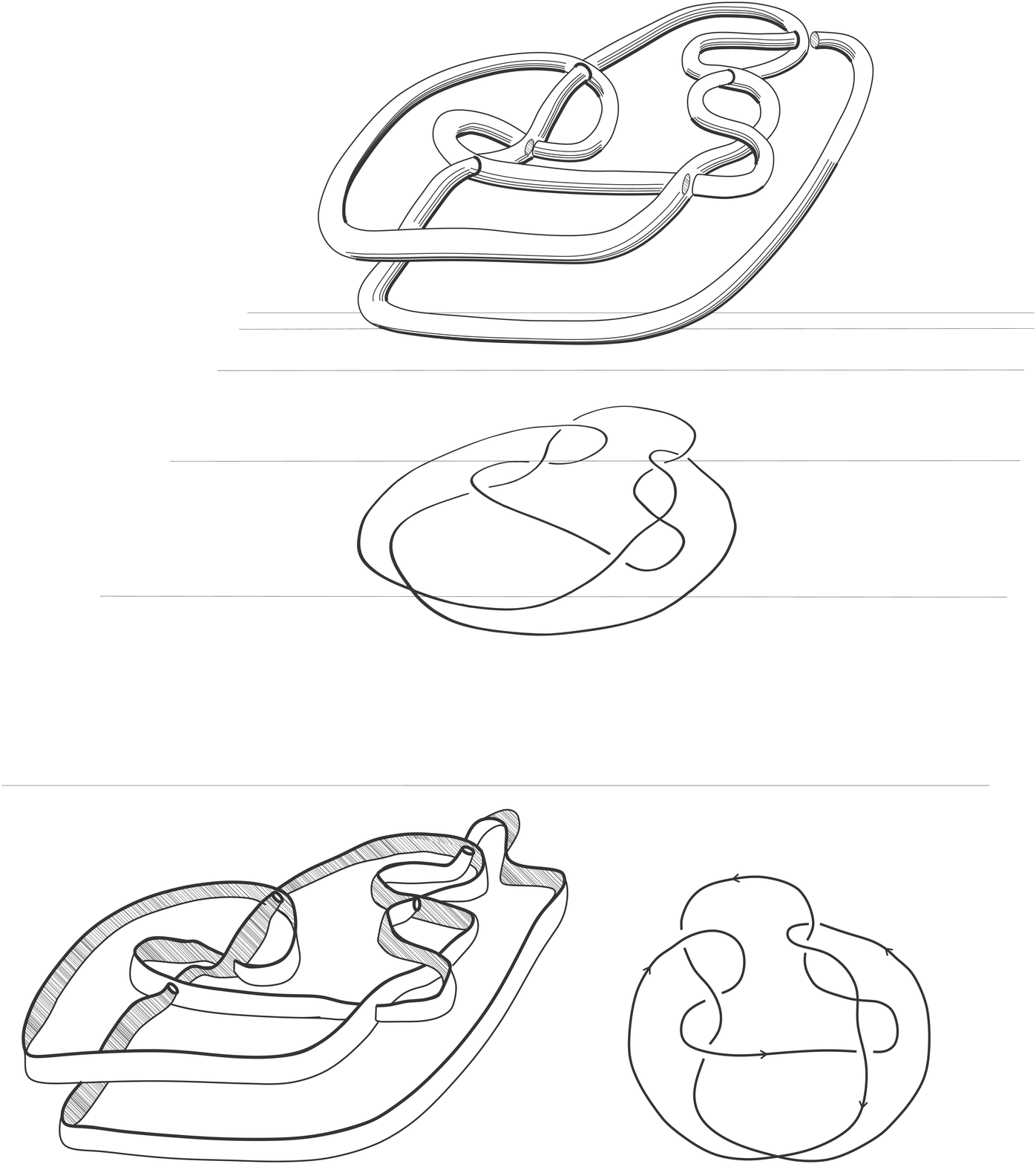}
\caption{\label{F:WKnot} A Reidemeister diagram, a Roseman diagram, and a Rosemeister diagram of a ribbon torus knot. See Section~\ref{SS:Rosemeister} for the definition of the latter, in the case of sphere-and-interval tangles.}
\end{figure}

\section{Sphere-and-interval tangles}\label{SS:SphereInterval}

In this section, we exhibit the topological nature of machines. Topology is well~suited to describe information-preserving modifications; this section provides the conceptual underpinning for why we expect machines to be an effective tool to simplify (or to complicate) descriptions of information~transfer between interacting processes while preserving the information content that we are interested in.

We exhibit a topological `lift' of our diagrammatic notation to networks of spheres and intervals tangled together in $4$--space. These are drawn via their Roseman diagrams. It is here that we prove a Reidemeister Theorem for machines, Theorem~\ref{T:MachineReidemeisterTheorem}. Finally, in Section~\ref{SS:Rosemeister}, we define a compromise between Reidemeister diagrams and Roseman diagrams which we call \emph{Rosemeister diagrams}. Rosemeister diagrams also exist for $w$--knots, and we explain the relationship between machines and $w$--knots in Section~\ref{SSS:wknotsrel}.

In this section, we reinterpret the colour-suppressed Reidemeister diagram of a machine as a planar projection of network of spheres and intervals, knotted in $4$--space. This construction reveals a colour-suppressed Reidemeister diagram of a machine as arising from a projection to a plane $P$ of a tangled system of $2$-spheres $S^2$ and intervals, equipped with colouring information. Colour-suppressed machines thus correspond to topological objects.

Our construction is similar to the `balloons and hoops' construction of Bar-Natan \citep{BarNatan:13}, although different knotted objects are being described.

In this section, the words `up' and `down' are to be interpreted with respect to the right-hand convention.

\subsection{Constructing sphere-and-interval tangles representing machines}\label{SS:sandsquandle}

Recall the Roseman diagrams of Section \ref{SSS:BrokenSurfaces}.

Begin by constructing a local model for a single interaction, consisting of a single over-strand $A$ with $k$ strands passing up through it and $l$ strands passing down through it. Consider a $2$--sphere in Euclidean $\mathds{R}^4$:

\begin{equation}
S\ass \left\{\left.\rule{0pt}{11pt}(x,y,z,0)\in \mathds{R^4}\right|\, \sqrt{x^2+y^2+z^2}=1\right\}.
\end{equation}

Orient $S$ according to the right-hand convention, \textit{i.e.} so that the intersection of $S$ with the $XY$--plane is oriented counterclockwise. The sphere $S$ represents the over-strand $A$.

\begin{rem}
If we want to be rigourous, then a different embedding of the sphere is to be preferred. We choose:
\begin{equation}\label{E:etrog}
S\ass \left\{\left.\rule{0pt}{11pt}(\sigma(z)x,\sigma(z)y,z,0)\in \mathds{R^4}\right|\, -1\leq x\leq 1;\ \sqrt{y^2+z^2}=1\right\}.
\end{equation}
\noindent where $\sigma\colon\, [-1,1]\to [0,1]$ is a modified logistic function $\frac{1}{2}+\frac{1}{2}\tanh\left(\tan(\frac{\pi}{2}z)\right)$ for $x\in (-1,1)$ and with $\sigma(\pm 1)\ass 0$. This is because we need to define a sphere-and-interval tangle to be a stratified space in order for smooth ambient isotopy of such objects to be well-defined \citep{GoreskyMacPherson:88}.

For ease of exposition we'll pretend that $S$ is parameterized as a sphere, but it's actually parameterized as Equation \ref{E:etrog}.
\end{rem}

Consider now parameterized intervals $l_j^t$ with $t\in[-2,2]$ so that:
\begin{equation}
l_j^t \ass \left\{
             \begin{array}{ll}
               (\frac{j+1}{l+k+2},t,0,1), & \hbox{for $0<j\leq k$;} \\
               (\frac{j+1}{l+k+2},-t,0,1)\rule{0pt}{12pt}, & \hbox{for $k<j\leq l+k$.}
             \end{array}
           \right.
\end{equation}
Thus, an under-strand passing ``up'' through $A$ corresponds to an interval passing up through $S$, and vice versa. Finally, adjoin two parameterized intervals $l_A^+\ass (0,0,1+t,0)$ and  $l_A^-\ass (0,0,-2+t,0)$ of length $1$. The figure which we have constructed, which we have drawn in Figure~\ref{F:kebaby}, lies inside a $4$--dimensional $4\times 4\times 4\times 4$ cube $B$.
\begin{figure}[htb]
\begin{minipage}[t]{1\linewidth}
\psfrag{s}[c]{\small $S$}
\psfrag{a}[c]{\small $l_A^+$}
\psfrag{b}[c]{\small $l_A^-$}
\centering
    \includegraphics[width=0.3\textwidth]{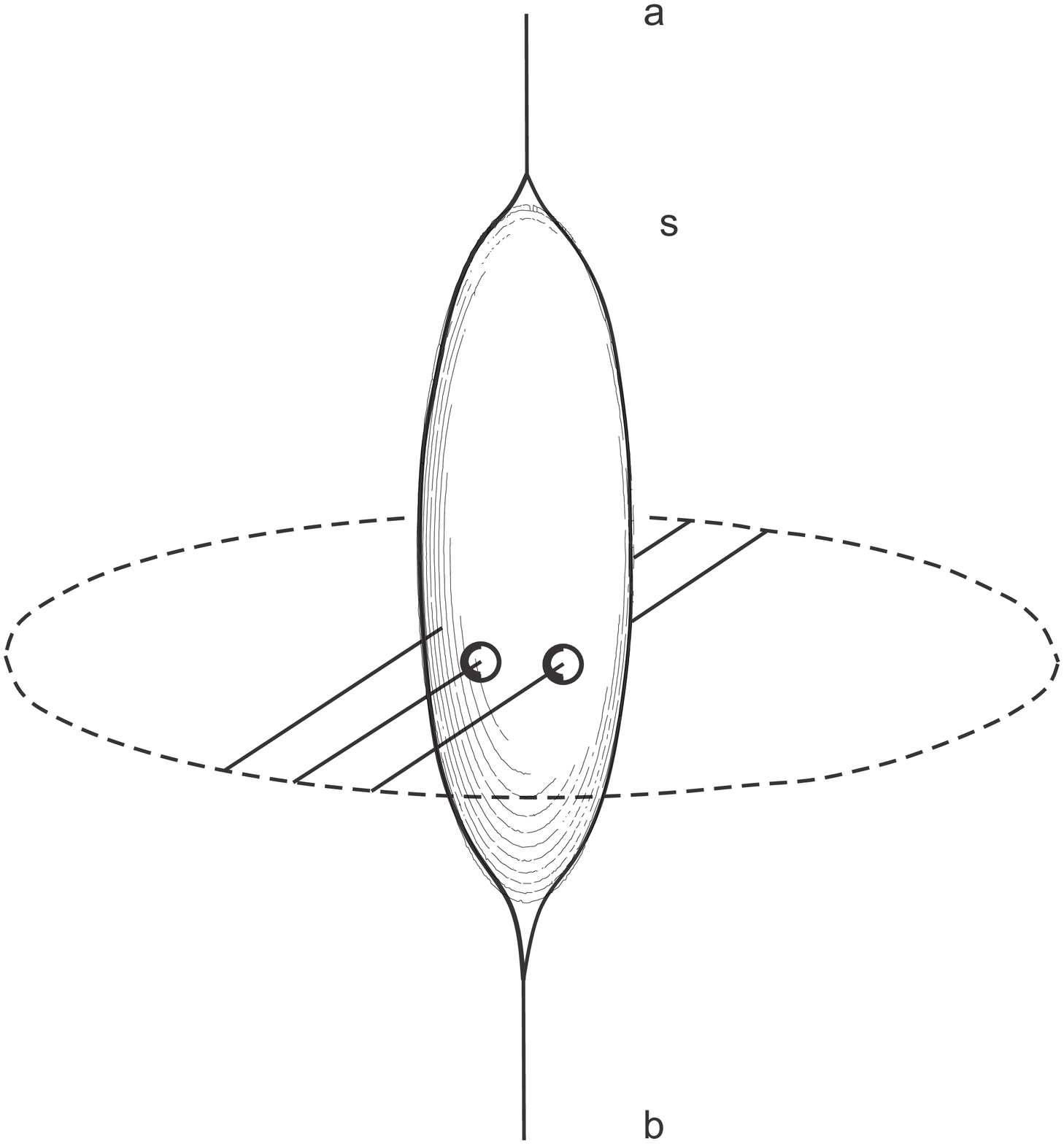}
\end{minipage}
\caption{\label{F:kebaby} A sphere-and-interval tangle corresponding to a single interaction.}
\end{figure}

The next step is to concatenate. At this point, the $4$--dimensional figure that we have construction, which consists of spheres $S_1,S_2,\ldots, S_N$ and of intervals, lies inside a collection of $4\times 4\times 4\times 4$ cubes $B_1,B_2,\ldots,B_N$. We index these so that $S_i$ lies inside $B_i$ for all $1\leq i\leq N$., and embed the cubes disjointedly in $\mathds{R}^4$. Concatenate by connecting endpoints of intervals on the boundaries of the cubes (these are endpoints of $l_j$ intervals and of $l_A$ intervals) to one another, corresponding to how the registers which represent them connect with one another in $M$. The embedding should be chosen so that the concatenation of two smooth embedded intervals is again a smooth embedded interval. Line segments added for the purpose of concatenation should lie entirely outside $B_1,B_2,\ldots,B_k$, and should not intersect.

Finally, for each intersection $p$ of one of the intervals $l_j$ or $l_A^\pm$ with the boundary $\partial B$ of a cube $B$, endpoints of $l_j$ intervals or of $l_A$ intervals which have not been used for concatenation embed a ray into $\mathds R^4$ so that its endpoint maps to $p$ and its open end diverges to $\infty$, requiring again that it not intersect any of the other geometric objects which we have placed. These rays correspond to endpoints of the machine $M$.

We have obtained a geometric figure in $4$--dimensional space, which we call a \emph{sphere-and-interval tangle}. See Figure~\ref{F:tangle_4dnew}.

\begin{figure}[htb]
\begin{minipage}[t]{1\linewidth}
\centering
    \includegraphics[width=0.7\textwidth]{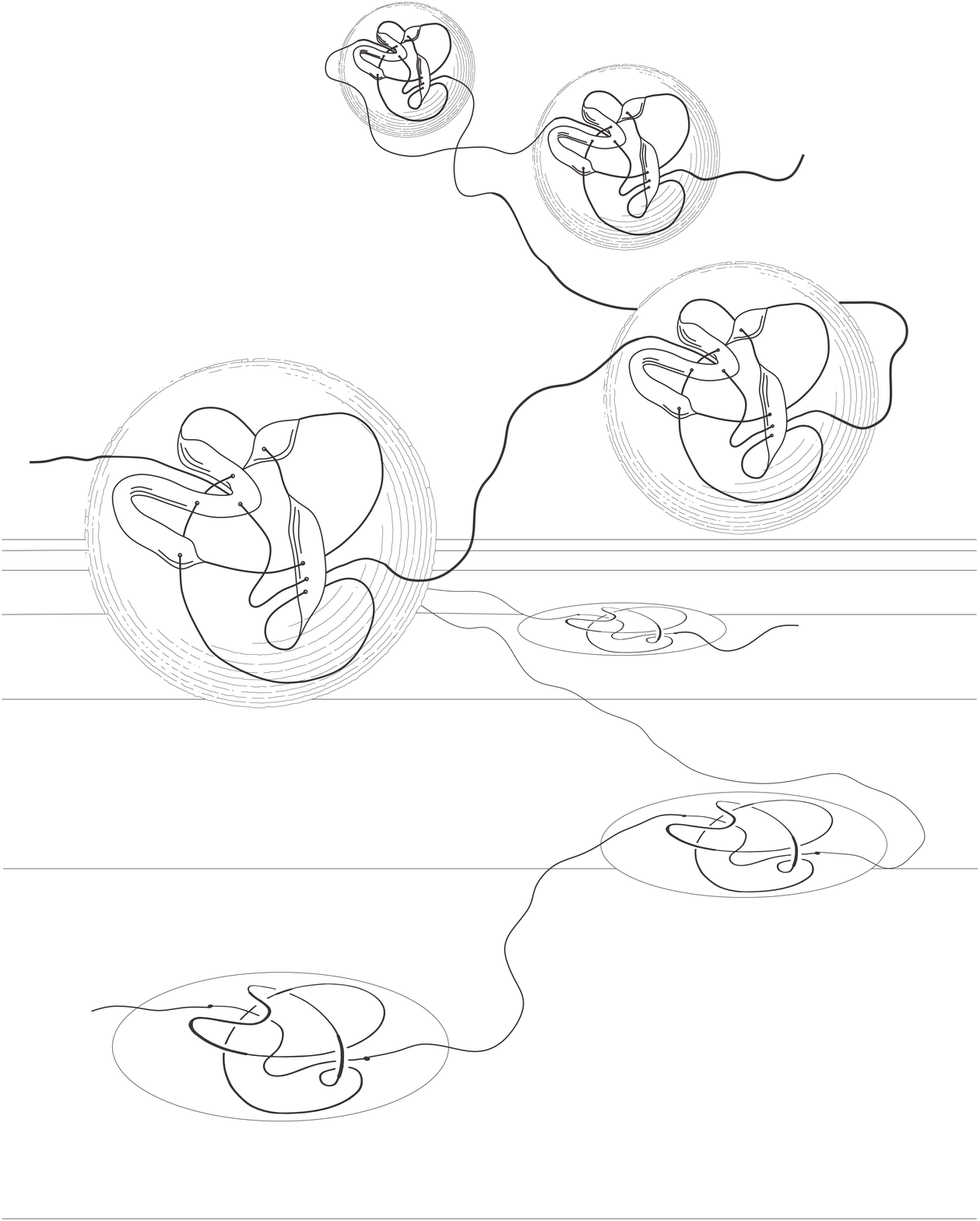}
\end{minipage}
\caption{\label{F:tangle_4dnew}The sphere-and-interval tangle above is represented by a tangle machine, appearing as the `shadow' on the plane underneath, together with a collection of decomposing spheres (discussed later).}
\end{figure}

The above discussion has associated a sphere-and-interval tangle to each colour-suppressed Reidemeister diagram. To translate back from sphere-and-interval tangles to colour-suppressed Reidemeister diagrams, represent each sphere as an over-strand and each line through it as an under-strand, and concatenate as required.

Figure~\ref{F:FlyingRings} illustrates the same sphere-and-interval tangle in various dimensions. In the top representation, that is $4$--dimensional, each ring represents a $3$--dimensional slice, with the $3$--dimensional slices being $1$ `time' unit apart. One of the strands is colour-coded red, another black, and the third blue. The coloured ring represents time zero, and $t-a$ represents time $a$, with the colour of the characters representing the colour of the strand. Thus, a blue $t-7$ and a black $t-5$ indicates that the black ring passes at time $5$, and the blue at time $7$, after it (hence they don't collide). Thus blue passes through black ``north to south'', as shown in the $3$--dimensional representation. The same interpretation holds also for the left `crossing'.

\begin{figure}[htb]
\psfrag{t5}[c]{\small $t-5$}
\psfrag{t7}[c]{\color{blue} \small $t-7$}
\psfrag{t}[c]{\color{blue} \small $t-11$}
\psfrag{t12}[c]{\small $t-12$}
 \psfrag{t13}[c]{\color{red} \small $t-13$}
\centering
\includegraphics[width=0.95\textwidth]{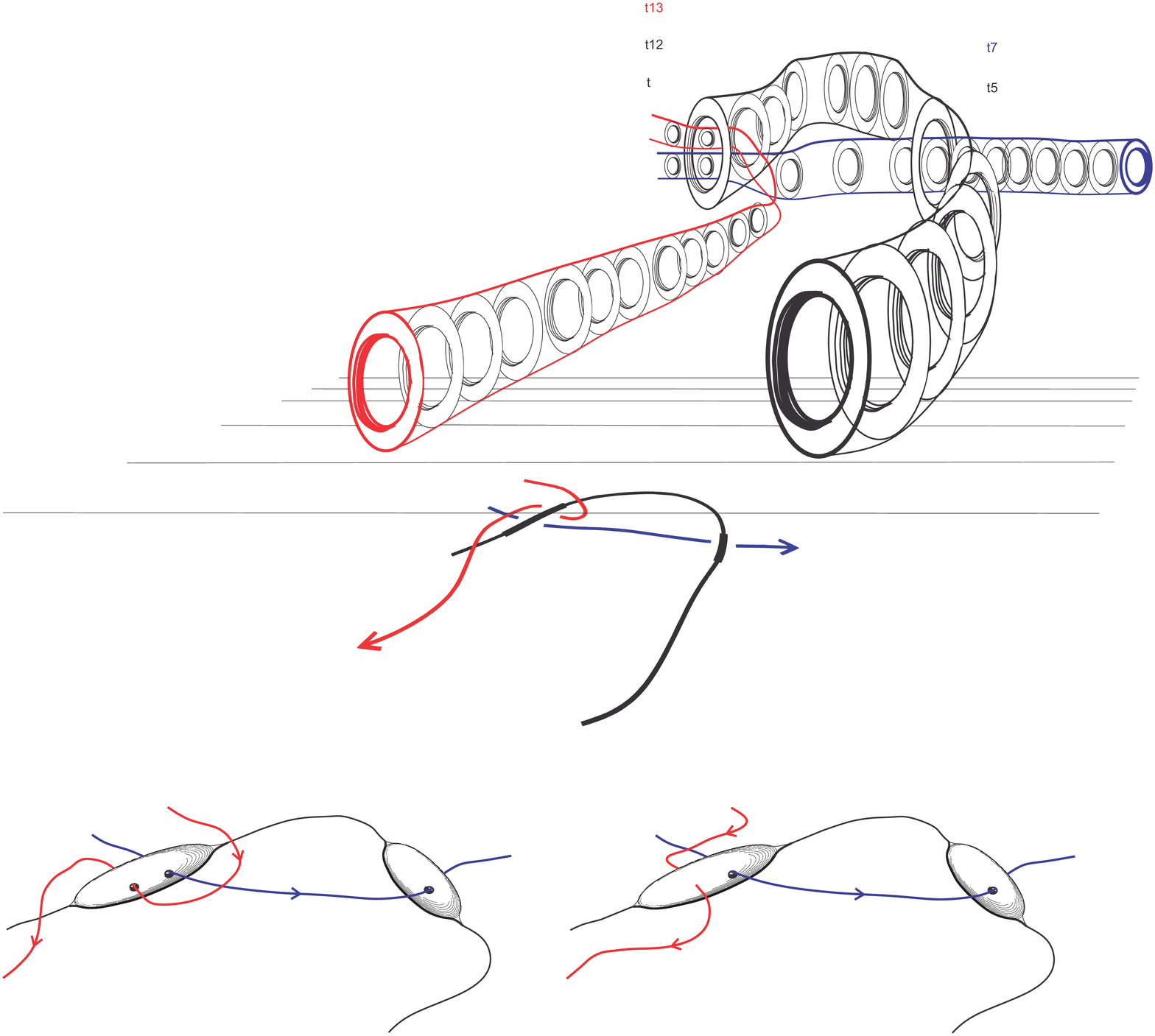}
\caption{\label{F:FlyingRings}``Flying rings'' representation of sphere-and-interval tangles.}
\end{figure}

The above discussion ignores framing, and is therefore applicable to quandle machines. In the rack case, a rack machine represents a projection $K\colon\, M\times[0,1]\to \mathds{R}^4$. Such a figure is called a \emph{framed sphere-and-string tangle}. We think of $K(M\times\{0\})$ as `\emph{the tangle}', and we call it simply `$K$', and of $K(M\times\{1\})$ as `\emph{the framing curve}'.

\subsection{Equivalence of sphere-and-interval tangles}\label{SS:sansequivalence}

We now pass to low dimensional topology, by defining two sphere-and-interval tangles to be \emph{equivalent} if they are related by smooth ambient isotopy.

\begin{defn}[Equivalence and stable equivalence of sphere-and-interval tangles]
Two sphere-and-inverse tangles $T_1$ and $T_2$ are \emph{equivalent} if there exists a smooth homeomorphism $h\colon\, \mathds{R}^4\times [0,1]\to \mathds{R}^4$ with $h(T_1\times \{0\})=T_1$, and $h(T_1\times \{t\})$ a sphere-and-interval tangle for all $t\in[0,1]$, and $h(T_1\times \{1\})=T_2$. If we additionally require stabilization (replacing a segment of an interval by a trivially embedded sphere which doesn't link with anything) and its inverse, then $T_1$ and $T_2$ would be said to be \emph{stably~equivalent}.
\end{defn}

The Reidemeister moves correspond to the compact supported ambient isotopies in Figure~\ref{F:moves4d2}.

\begin{figure}[htb]
\psfrag{r}[c]{\small\emph{R1}}\psfrag{s}[c]{\small\emph{R2}}\psfrag{t}[c]{\small\emph{R3}}
\centering
\includegraphics[width=0.95\textwidth]{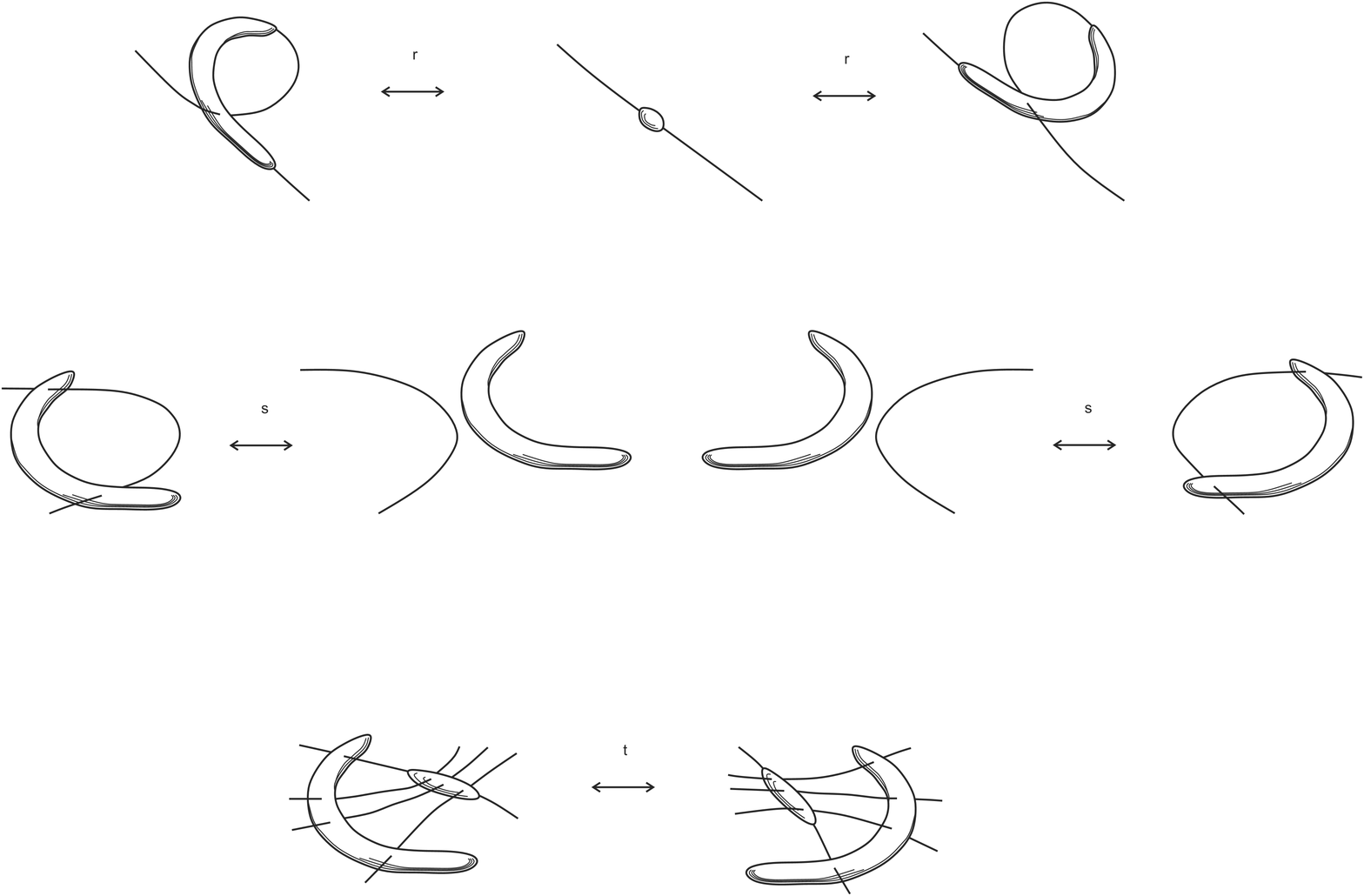}
\caption{\label{F:moves4d2} Ambient isotopies on sphere-and-interval tangles corresponding to Reidemeister moves on their diagrams.}
\end{figure}

\begin{thm}[Reidemeister Theorem for Machines]\label{T:MachineReidemeisterTheorem}
Two machines are (stably) equivalent if and only if any two sphere-and-interval tangles which they represent are (stably) equivalent.
\end{thm}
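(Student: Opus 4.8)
The plan is to prove both directions of the equivalence by showing that the two notions of ``move'' — Reidemeister moves on diagrams and smooth ambient isotopies on sphere-and-interval tangles — generate the same equivalence relation, and that the passage between diagrams and tangles respects this. The essential content is a dictionary: the construction of Section~\ref{SS:sandsquandle} sends each colour-suppressed Reidemeister diagram to a sphere-and-interval tangle, and the inverse operation (read each sphere as an over-strand, each piercing interval as an under-strand) recovers a diagram. The proof rests on the Roseman Theorem already quoted, specialized to the restricted class of surfaces we use (spheres and intervals, never general closed surfaces), which tells us that ambient isotopy of the tangles is detected on diagrams by Roseman moves.

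First I would establish the easy direction, that equivalent machines yield equivalent tangles. For this it suffices to check that each generating Reidemeister move (R1 for quandle machines, R2, R3, and the virtual/stabilization moves of Figure~\ref{F:local_moves_machines}) is realized by a compactly supported smooth ambient isotopy of the associated tangles; these isotopies are exactly the ones displayed in Figure~\ref{F:moves4d2}. Because an equivalence of machines is by definition a finite composite of such moves (together with an automorphism of $Q$, which is invisible once colours are suppressed), composing the corresponding isotopies gives an ambient isotopy of tangles. The stable case is handled by observing that a machine stabilization — introducing a strand $x \to x$ outside the image of $\phi$ — corresponds precisely to splicing in a trivially embedded, unlinked sphere, which is the stabilization permitted for tangles.

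The harder direction is the converse: that an ambient isotopy between the tangles forces a Reidemeister equivalence of the machines. Here the strategy is to compose the tangle isotopy with a generic projection to the plane $P$ and to invoke the Roseman Theorem for the surfaces involved. A generic ambient isotopy projects to a one-parameter family of diagrams whose combinatorial type changes only at finitely many times, and at each such time the change is a single Roseman move (or a generic planar isotopy/Reidemeister-type move arising from the projection of the intervals). The crux is then a case analysis translating each Roseman move on our restricted surfaces into a move in Figure~\ref{F:local_moves_machines}: since our surfaces are spheres carrying only the piercing data of intervals, most of the Roseman moves either cannot occur or descend to the virtual moves VR1–VR3 and the genuine moves R1–R3 on the shadow diagram, and the moves involving triple points and branch points correspond to R3 and R1 respectively. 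One must also track the intervals separately, as their planar crossings and passages through spheres generate the remaining (virtual and under-strand) moves.

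I expect the main obstacle to be exactly this dictionary between Roseman moves and machine moves — in particular, verifying that the restricted surface-and-interval setting produces \emph{no} moves outside the generating list, and that branch-point and triple-point Roseman moves descend correctly. A subtle point is the careful bookkeeping required by the genuine embedding of Equation~\ref{E:etrog} rather than the round sphere, needed so that the objects are stratified spaces and smooth ambient isotopy is well-defined; one must check that this does not introduce spurious singularities in generic projections. A secondary subtlety is the colour compatibility: the theorem is stated for colour-suppressed diagrams, so one should confirm that a chosen ambient isotopy can always be realized compatibly with the colouring function $\rho$, i.e.\ that the colour data can be transported along the isotopy so that the compatibility conditions~\eqref{E:Compatability} are preserved at every intermediate stage. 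Once these verifications are in place, the two equivalence relations coincide and the theorem follows.
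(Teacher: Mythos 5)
Your first direction agrees with the paper: each move in Figure~\ref{F:local_moves_machines} is realized by the compactly supported ambient isotopies of Figure~\ref{F:moves4d2}, and stabilization corresponds to splicing in an unlinked, trivially embedded sphere. The gap is in the converse. You propose to invoke the Roseman Theorem ``specialized to the restricted class of surfaces we use,'' but the Roseman Theorem quoted in Section~\ref{SSS:BrokenSurfaces} is a statement about smooth embeddings of \emph{closed surfaces}, and a sphere-and-interval tangle is not a closed surface: it is a stratified space with $1$--dimensional strata (the under-strand intervals and the over-strand intervals $l_A^\pm$) and isolated non-manifold points where those intervals attach to the spheres. No ``specialization'' of the closed-surface theorem gives a complete list of local moves for generic isotopies of such a mixed-dimensional object; the codimension-one singularities involving both an interval and a sphere (tangencies in the projection, an attachment point passing across a fold, interval--interval crossings) lie outside its scope. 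Verifying that nothing occurs ``outside the generating list'' --- which you correctly identify as the main obstacle --- is exactly the statement needing proof, and ``tracking the intervals separately'' restates the problem rather than solving it.

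The paper closes this gap with a device your proposal is missing: it thickens each interval into a curve on the boundary of an embedded cylinder, so that the spheres and cylinders together form a \emph{foam}, and invokes Carter's theorem \citep{Carter:12} that any two Roseman diagrams of an embedded foam in $4$--space are related by a finite list of local moves in which an isolated critical point is pushed through a sheet (Carter's stratified argument, not the closed-surface Roseman Theorem, is what handles the non-manifold points). Deleting the auxiliary cylinders then reduces Carter's list to the moves of Figure~\ref{F:Roseman}, and a case analysis on whether the pushed critical point lies on a $1$-- or $2$--dimensional stratum yields R2, R3, and (in the unframed case) R1. As the paper remarks, one could instead rerun Roseman's argument directly in the stratified Morse theory framework of \citep{GoreskyMacPherson:88}; either route would repair your proof, but citing the closed-surface theorem cannot. (Your secondary worry about transporting colours is not where the difficulty lies: the paper's proof, like your proposal, works entirely with colour-suppressed diagrams, and colours propagate deterministically through any sequence of moves via Equation~\eqref{E:Compatability}.)
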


\begin{proof}
One direction of the theorem is clear--- each Reidemeister move realizes an ambient isotopy of a sphere-and-interval tangle.

To prove the converse, embed the sphere-and-interval tangle $T$ into a larger space which we call $\mathcal{X}$. An elements of $\mathcal{X}$ is a collection of jointly embedded $3$--balls $B_1,B_2,\ldots,B_k$ together with an embedded collection of closed intervals $I_1,I_2,\ldots, I_l$, each of which at its endpoints meets the boundaries of balls, and does not meet the balls anywhere else. We may also allow a collection of embedded rays, each of which has endpoint on the boundary of a ball and is otherwise disjoint from the rest of the picture. Consider $\accentset{\circ}{X}\ass X\setminus \bigcup_{i=1}^k\textrm{Int}(B_i)$, which is an embedded object in $\mathds{R}^4$. The explicit parametrization of a sphere in a sphere-and-interval diagram exhibits it as the boundary of a ball, and this we may think of a sphere-and-interval tangle as an $\accentset{\circ}{X}$ for an element $X$ of $\mathcal{X}$.

To obtain a sphere-and-interval tangle from an element $X\in\mathcal{X}$, work one ball $B$ at a time, with respect to a projection $\pi$ to a fixed but generic $3$--dimensional hyperplane $H$. For ease of exposition, pretend that $B$ is a cube $[x_1,x_2]\times[y_1,y_2]\times[z_1,z_2]$, and that $\pi (B)$ intersects other balls and intervals only at $\{x_1,x_2\}\times[y_1,y_2]\times[z_1,z_2]$. Such sloppiness is standard in $4$--dimensional topology--- \citep{Kirby:89} famously begins with the words ``\ldots the phrase ``corners can be smoothed'' has been a phrase that I have heard for 30 years, and this is not the place to explain it''). To control embedded elements inside $\pi (B)$, choose a stratified Morse function $f$ for $\accentset{\circ}{I}\cap B$ (\citep{GoreskyMacPherson:88}. By compactness, $\pi B$ contains images of a finite number of critical points of $f$. Inside a small neighbourhood, each critical point is of one of the forms below:

 \begin{equation}
\centering
\includegraphics[width=0.9\textwidth]{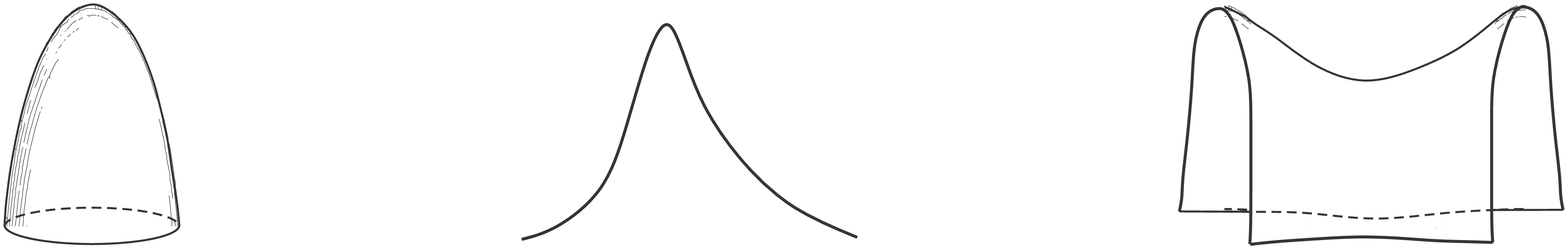}
\end{equation}

Choose a point $p\in B$ with $x$--coordinate $x\in[x_1,x_2]$. For sufficiently small $\epsilon>0$ there are no critical points of $f$ in $B^1\ass [x-\epsilon,x+\epsilon]\times[y_1,y_2]\times[z_1,z_2]\subseteq B$. As we shrink $[x_1,x_2]$ to $[x-\epsilon,x+\epsilon]$, the boundary of $\pi (B)$ will cross over critical points of the image of $f$. By induction and by general position, after shrinkage this ball contains only line segments between the planes $\{x-\epsilon\}\times[y_1,y_2]\times[z_1,z_2]$ and $\{x+\epsilon\}\times[y_1,y_2]\times[z_1,z_2]$ without critical points, and also $2$--dimensional components (parts of boundaries of other balls) without critical points. Next, cut out $\pi (B^1)$, scale it to a ball $B^2$ of radius $\epsilon$ around $p$, and connect endpoints and end-lines on $\pi (\partial B^1)$ to endpoints and end-lines on $\pi (\partial B^2)$ with straight lines and broken surfaces without critical points. For sufficiently small epsilon, there will be no $2$--dimensional components intersecting $\partial B^1$. The embedded element of $\mathcal{X}$ which we obtain is independent of the order by which we shrink the balls. Up to reparametrization this is a sphere-and-interval tangle.

It remains to prove that different choices of the point $p$ lead to sphere-and-interval diagrams which differ by Reidemeister moves, and that smooth ambient isotopy of an element $\mathcal{X}$ changes the resulting sphere-and-interval tangle by Reidemeister moves. The first fact is essentially a special case of the second, so we prove only the second.

Generically choose a $3$--dimensional hyperplane $H\subset\mathds{R}^4$, on which we draw a Roseman diagram of an element $X\in\mathcal{X}$. If this is a Roseman diagram of sphere-and-interval tangle, then we already know how to rewrite that diagram as a Reidemeister diagram. We would like to know how to do this for a general element of $X\in\mathcal{X}$. In order to piggy-back on the results of \citep{Carter:12}, embed each $1$-stratum (interval) in $X$ as a curve on the boundary of a cylinder:

 \begin{equation}
\centering
\includegraphics[width=0.5\textwidth]{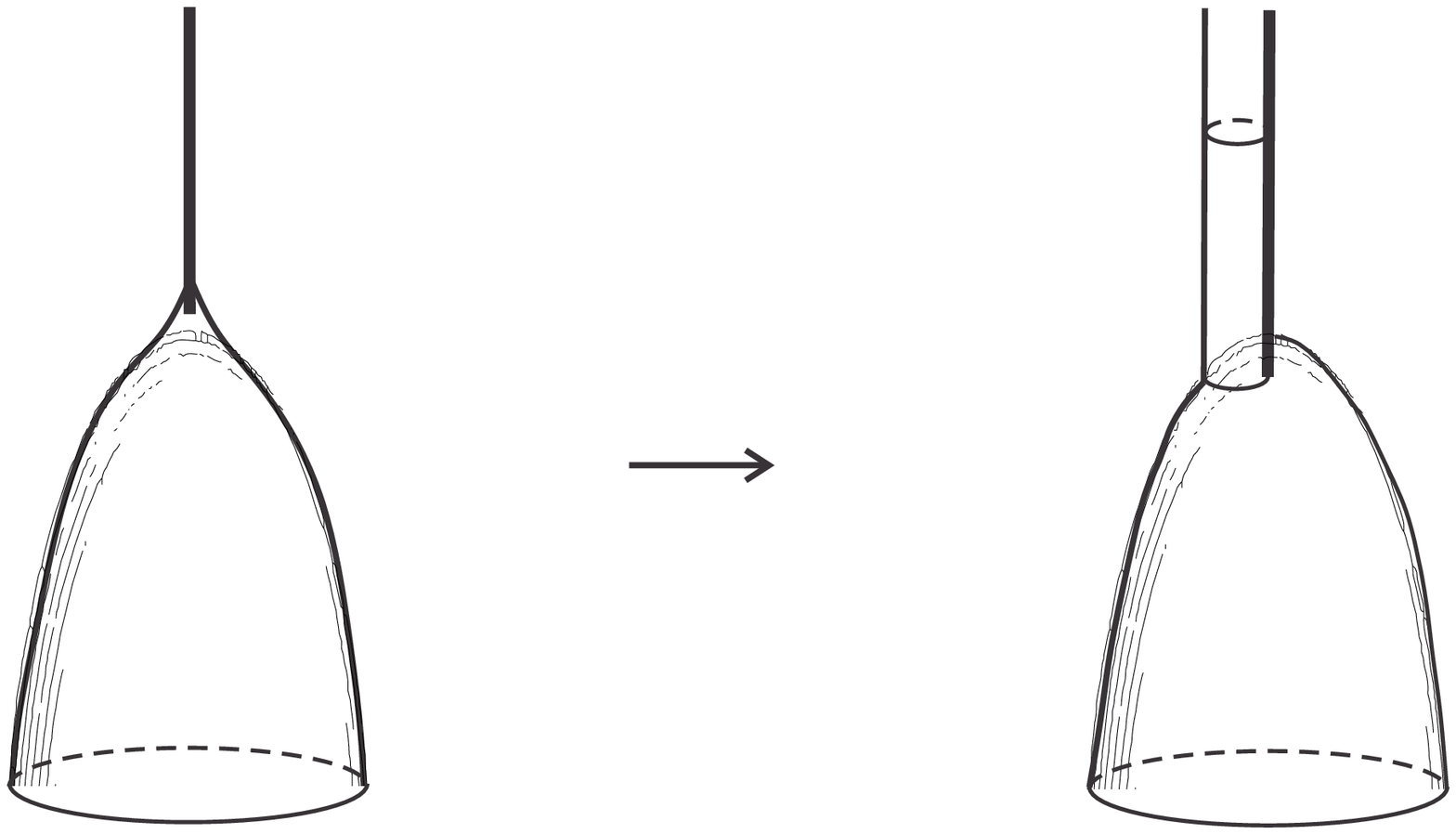}
\end{equation}

The spheres and cylinders now together constitute a \emph{foam}, and Carter proves that any two Roseman diagrams in $H$ of a foam representing $\accentset{\circ}{X}$ corresponding to different Morse functions are related by a finite sequence of local moves in which an isolated critical point of a Morse function is pushed through a plane in the diagram. Deleting the cylinder (which was a cosmetic construction of convenience) and leaving only the 1-stratum in its boundary reduces this collection of moves to those shown in Figure~\ref{F:Roseman}.

 \begin{figure}[htb]
\centering
\includegraphics[width=\textwidth]{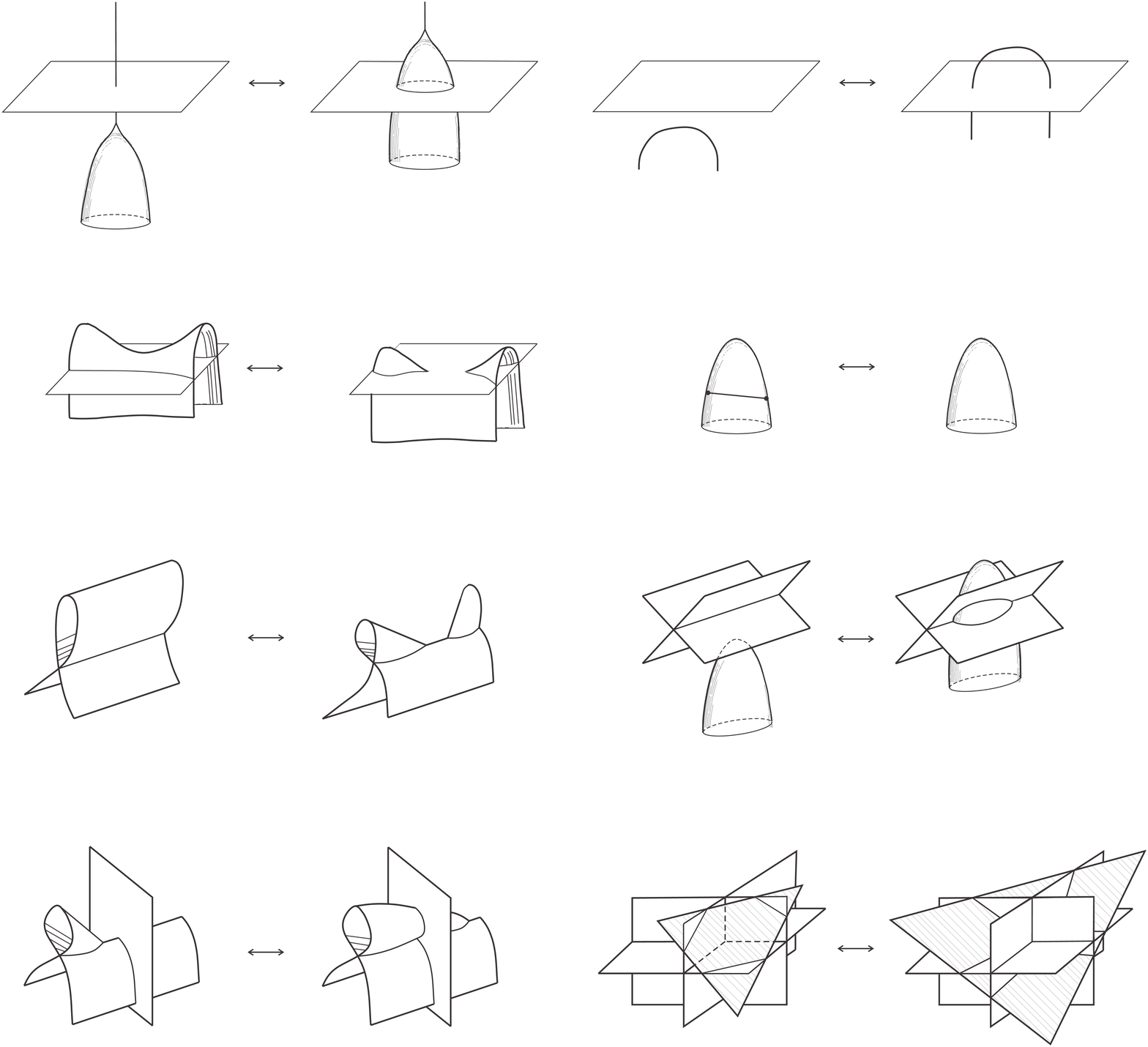}
\caption{\label{F:Roseman} Roseman moves for sphere-and-interval tangles.}
\end{figure}

Inside a $3$--ball, if the critical point is of a $1$--dimensional stratum and if $x\in[x_1,x_2]$ lies below it, then the local move results in a sphere-and-interval tangle whose Reidemeister diagram differs from the original by an R2~move.

 \begin{equation}
\centering
\includegraphics[width=0.6\textwidth]{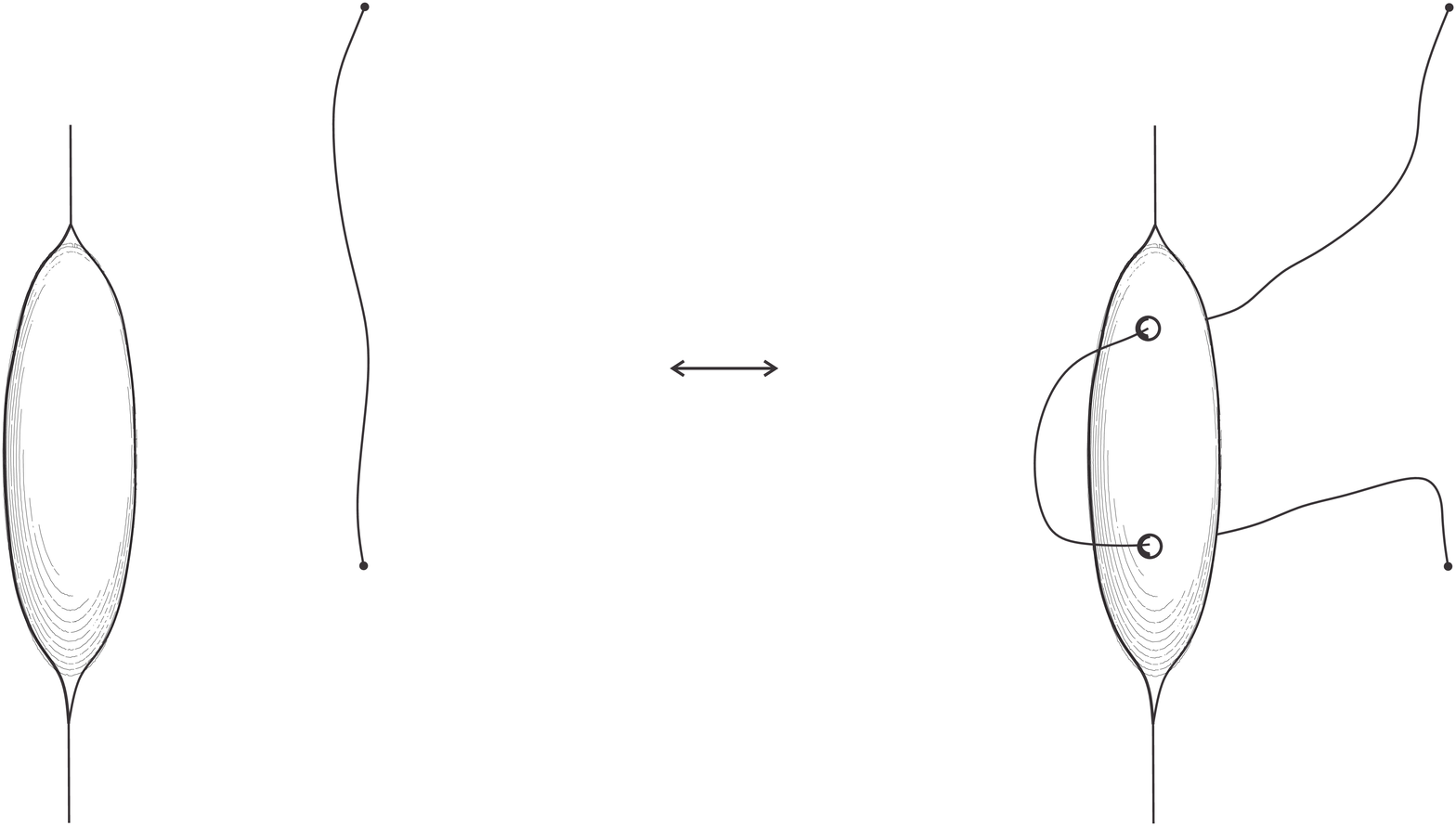}
\end{equation}

If the critical point is of a $2$--dimensional stratum and if $x\in[x_1,x_2]$ lies below it, then the local move results in a sphere-and-interval tangle whose Reidemeister diagram differs from the original by an R3~move.

\begin{equation}\label{E:PushMove}
\psfrag{r}[r]{\small\emph{R3}}
\centering
\includegraphics[width=0.8\textwidth]{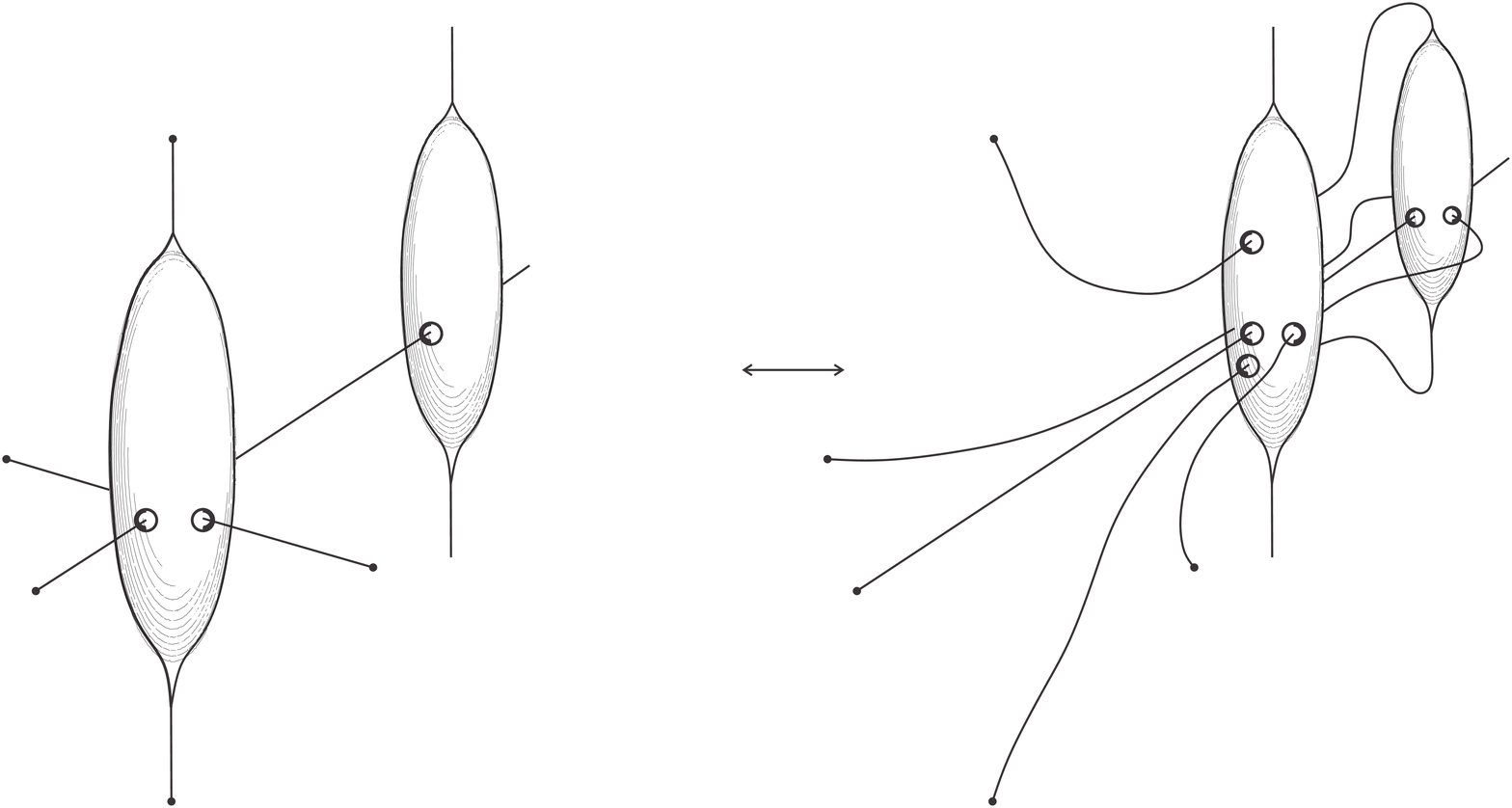}
\end{equation}

Finally, twisting (possible only in the unframed case) corresponds to an R1~move.

 \begin{equation}
\centering
\includegraphics[width=0.5\textwidth]{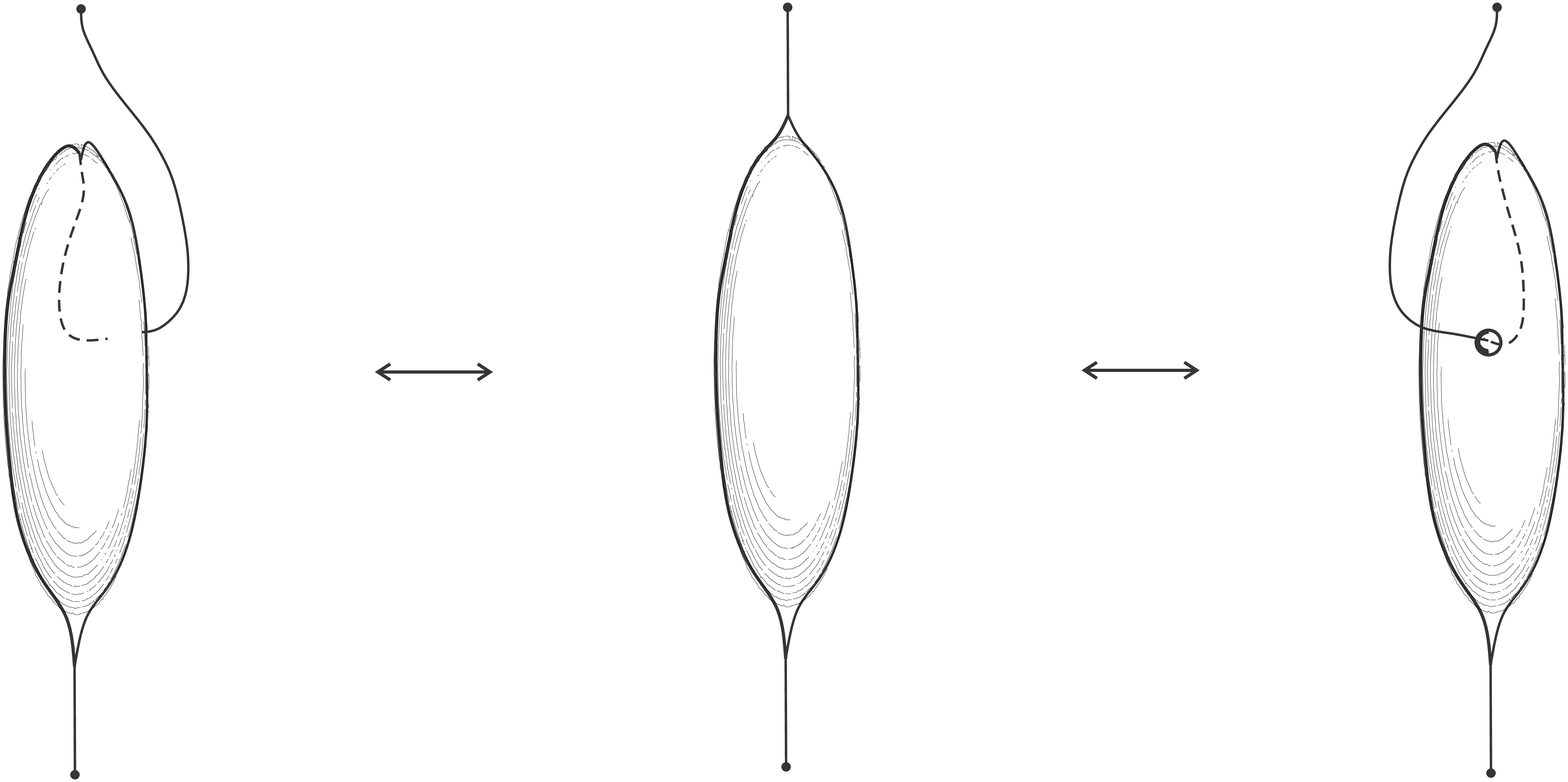}
\end{equation}

Finally, creation of over-strands corresponds to inflation of spheres.
\end{proof}

\begin{rem}
A more direct proof than to embed a sphere-and-interval tangle in a foam would have been to imitate Carter's and Roseman's arguments using stratified Morse theory. The non-manifold points in a sphere-and-interval tangle are isolated and are zero-dimensional, so the minor modifications to their proof which we would require are straightforward.
\end{rem}

\begin{rem}
Following Roseman \citep{Roseman:98}, we prove our Reidemeister Theorem in the smooth~category instead of in the piecewise-linear (PL) category in which Reidemeister proved his result \citep{Reidemeister:32}, because we don't know a combinatorial set of moves which generate PL ambient isotopy in the way that `triangle moves' generate ambient isotopy in dimension $3$ \citep{Graeub:50}.
\end{rem}

\subsection{Rosemeister diagrams}\label{SS:Rosemeister}

A Reidemeister diagram is a planar diagram of an essentially non-planar object. As a result, it contains `virtual' crossings with no topological meaning, and we must take into account a whole slew of virtual and semi-virtual moves, in addition to the usual Reidemeister moves, in order to account for these. A Roseman diagram is too general, as the Roseman moves are irrelevant for the study of machines, as the various singular points in and between spheres all disappear when we project down to a machine. In addition, a Roseman diagram cannot be coloured. We therefore propose a compromise, which we call a \emph{Rosemeister diagram}, which the authors think may be the best diagrammatic representation for tangle machines of all.

\begin{figure}[htb]
\centering
\includegraphics[width=4.5in]{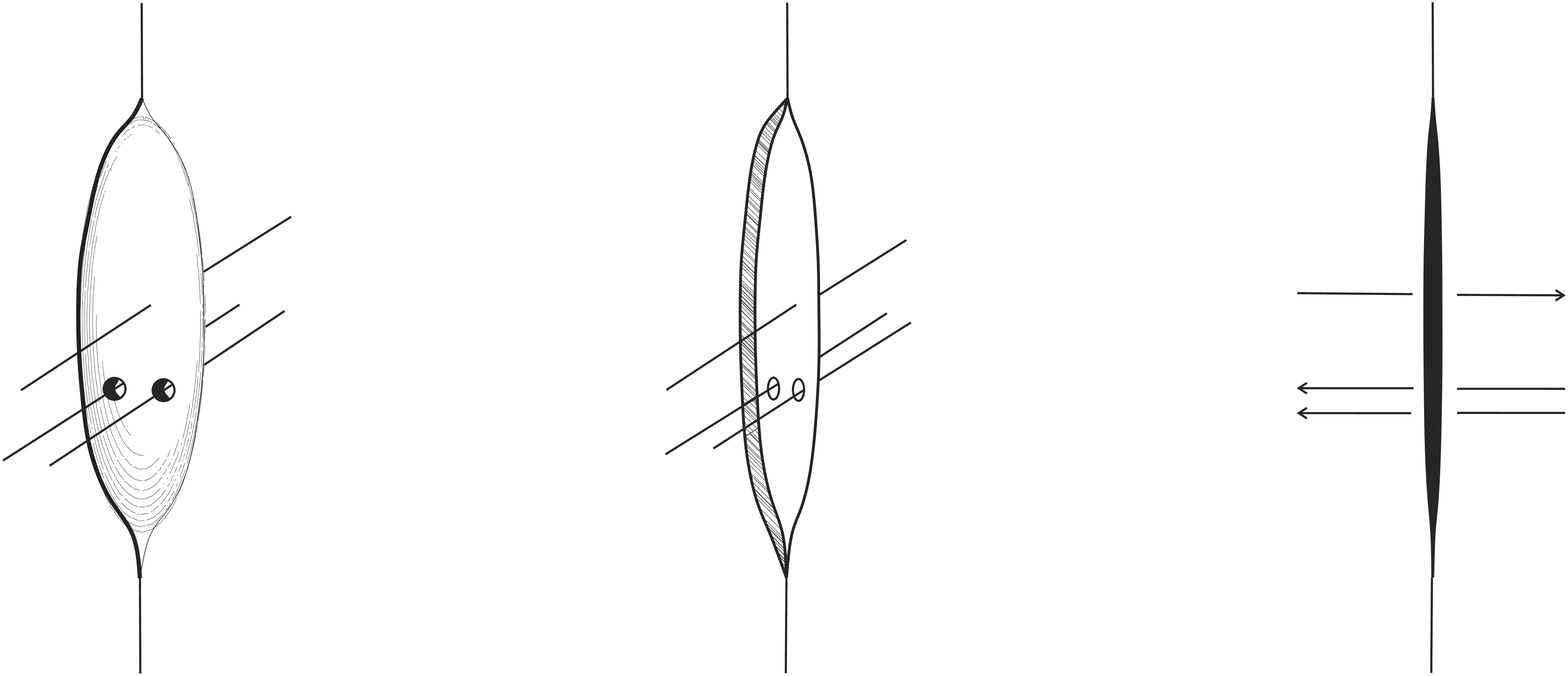}
\caption{\label{F:DimReduce}A Roseman diagram, a Rosemeister diagram, and a Reidemeister diagram of a single interaction.}
\end{figure}

In a Rosemeister diagram for a sphere-and-interval tangle $T$, we crush the spheres in Roseman diagrams for $T$ to discs. By eliminating their interiors, we do away with Roseman moves which we don't need, while keeping the advantage of a Roseman diagram, that is not requiring virtual crossings. Note that interval segments can pass right through interval segments in Rosemeister diagrams, just as in Roseman diagrams.

Interval sections can be coloured by elements of the rack $Q$ which colours the machine, with the colour changing as we pass through the discs, which inherit the colours of the registers which they represent. We could not have done this for Roseman diagrams, as we would not have known how to colour interval segments as they pass through the interior of spheres.

Reidemeister \textrm{I} for Rosemeister diagrams is related to R1 for Roseman diagrams by ambient isotopy. See Figure~\ref{F:R1Rose}.

\begin{figure}[htb]
\centering
\includegraphics[width=0.7\textwidth]{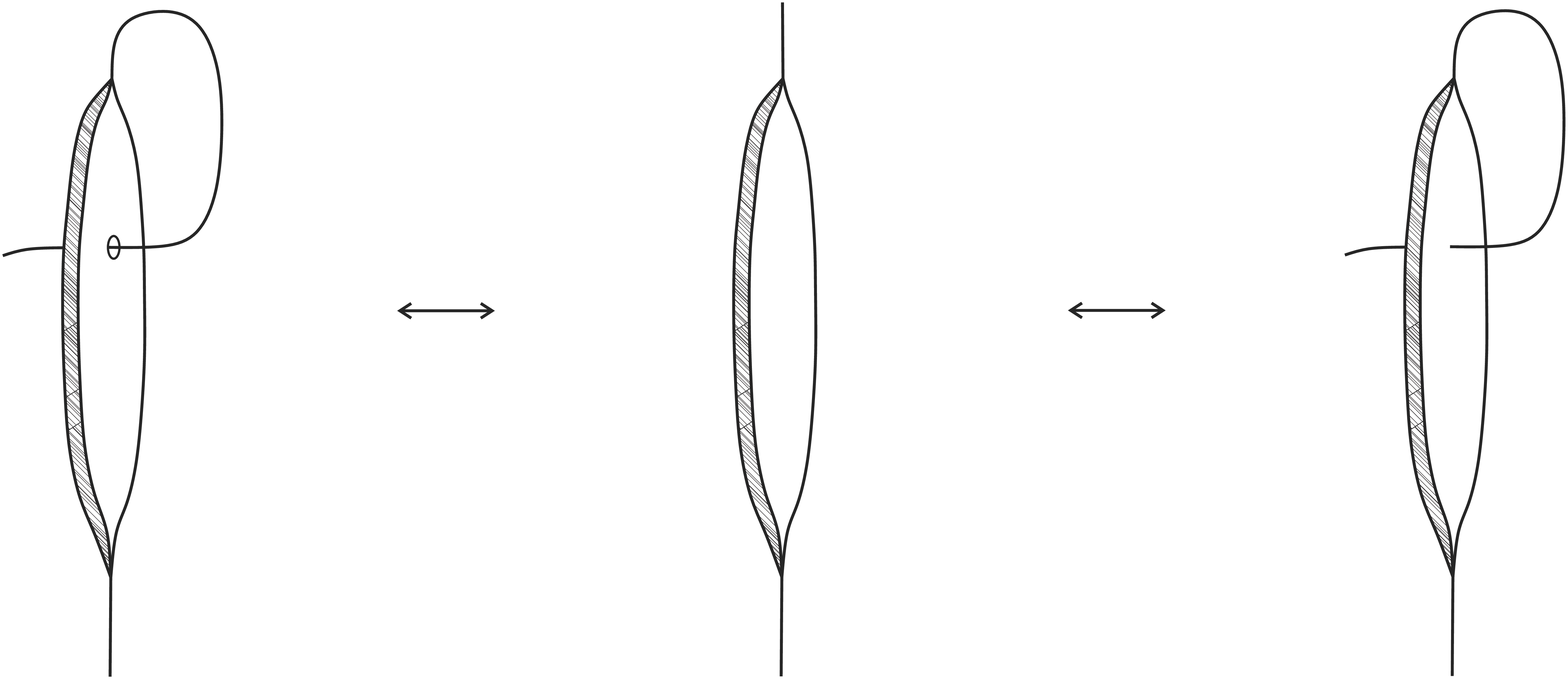}
\caption{\label{F:R1Rose} An R1 move for a Rosemeister diagram.}
\end{figure}

\subsection{Relationship with ribbon torus knots}\label{SSS:wknotsrel}

In the appendix to the previous paper, we discussed w-knots and ribbon torus knots. The diagrammatic calculus of w-knots is similar to the diagrammatic calculus of tangle machines, and indeed cutting up w-knots into \emph{w-knotted tangles} has been represented by a \emph{ball and hoop model} which is similar to our sphere-and-interval tangle \citep{BarNatan:13}. 

There is no well-defined map from a w-tangle to a sphere-and-interval tangle or vice versa. However, the space of \emph{equivalence classes} of w-tangles is a quotient of the space of \emph{stable equivalence classes} of tangle machines by false stabilization (Equation~\ref{E:FalseStabilization}).

\begin{thm}
The space of equivalence classes of w-tangles is isomorphic to the quotient of the space of stable equivalence classes of tangle machines by false stabilization.
\end{thm}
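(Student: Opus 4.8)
The plan is to exhibit a surjection $q$ from stable equivalence classes of machines onto equivalence classes of w-tangles, to show that $q$ identifies any two machines differing by false stabilization, and finally to show that this is the only identification $q$ makes, so that it descends to an isomorphism from the quotient. The conceptual key is that the kebab of Equation~\ref{E:kebab}, in which a single over-strand pierces $x_1,\ldots,x_k$ at one sphere, builds the \emph{overcrossings-commute} relation into the formalism: up to ambient isotopy the order in which under-strands pass through a fixed sphere is immaterial. Since w-tangles are exactly virtual tangles modulo overcrossings-commute, the colour-suppressed Reidemeister diagram of a machine---with its virtual crossings retained and each kebab read as a block of mutually commuting classical crossings---is literally a w-tangle diagram, and this defines $q$. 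By the Reidemeister Theorem for Machines (Theorem~\ref{T:MachineReidemeisterTheorem}) one may equivalently, and more cleanly, analyse $q$ topologically on sphere-and-interval tangles, which are precisely the balloons (spheres) and hoops (intervals) of Bar-Natan's model decorated with colours.

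First I would check that $q$ respects the machine moves: R2, R3, and VR1--VR3 go to the corresponding virtual and welded Reidemeister moves, and R1 (in the quandle case) goes to the w-tangle R1. Ordinary stabilization, which inserts an unlinked trivial sphere, corresponds under $q$ to an ordinary equivalence of w-tangles; this bookkeeping is exactly why the machine side must carry \emph{stable} equivalence while the w-tangle side carries ordinary equivalence. Surjectivity of $q$ is then immediate, since any w-tangle diagram admits a rack colouring---for instance the tautological colouring by its fundamental rack---and hence is the colour-suppressed diagram of some machine.

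Next I would show that $q$ factors through the quotient by false stabilization. False stabilization (Equation~\ref{E:FalseStabilization}) inserts or deletes a sphere at a register already lying in the image of $\phi$, that is a \emph{linked} rather than a trivial sphere; I would argue that this operation records no information about the underlying w-tangle, so that $q$ sends a false stabilization to an equivalence of w-tangles and therefore $q(M)=q(M')$ whenever $M$ and $M'$ differ by false stabilizations.

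The heart of the theorem, and the step I expect to be hardest, is injectivity of the descended map: if $q(M)=q(M')$ then $M$ and $M'$ must be related by machine moves together with false stabilizations. I would prove this by lifting a sequence of w-tangle moves realising the equality $q(M)=q(M')$ back to moves on sphere-and-interval tangles, using Theorem~\ref{T:MachineReidemeisterTheorem} to turn ambient isotopies into Reidemeister moves. Every welded relation lifts to a machine move except for the relation governing how the over-strand data at a register is distributed among spheres, and the genuine obstacle is that this passage between a machine's spheres and a w-tangle is not canonical---which is precisely why, as remarked before the theorem, there is no well-defined map on individual objects in either direction. I would resolve it by showing that any two machine realisations of the same w-tangle differ by a sequence of false stabilizations, so that the sole ambiguity collapsed by $q$ is exactly the relation modulo which we have quotiented; the overcrossings-commute structure of the kebab guarantees that permuting the piercings introduces no further relations. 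Finally, some care is needed to match the endpoint and ray data of the two tangle notions, and to track the framing distinction between the rack and quandle cases, in which the availability of R1 must correspond correctly across $q$.
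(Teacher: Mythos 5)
Your overall strategy --- reading the colour-suppressed Reidemeister diagram of a machine as a w-tangle diagram, checking that machine moves descend to welded moves, and arguing that the only data thereby forgotten is the grouping of crossings into kebabs, which true and false stabilization together erase --- is exactly the paper's argument; the paper itself offers only this sketch (``the difference \ldots lies in the over-strands; true and false stabilization combine to suppress over-strands, so that Reidemeister moves for tangle machines coincide, in the quotient, with Reidemeister moves for w-tangles''). Your observation that the kebab builds the overcrossings-commute relation into the formalism is the right conceptual key, and your matching of true stabilization with insertion of a trivial unlinked sphere is correct.

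There is, however, a genuine inconsistency in your treatment of colours, and it breaks your injectivity step as stated. Your surjectivity argument (colour an arbitrary w-tangle diagram by its fundamental rack) presupposes that $q$ is defined on \emph{coloured} machines and forgets the colouring. But then the descended map cannot be injective: two machines with the same underlying diagram carrying inequivalent colourings (say, a constant colouring versus one with distinct register colours in the Alexander quandle) map to the same w-tangle, yet no sequence of machine moves, stabilizations, or false stabilizations relates them, since none of these operations alter the colouring except by an automorphism of $Q$ --- invariants such as the initial and terminal colour sets separate them. Consequently your claim that ``any two machine realisations of the same w-tangle differ by a sequence of false stabilizations'' is false at the coloured level. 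The theorem must be read, as the paper implicitly does throughout Section~\ref{SS:SphereInterval}, at the level of colour-suppressed machines (equivalently, sphere-and-interval tangles): there surjectivity is immediate (give each crossing its own agent sphere), the fundamental-rack step is superfluous, and injectivity becomes precisely the statement that regroupings of crossings among agent spheres are generated by stabilization and false stabilization, which is the content of the paper's sketch and of your final paragraph once the colours are dropped.
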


We explain the above result. A \emph{w-tangle} is an algebraic object obtained as a concatenation of {\includegraphics[width=20pt]{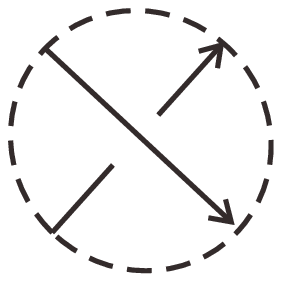}} and {\includegraphics[width=20pt]{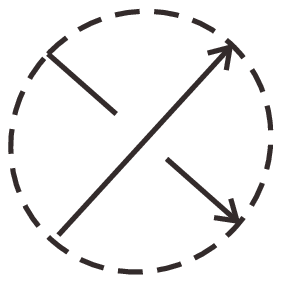}} in the plane.
Two w-tangles are \emph{equivalent} if they are related by a finite sequence of Reidemeister moves as shown in Figure~\ref{F:WTangleReidemeister}.
Thus, the difference between equivalence classes of w-tangles and of diagrams tangle machines lies in the over-strands. True and false stabilization combine to suppress over-strands, so that Reidemeister moves for tangle machines coincide, in the quotient, with Reidemeister moves for w-tangles.

\begin{figure}[htb]
\psfrag{T}[r]{\small \emph{VR1}}
\psfrag{R}[r]{\small \emph{VR2}}
\psfrag{S}[r]{\small \emph{VR3}}
\psfrag{Q}[r]{\small \emph{SV}}
\psfrag{A}[r]{\small \emph{R2}}
\psfrag{B}[r]{\small \emph{R3}}
\psfrag{C}[r]{\small \emph{UC}}
\psfrag{D}[r]{\small \emph{R1}}
\centering
\includegraphics[width=0.9\textwidth]{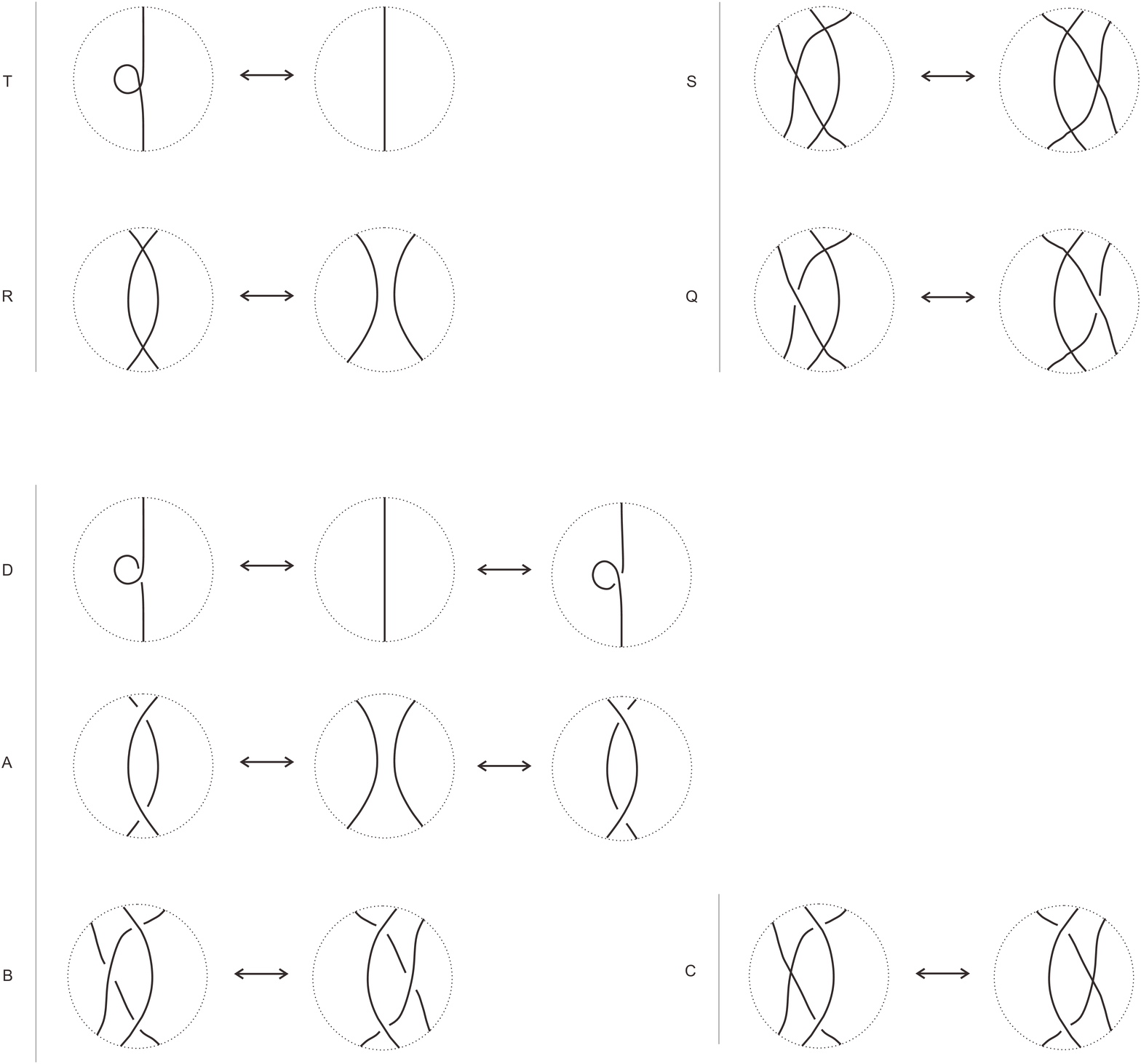}
\caption{\label{F:WTangleReidemeister}Reidemeister moves for w-tangles.}
\end{figure}

\begin{defn}\label{D:UnderlyingWKnot}
If a w-tangle $K$ corresponds to a stable equivalence class of tangle machines to which our machine $M$ belongs, then we say that $K$ is the \emph{underlying w-tangle of $M$}.
\end{defn}

Note that w-knotted objects also admit Rosemeister diagrams, as illustrated in Figure~\ref{F:WKnot}.

\section{Some elementary invariants}
\label{SS:SimpleInvariants}

In this section we describe some simple characteristic quantities associated to equivalence classes of machines. Such quantities are called \emph{invariants}.

\begin{rem}
Category theory allows a precise definition: Invariants are functors out of a category of tangle machines whose morphisms are equivalences, or out of a closely related category.
\end{rem}

\begin{defn}
An \emph{information invariant} is an invariant that is additive under connect sums:
\begin{equation}
v(M_1\begin{minipage}{8pt}\includegraphics[width=8pt]{hash.eps}\end{minipage} M_2)=v(M_1)+v(M_2).
\end{equation}
\end{defn}

\begin{rem}
An example of an invariant that is not an information invariant would be `stick number'--- the minimal number of straight line segments with which a diagram of an equivalent machine can be constructed.
\end{rem}

We are interested only in information invariants; we consider that the information content of a machine is the collection of its information invariants. The identity functor is also an information invariant, so there is a sense in which an equivalence class of machines, or a `best' representative inside it, is the information content of a machine. Our characterization of information content therefore serves to focus our attention on invariants valued in more familiar categories such as categories of numbers, polynomials, etc.

An invariant is called \emph{stable} if it is an invariant of stable equivalence classes.

\subsection{Underlying graph and reduced graph}\label{SSS:UnderlyingGraph}

The graph $G$ which underlies $M$ is unchanged by Reidemeister moves, and is thus an invariant of machine equivalence classes (a \emph{machine invariant}). It is not stable.

\begin{defn}
The \emph{reduced graph} $\widetilde{G}$ of $M$ is the graph obtained from the underlying graph $G$ of $M$ by contracting all $2$--valent vertices of $G$ with one of their incident edges.
\end{defn}

 The reduced graph is a stable invariant. For a tangle machine for example, the reduced graph will be a collection of isolated vertices and loops, and will count the number of open and of closed processes in the machine.

 The underlying graph is unaffected by connect sum. It tells us how many interactions a machine has, how many registers are contained in each, and whether they are open or closed.


\subsection{Initial and terminal colour sets}\label{SSS:InitialColourSet}

\begin{defn}
Let $r_1,\ldots,r_\nu$ and $s_{1},\ldots,s_{\nu}$ denote the initial and the terminal registers of the open processes $P_1,\ldots,P_\nu$ of machine $M$, correspondingly. The set $\{\rho(r_1),\ldots,\rho(r_\nu)\}$ is called the \emph{initial colour set} of $M$, and $\{\rho(s_1),\ldots,\rho(s_\nu)\}$ is called the \emph{terminal colour set} of $M$.
\end{defn}

We record the following observation, which was used in the examples in the prequel to this paper.

\begin{prop}
Initial and terminal colour sets of a machine, indexed so that $s_i$ is the terminal register for the process whose initial register is $s_i$ for all $i=1,2,\ldots,\nu$, are stable machine invariants.
\end{prop}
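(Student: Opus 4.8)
The plan is to treat the claimed invariant as the multiset of pairs $\{(\rho(r_i),\rho(s_i))\}_{i=1}^{\nu}$, one pair per open process $P_i$, where $r_i$ is the initial (source) register and $s_i$ the terminal (sink) register of $P_i$. The first observation is structural: since each open process is a connected directed path, its two free ends are intrinsically paired, and none of the generating moves reconnects the path-component structure of the diagram. It therefore suffices to check, for each move, that (i) the colour of every free end is unchanged, and (ii) no process is merged with or split from another; the pairing then takes care of itself, and the invariance of the two separate colour sets follows as a corollary. I would record at the outset that the initial and terminal registers are exactly the univalent (ray) ends of the open processes, which sit on the boundary of the diagram.

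For the Reidemeister moves I would argue move by move. The moves R2 and R3 are supported in the interior of the diagram and carry identical colourings on the boundary of their support — for R3 this is precisely the self-distributivity identity $(x_i\trr z)\trr(y\trr z)=(x_i\trr y)\trr z$ that validates the move — so they touch no free end and alter no endpoint colour. The move R1 is the only generator acting at a free end, so here I would verify the colour directly: on the kinked side the tip register is the output of a self-interaction $x\diamond x$, which equals $x$ by quandle idempotence, matching the colour $x$ on the straightened side. In each case the indexed multiset of endpoint pairs is left literally unchanged.

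For (de)stabilization and false stabilization, the move inserts or deletes a bivalent register along a strand across which the colour does not change; applied at an end it merely replaces the terminal or initial register by another register of equal colour, and it never merges or splits a process. Hence the indexed colour multiset is preserved, which is exactly what stability of the invariant asserts. The one step requiring genuine care is the action of automorphisms of $Q$: under an automorphism $\psi$ the colouring transforms as $\rho\mapsto\psi\circ\rho$, so each pair $(\rho(r_i),\rho(s_i))$ becomes $(\psi\rho(r_i),\psi\rho(s_i))$. I would therefore phrase the invariant as the $\mathrm{Aut}(Q)$-orbit of the indexed colour multiset; since both ends of a single process lie in one connected component of the diagram, the componentwise automorphisms appearing in the definition of stable equivalence act diagonally on each pair, so the required equivariance — and hence well-definedness of the orbit — is preserved. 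Establishing this equivariant formulation, rather than checking any individual move, is the main obstacle.
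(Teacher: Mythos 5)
Your proof is correct. Note that the paper itself offers no proof of this proposition --- it is recorded only as an ``observation'' used in the prequel's examples --- so your move-by-move verification is precisely the routine check the authors leave implicit: R2 and R3 preserve all colours on the boundary of their support (for R3 this is self-distributivity), R1 preserves the tip colour by quandle idempotence, and (de)stabilization merges or splits registers of equal colour without ever reconnecting or splitting a process, so the per-process pairing of initial and terminal registers survives every move. Your one genuine addition is the treatment of the automorphism action: since (stable) equivalence includes composing the colouring with automorphisms of $Q$ on connected components, the literal colour sets are \emph{not} preserved, only their orbits under such automorphisms; your reformulation of the invariant as the orbit of the indexed multiset of pairs $(\rho(r_i),\rho(s_i))$, with each automorphism acting diagonally on a pair because every open process lies in a single connected component, is what makes the proposition true as stated, and is in fact more careful than the paper's own formulation.
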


The initial and terminal colour sets are also unaffected by connect sum. They provide some measure as to the computation that a machine is carrying out. If they are very different from one another, then that implies that the machine must have at least a certain number of interactions.


\subsection{Nontrivial interaction number and nonunit interaction number}\label{SSS:InteractionNumber}

\begin{defn}[Trivial interaction]\label{D:TrivialInteraction}
An interaction in a machine $M$ with agent register $r$ is \emph{trivial} if $M$ is equivalent to a machine $M^\prime$ in which $r$ has no patients. An interaction in $M$ with agent register $r$ is \emph{unit} if $M$ is equivalent to a machine $M^\prime$ in which all patients of $r$ share the same colour $x\in R$ as $r$.
\end{defn}

\begin{defn}[Nonunit interaction number, Nontrivial interaction number]\label{D:NonTrivInterNum}\hfill
The number of nontrivial (nonunit) interactions in a machine is called the \emph{nontrivial (nonunit) interaction number} of the machine.
\end{defn}

Stabilization and Reidemeister moves to not add or take away nontrivial and nonunit interactions, thus both the nontrivial interaction number and the nonunit interaction number of a machine are stable invariants.

\begin{rem}\label{R:TrivialUndecidable}
 Triviality of an interaction is undecidable. For example, let $Q$ be the quandle whose elements are elements of a group $G$ with undecidable word problem (see \textit{e.g.} \citep{Miller:92}) and whose operation is $g\trr h \ass h^{-1}gh$. Then it is undecidable in $Q$ whether the colour of an agent $r$ is equal to the colour of an input $r^\prime$.
\end{rem}

The nontrivial interaction number and the nonunit interaction number are both additive under connect sums.

\subsection{Fundamental rack}\label{SSS:UniversalRack}

In this section, the word ``rack'' should be changed to ``quandle'' when we are discussing quandle machines.

Given a machine $M$, we can discard the colouring $\rho$, and instead colour the registers by distinct formal symbols $c_1,c_2,\ldots,c_N$, subject to the axioms of a rack, and subject to the rule that the output corresponding to input $x$ and operator $y$ is $x\trr y$. Thus, for example,

\begin{equation}
\begin{tikzcd}[row sep=1em,
column sep=1em]
x \rar[dash] & \ooplus \rar & y \\
& z \arrow[dash, dashed]{u} &
\end{tikzcd}
\end{equation}

\noindent means that $y=x\trr z$.

The machine, without its colours, thus gives rise to a rack with generators $c_1,c_2,\ldots,c_N$ and with relations dictated by how interactions concatenate inside the machine. This rack is called the \emph{fundamental rack} of $M$ and is denoted $Q(M)$. It is a stable machine invariant. Because no relations have been introduced beyond those forced on us by the machine $M$ itself, the colouring $\rho$ must factor through a rack homomorphism $Q_M\to Q$.

The fundamental rack
is unchanged by contraction of an edge that is not in the image of $\bm{\phi}$, and so it descends to an invariant of the w-knotted graph underlying the machine, described in Definition~\ref{D:UnderlyingWKnot}.

\begin{rem}
There is a notion of a \emph{birack}, which is a more powerful notion than a rack in which colours change at undercrossings and also at overcrossings \citep{FennJordanKauffman:04}. Biracks give rise to invariants of w-knots \citep{BartholomewFenn:11} and therefore also to invariants of machines.
\end{rem}

The fundamental rack of a connect sum is the free product of fundamental racks of summands. Because any colouring of the machine must factor through the fundamental rack, the fundamental rack represents the maximum amount of data that a machine can contain.

\subsection{Linking graph}\label{SSS:LinkingGraph}

The `linking' of a machine $M$ with underlying graph $G$ and with processes $P_1,P_2,\ldots,P_\nu$ is captured as follows:

\begin{defn}[Linking number; linking vector; (unframed) linking graph]\label{D:LinkingGraph}
 The \emph{linking number} of register $r$ with process $j$ is the number of edges $e$ in process $j$ such that $\phi(e)=r$ and $\mathrm{sgn}(e)=+$, minus the number of edges $e$ in process $j$ such that $\phi(e)=r$ and $\mathrm{sgn}(e)=-$. The \emph{linking graph} $\mathrm{Link}(M)$ of $M$ is a labeling of each vertex in $G$ by a \emph{linking vector} $v^r\ass \left(v^r_{1},v^r_2\ldots,v^r_{\nu}\right)$ whose $k$th entry is the linking number of $r$ with process $k$. The \emph{unframed linking graph} $\mathrm{Link}_0(M)$ is the labeled graph obtained by setting to zero the entry in each linking vector $v^r$ which represents the interactions of $r$ with its own process $P$.
 \end{defn}

 \begin{rem}
 The notion of a machine's linking graph parallels the notion of the linking matrix of a classical link, as in \textit{e.g.} \citep{Kauffman:01}.
 \end{rem}

The linking graph is an invariant of a rack~machine, and the unframed linking graph is an invariant of a quandle~machine. This is because an R2 move cancels or creates a pair of inverse interactions $\trr$ and $\rrt$ by the same agent, while an R3 move has no effect on any linking vector, and the effect of an R1 move is only on the `diagonal' entries.

The linking graph of a connect sum is obtained by adding linking vectors at each vertex. It is also a measure of the complexity of a machine. To illustrate, consider the following example:

\begin{example}
Let $M$ be a machine coloured by the Alexander quandle, that is the quandle whose elements are rational functions in a real variable $t$ and whose operation is $x\trr y= (1-t)x+ty$. The degree of $x\trr y$ is $1+\max(\mathrm{Deg}(x),\mathrm{Deg}(y))$. If the linking vectors are all zero and the machine is connected, it implies that the degrees of all colours in the machine share the same degree. Indeed, the gap between the highest and the lowest degrees of Alexander quandle colours which can appear in each component of a machine is completely determined by the linking graph.
\end{example}

\begin{defn}[Reduced linking graph]\label{D:ReducedLinkingGraph}
The \emph{reduced linking graph} $\widetilde{\mathrm{Link}}(M)$ of a linking graph $\mathrm{Link}(M)$ is the labeled graph obtained from $\mathrm{Link}(M)$ by first deleting all zero entries in all linking vectors in $M$, and then by removing all $2$--valent vertices with empty linking vector from the graph (contracting an edge incident to them). The  \emph{reduced unframed linking graph} $\widetilde{\mathrm{Link}_0}(M)$ is defined analogously.
\end{defn}

The reduced linking graph is a stable invariant of a rack~machine, and the graph obtained by setting all `diagonal entries' to null is a stable invariant of a quandle~machine. The reduced linking graph is a more compact way than the linking graph of expressing the same complexity information.

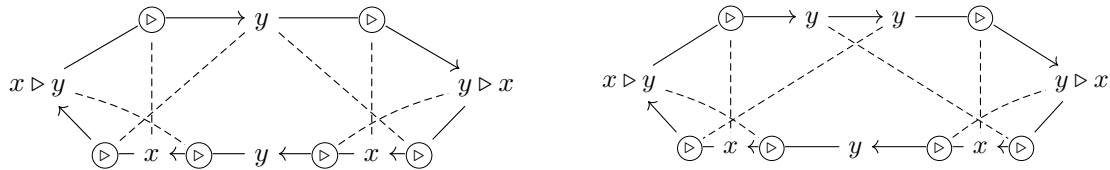
\begin{figure}[htb]
\centering
\begin{minipage}[t]{0.45\linewidth}
\vspace{0pt}
\resizebox{.99\hsize}{!}{\begin{tikzcd}[ampersand replacement=\&, row sep=1em, column sep=0.6em]
\& \& \ooplus  \arrow[dash, dashed]{dd} \arrow{rrr} \& \& \& y \arrow[dash,dashed]{ddllll} \arrow[dash,dashed]{ddrrrr} \arrow[dash]{rrr} \& \& \&  \ooplus \arrow[dash,dashed]{dd} \arrow{drr} \& \& \\
x\trr y \arrow[dash]{urr} \arrow[dash, dashed, bend left=10]{drrr} \& \& \& \& \& \& \& \& \& \& y\trr x  \arrow[dash]{dl} \\
\&  \ooplus \arrow{ul} \rar[dash] \& x \&  \ooplus \lar \& \& y \arrow[dash]{ll} \&  \&  \ooplus \arrow[dash,dashed, bend left=10]{urrr} \arrow{ll}  \& x \arrow[dash]{l} \&  \ooplus \lar \&
\end{tikzcd}}

\end{minipage}
\hspace{10pt}
\begin{minipage}[t]{0.45\linewidth}
\vspace{0pt}
\resizebox{.99\hsize}{!}{\begin{tikzcd}[ampersand replacement=\&,row sep=1em, column sep=0.45em]
\& \& \ooplus  \arrow[dash, dashed]{dd} \arrow{rr} \& \& y \arrow{rr} \arrow[dash,dashed]{ddrrrrr}\& \& y \arrow[dash]{rr} \arrow[dash,dashed]{ddlllll} \& \&  \ooplus \arrow[dash,dashed]{dd} \arrow{drr} \& \& \\
x\trr y \arrow[dash]{urr} \arrow[dash, dashed, bend left=10]{drrr} \& \& \& \& \& \& \& \& \& \& y\trr x  \arrow[dash]{dl} \\
\&  \ooplus \arrow{ul} \rar[dash] \& x \&  \ooplus \lar \& \& y \arrow[dash]{ll} \&  \&  \ooplus \arrow[dash,dashed, bend left=10]{urrr} \arrow{ll}  \& x \arrow[dash]{l} \&  \ooplus \lar \&
\end{tikzcd}}
\end{minipage}
\caption{The right-hand side machine has one linking vector all of whose entries are $1$, whereas one of the entries in the linking vector for the single process in the left-hand machine is $2$. The rack used in both machines satisfies $x \trr y = y \rrt x$.}
\label{fig:div}
\end{figure}

An entry in the linking vector indicates the total influence of an individual register on the various processes in a machine. When we do not need all of the information in the linking graph, a marginalized version may be useful:

\begin{defn}[Linking matrix]\label{D:LinkingMatrix}
Let $P_1,P_2,\ldots,P_\nu$ denote the processed of a machine $M$. Denote registers of the $i$th process $P_i$ in a machine $M$ by $r_i^1,r_i^2,\ldots,r_i^k$, whose respective linking vectors are
 \begin{equation}v(r_i^j)\ass \left(v_{1}(r_{i}^j),v_2(r_i^j)\ldots,v_{\nu}(r_i^j)\right) \qquad \text{for $j=1,2,\ldots,k$.}\end{equation}
Set $R_i \in \mathbb{R}^{\nu}$ to be the $1\times\nu$ vector whose $j$th entry is $\sum_{s=1}^k \abs{v_j(r_i^s)}$, that is  the sum taken over all registers in $P_i$ of the absolute values of their respective linking vectors. The \emph{linking matrix} of $M$ is the $\nu\times\nu$--matrix whose rows are $R_1, \ldots, R_\nu$.
\end{defn}

\begin{example}
Consider the following (two-process) machine of which the $j$th register in the $i$th process is labeled $x_{ij}$.
\begin{equation}
\begin{minipage}[t]{0.28\linewidth}
\vspace{15pt}
\psfrag{a}[c]{\small $x_{21}$}
\psfrag{b}[c]{\small $x_{24}$}
\psfrag{c}[c]{\small $x_{11}$}
\psfrag{d}[r]{\small $x_{13}$}
\includegraphics[width=0.93\textwidth]{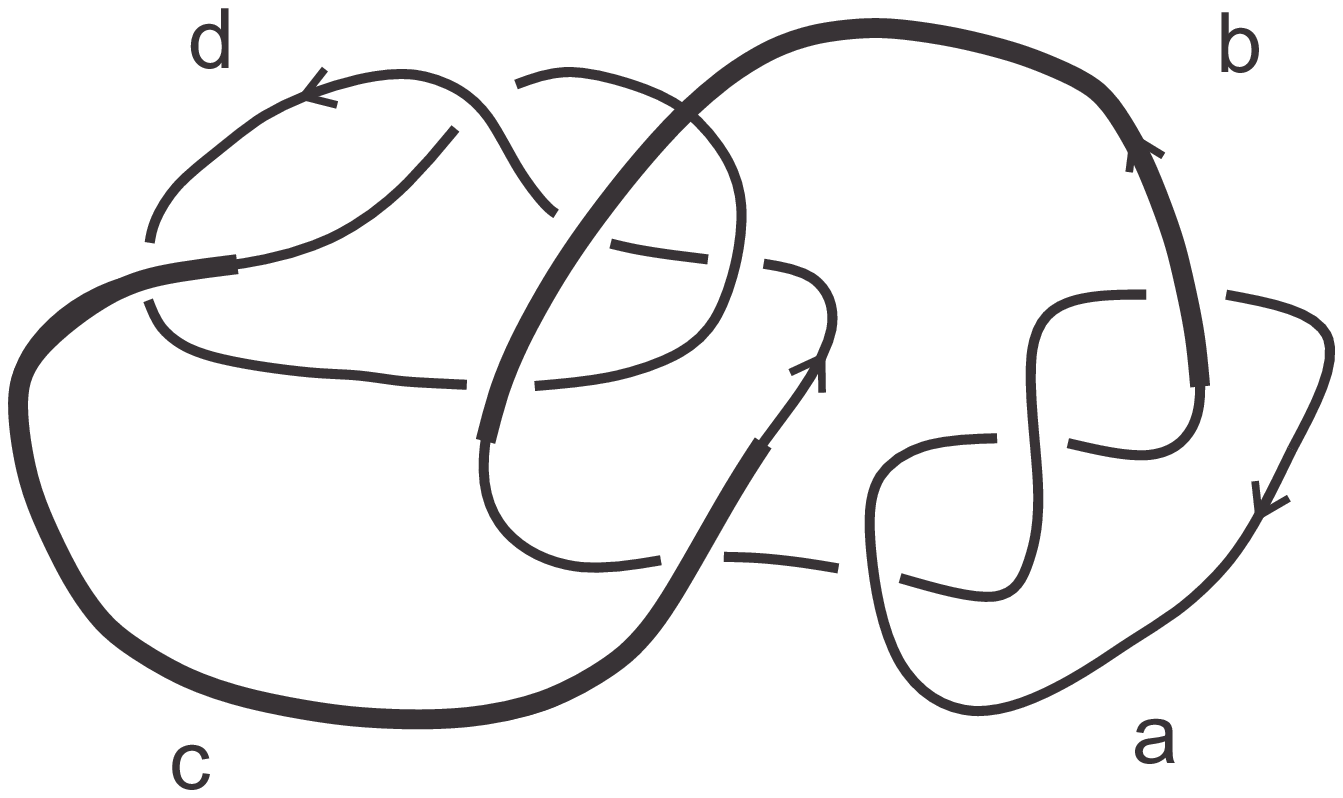}
\end{minipage}
\begin{minipage}[t]{0.6\linewidth}
\vspace{0pt}
\begin{tikzcd}[row sep=0.7em, column sep=0.36em]
& x_{13} \arrow[dash, dashed, bend left=15]{dddrr} \arrow[dash]{ld} & \ooplus \arrow{l} \arrow[dash, dashed, bend left=20]{rrr} & x_{12} \arrow[dash]{l} &  & : & & \ooplus \rar & x_{22} \arrow[dash]{r} \arrow[dash, dashed, bend left=10]{ddr} & \ooplus \arrow{dr} \\
\oominus \arrow{d} & & & & \oominus \arrow{ul}    & & x_{23} \arrow[dash]{ur} & & & & x_{21} \arrow[dash, dashed, bend left=10]{lllu} \arrow[dash]{ld} \\
x_{14} \arrow[dash]{dr} & & & & x_{11} \arrow[dash, dashed, bend left=10]{llllu} \arrow[dash]{u} \arrow[dash, dashed, bend right=30]{rrr}  & & & \ooplus \arrow{ul} & \arrow[dash]{l} x_{24} \arrow[dash, dashed, bend left=10]{uur} \arrow[dash, dashed, bend left=30]{llllld} \arrow[dash, dashed, bend left=70]{llluu} \arrow[dash, dashed, bend left=50]{dlllllll} & \ooplus \arrow{l} \\
& \oominus \rar & x_{15} \arrow[dash, dashed, bend left=10]{rruu} \arrow[dash]{r} & \oominus \arrow{ur} &
\end{tikzcd}
\end{minipage}
\end{equation}
For this machine the linking graph, $\mathrm{Link}(M)$, and its corresponding stabilization, $\mathrm{Link}_0(M)$, (depicted below using squiggly arrows)
are obtained as
\begin{equation}
\begin{tikzcd}[row sep=0.8em, column sep=0.25em]
\small
& (-1,0) \arrow{ldd} \arrow[squiggly, bend right=10]{dddr} & & (0,0) \arrow{ll} &  & & & & (0,1) \arrow{rrd} \arrow[squiggly, bend left=10]{rrd} & \\
& & & &     & & (0,0) \arrow{urr} & & & & (0,1) = v(x_{21}) \arrow{lld} \arrow[squiggly, bend left=10]{lld} \\
(0,0) \arrow{drr} & & & & (-1,1) = v(x_{11}) \arrow{uul} \arrow[squiggly, bend right=10]{uulll}  & & & & \arrow{llu} (0,1) \arrow[squiggly, bend left=10]{uu} & \\
& & (-1,0) \arrow{urr}  \arrow[squiggly, bend right=20]{urr} & &
\end{tikzcd}
\end{equation}
Their unframed counterparts are
\begin{equation}
\begin{tikzcd}[row sep=0.8em, column sep=0.25em]
\small
& (0,0) \arrow{ldd} & & (0,0) \arrow{ll} &  & & & & (0,0) \arrow{rrd} & \\
& & & &     & & (0,0) \arrow{urr} & & & & (0,0) = v(x_{21}) \arrow{lld} \\
(0,0) \arrow{drr} & & & & (0,1) = v(x_{11}) \arrow{uul} \arrow[loop left, squiggly]{} & & & & \arrow{llu} (0,0) & \\
& & (0,0) \arrow{urr} & &
\end{tikzcd}
\end{equation}
and the framed and unframed linking matrices are, respectively, $\begin{bmatrix} 3 & 1 \\ 0 & 3 \end{bmatrix}$, and $\begin{bmatrix} 0 & 1 \\ 0 & 0 \end{bmatrix}$.
\end{example}


\subsection{Colour linking graph}\label{SSS:SpectralGraph}

Complementary to the notion of a linking graph, which makes no account of colours, there is the notion of the colour linking graph which sees `only' the colours at each register, and so which somehow measures `linking of colours'. Once again, let $M$ be a machine coloured by a rack $Q$ with underlying graph $G$.

\begin{defn}[Colour linking spaces and graphs; unframed and reduced versions]\label{D:SpectralLinking}
Let $e=(v,w)$ be an edge in $G$ with $\phi(e)=r$. If $\mathrm{sgn}(e)=+$ then let $\hat{e}$ denote the automorphism $\trr \rho(v,e)$ of $Q$, and if $\mathrm{sgn}(e)=-$ then let $\hat{e}$ denote the automorphism $\rrt \rho(w,e)$ of $Q$. Denote the space of \emph{inner automorphisms} of $Q$, that is automorphisms of the form $\trr x$ for $x\in Q$, by $\mathrm{Inn}(Q)$. Denote its abelianization, that is its quotient by elements of the form $\trr x\trr y \rrt x\rrt y$, by $\mathrm{Ab}\left(\mathrm{Inn}(Q)\right)$.

The \emph{colour linking space} of a register $r$ is the set $\mathrm{Spec}(r)$ of all proper maximal ideals of the subspace of $\mathrm{Ab}\left(\mathrm{Inn}(Q)\right)$ that is generated by $\hat{e}$ for all $e\in\phi^{-1}(r)$. The \emph{unframed colour linking space} is obtained by ignoring all contributions of half-edges in the same process as $r$. The \emph{(unframed) colour linking graph} of $M$ is a labeling of each vertex of $G$ by its (unframed) colour linking space. The \emph{reduced (unframed) colour linking graph} is defined by removing by contraction all $2$--valent vertices of the (unframed) colour linking graph which are labeled by empty spaces.
\end{defn}

The colour linking space can be informally thought of as the dimension of the set of input colours on which $r$ is acting, modulo the relation implied by R2. It is an analogue of the spectrum of a ring in algebraic geometry.

The (unframed) colour linking graph is unchanged by Reidemeister moves, and so it is an invariant of rack~machines (of quandle~machines). Its reduced version is a stable invariant.

\begin{example}
Consider a machine $M$ with a single interaction, with $10$ edges $e_i\ass (v_i,w_i)$ all of which satisfy $\rho(v_i,e_i)=q$, $\phi(e_i)=r$ and $\mathrm{sgn}(e_i)=+$ for $i=1,2,\ldots,10$ (so $M$ has 21 registers and 11 open processes). Assume also that $Q$ has more than one generator. Then the linking vector of $r$ is $10$, and its colour linking space is generated by $\trr q$. Now, leaving everything else the same, change $\mathrm{sgn}(e_{10})$ to $-$. The linking vector of $r$ becomes $8$, but its colour linking space now vanishes.
\end{example}

\begin{equation}
\begin{tikzcd}[row sep= 3em, column sep=0.8em]
x_1 \rar[dash]& \oodiamond \rar & x_1\diamond y & x_2 \rar[dash]& \oodiamond \rar & x_2\diamond y &\cdots & x_{10} \rar[dash]& \oodiamond \rar & x_{10}\diamond y\\
\ & \ & \ & \ & \ & y \arrow[dash, dashed]{ullll} \arrow[dash, dashed]{ul} \arrow[dash, dashed]{urrr}
\end{tikzcd}
\end{equation}

Because the colour linking space of a register cannot be non-empty in two distinct connect summands, the colour linking space of a register in a direct sum is the union of coloured linking spaces of that register in each of the direct summands. 

\subsection{Shannon capacity}\label{SSS:ShannonCapacity}

The intuition behind the following invariant comes from viewing a machine as an information carrier. More formally,
a machine $M$ is a noisy communication channel through which colours as well as interactions are transmitted from A(lice) to B(ob) \citep{Shannon:56}. While $M$ is noisy and non-perfect, the messages on Bob's end appear corrupted and missing. A natural question can then be raised: What is the amount of non-confusable information that can be received by Bob?

Alice has a machine $M$ coloured by a rack $Q$. Alice sends Bob the graph $G$, together with a map equivalent to $\bm{\phi}$, and $k$ values of $\rho$ (not necessarily distinct). For an interaction
\begin{equation}
\begin{tikzcd}[row sep=1em,
column sep=1em]
x \rar[dash] & \ooplus \rar & y \\
& z \arrow[dash, dashed]{u} &
\end{tikzcd}\end{equation}
\noindent we say that any \textbf{pair} of elements of the set $\{x,y,z\}$ can be \emph{confused}. Messages which cannot be confused are called \emph{distinct}. Let $\textrm{Cap}_k(M)$ denote the number of distinct messages of length $k$ which $M$ admits.

\begin{defn}[Shannon capacity]
The \emph{Shannon capacity} of machine $M$ is:
\begin{equation}
\mathrm{Cap}(M)\ass \sup_{k\in\mathds{N}}\sqrt[k]{\mathrm{Cap}_k(M)}
\end{equation}
\end{defn}

\begin{example}
Consider the machine:
\begin{equation}
\begin{tikzcd}[row sep=1em, column sep=1.5em]
& & \ooplus \arrow{drr} \arrow[dash, dashed]{dd} & & \\
x \arrow[dash]{urr} \arrow[dash, dashed, bend left=10]{drrr} & & & & y \arrow[dash]{dl}\arrow[dash, dashed, bend right=10]{dlll}\\
& \ooplus \arrow{ul} \rar[dash] & y\trr x & \ooplus \lar &
\end{tikzcd}
\end{equation}
Any two elements of $Q$ are related by an automorphism, therefore $\mathrm{Cap}_1(M)=1$. A maximal set of distinct messages of length $2$ is $\{xx, xy\}$ and so $\mathrm{Cap}_2(M)=2$. 
 It seems therefore as though $\mathrm{Cap}(M)=\sqrt{2}$.
\end{example}

The definition of the Shannon capacity of a machine mimics that of the Shannon capacity of a graph \cite{Shannon:56}. It is a stable invariant.

\begin{rem}
A generalization of the above definition would be for Alice to send Bob only partial information about $\bm{\phi}$, and perhaps even no crossing information at all.
\end{rem}

\section{Complexity of machines}\label{S:Complexity}

The goal of this section is to define a complexity measure for a machine paralleling the number of prime factors of a classical knot, link, or tangle, as in Theorem~\ref{T:UniquePrimeFactorization}. The essential feature of our setting is that the colouring plays the lead role, and our definition of coloured prime decomposition may be applied also to classical coloured knots, links, and tangles.

\subsection{Factorization of machines and the definition of complexity}\label{SS:Factorization}

Our view of machines is that they perform computations--- we input colours to some registers, and output resulting colours in other registers. Thus, a machine whose computations are trivial should be considered trivial from the point of view of machine decomposition.

\begin{defn}\label{D:TrivialMachine}
A machine all of whose interactions are trivial (the agent and all patients share the same colour--- see Definition~\ref{D:TrivialInteraction}) is said to be a \emph{unit} machine.
\end{defn}

We define factorization and factors for a machine.

\begin{defn}[Factorization, factors, and prime factors]\label{D:Factor}
If $M$ is equivalent to a connect sum $M_1\begin{minipage}{8pt}\includegraphics[width=8pt]{hash.eps}\end{minipage} M_2$, then $M_1$ and $M_2$ are called \emph{factors} of $M$, and the decomposition of $M$ into $M_1\begin{minipage}{8pt}\includegraphics[width=8pt]{hash.eps}\end{minipage} M_2$ is called a \emph{factorization} of $M$. A machine $M$ is \emph{prime} it is not a unit, and if for any factorization $M=M_1\begin{minipage}{8pt}\includegraphics[width=8pt]{hash.eps}\end{minipage} M_2$, either $M_1$ is a unit or $M_2$ is a unit.
\end{defn}

\begin{defn}[Complexity]\label{D:Complexity}
The \emph{complexity} $\Omega(M)$ of machine $M$ is the maximal $k\in\mathds{N}$ such that $M$ factors into $k$ prime factors.
\end{defn}

\begin{example}
The machine counterpart of the `square knot' are given below. These \emph{square machines} have complexities of, respectively, $1$ and $2$.
\begin{equation}
\begin{minipage}[t]{0.43\linewidth}
\psfrag{x}[c]{\small $y$}
\psfrag{y}[c]{\small $x$}
    \centering
    \vspace{10pt}
    \includegraphics[width=0.9\textwidth]{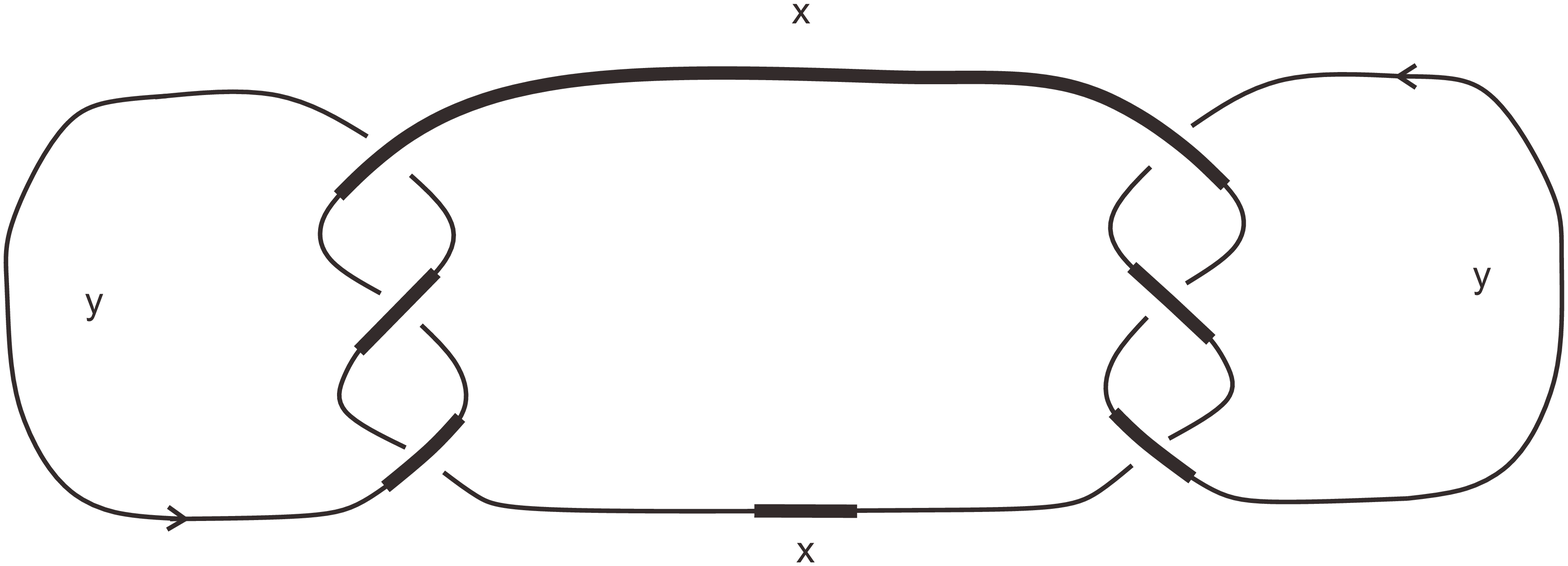}
\end{minipage}
  \hspace{10pt}
\begin{minipage}[t]{0.45\linewidth}
\vspace{0pt}
\resizebox{.99\hsize}{!}{\begin{tikzcd}[ampersand replacement=\&,row sep=1em, column sep=0.7em]
\& \& \oominus  \arrow[dash, dashed]{dd} \arrow{rrr} \& \& \& y \arrow[dash]{rrr} \arrow[dash,dashed]{ddllll} \arrow[dash,dashed]{ddrrrr} \& \& \&  \ooplus \arrow[dash,dashed]{dd} \arrow{drr} \& \& \\
x \arrow[dash]{urr} \arrow[dash, dashed, bend left=10]{drrr} \& \& \& \& \& \& \& \& \& \& x  \arrow[dash]{dl} \\
\&  \oominus \arrow{ul} \rar[dash] \& y \rrt x \&  \oominus \lar \& \& y \arrow[dash]{ll} \& \&  \ooplus \arrow[dash,dashed, bend left=10]{urrr} \arrow{ll}  \& x \trr y \arrow[dash]{l} \&  \ooplus \lar \&
\end{tikzcd}}
\end{minipage} 
\end{equation}
%
\begin{equation}
\begin{minipage}[t]{0.43\linewidth}
\psfrag{x}[c]{\small $y$}
\psfrag{y}[c]{\small $x$}
    \centering
    \vspace{10pt}
    \includegraphics[width=0.9\textwidth]{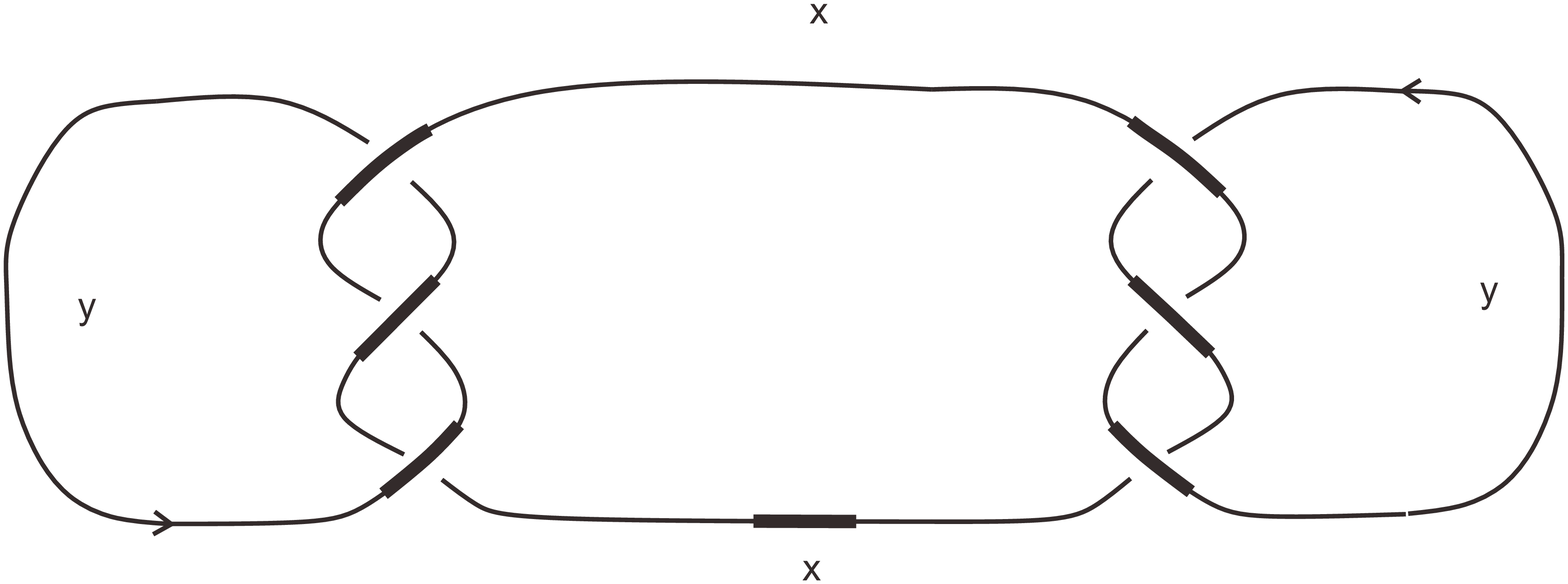}
\end{minipage}
\hspace{10pt}
\begin{minipage}[t]{0.45\linewidth}
\vspace{0pt}
\resizebox{.99\hsize}{!}{\begin{tikzcd}[ampersand replacement=\&, row sep=1em, column sep=0.8em]
\& \& \oominus  \arrow[dash, dashed]{dd} \arrow{r} \& y \arrow{rr} \arrow[dash,dashed]{ddll} \& \& y \arrow[dash,dashed]{ddrr} \arrow[dash]{r} \&  \ooplus \arrow[dash,dashed]{dd} \arrow{drr} \\
x \arrow[dash]{urr} \arrow[dash, dashed, bend left=10]{drrr} \& \& \& \& \& \& \& \& x  \arrow[dash]{dl} \\
\&  \oominus \arrow{ul} \rar[dash] \& y \rrt x \&  \oominus \lar \& y \arrow[dash]{l} \&  \ooplus \arrow[dash,dashed, bend left=10]{urrr} \arrow{l}  \& x \trr y \arrow[dash]{l} \&  \ooplus \lar \&
\end{tikzcd}}
\end{minipage}
\end{equation}


The machine below (which is different from the square machine) has a complexity 2:
\begin{equation}
\begin{minipage}[t]{0.43\linewidth}
\psfrag{x}[c]{\small $y$}
\psfrag{y}[c]{\small $x$}
    \vspace{10pt}
    \includegraphics[width=0.9\textwidth]{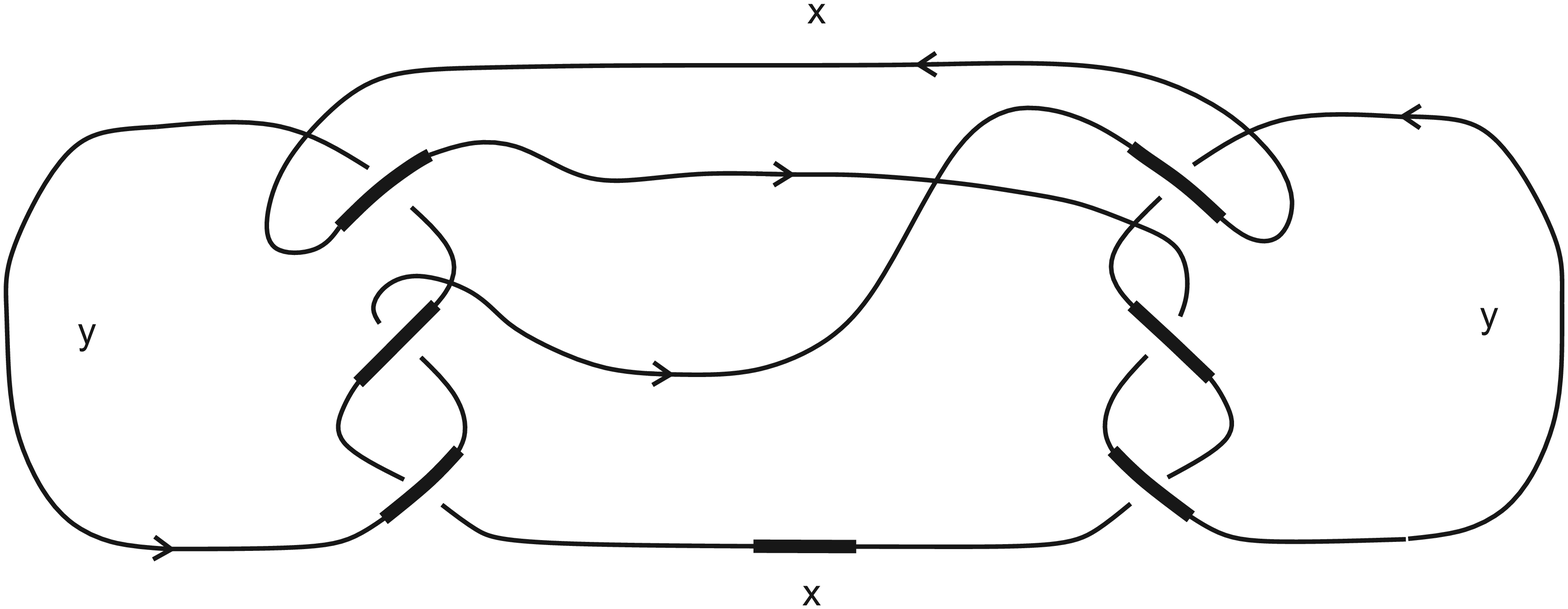}
\end{minipage}
\begin{minipage}[t]{0.45\linewidth}
\vspace{0pt}
\resizebox{.99\hsize}{!}{\begin{tikzcd}[ampersand replacement=\&,row sep=1em, column sep=0.8em]
\& \& \oominus  \arrow[dash, dashed]{dd} \arrow{r} \& y \arrow{rr} \arrow[dash,dashed]{ddrrrr} \& \& y \arrow[dash,dashed]{ddllll} \arrow[dash]{r} \&  \ooplus \arrow[dash,dashed]{dd} \arrow{drr} \\
x \arrow[dash]{urr} \arrow[dash, dashed, bend left=10]{drrr} \& \& \& \& \& \& \& \& x  \arrow[dash]{dl} \\
\&  \oominus \arrow{ul} \rar[dash] \& y \rrt x \&  \oominus \lar \& y \arrow[dash]{l} \&  \ooplus \arrow[dash,dashed, bend left=10]{urrr} \arrow{l}  \& x \trr y \arrow[dash]{l} \&  \ooplus \lar \&
\end{tikzcd}}
\end{minipage}
\end{equation}
\end{example}

\begin{thm}[Complexity is an invariant]\label{T:Complexity}
Complexity $\Omega(M)$ is a well-defined stable invariant. It is additive with respect to connect sum.
\end{thm}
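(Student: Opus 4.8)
The plan is to establish the three claims in order: well-definedness of $\Omega$, stability, and additivity. For \textbf{well-definedness}, I must show that $\Omega(M)$ does not depend on the choice of representative within the equivalence class of $M$. Since complexity is defined as the maximal number of prime factors over all factorizations of $M$, and factorization is defined up to equivalence (Definition~\ref{D:Factor} says $M_1, M_2$ are factors when $M$ is \emph{equivalent} to $M_1 \begin{minipage}{8pt}\includegraphics[width=8pt]{hash.eps}\end{minipage} M_2$), the quantity is manifestly a function of the equivalence class. The only subtlety is that the supremum is actually attained by some finite $k$; I would argue this by observing that each nonunit prime factor contains at least one nonunit interaction, so $\Omega(M)$ is bounded above by the nonunit interaction number, which is finite and (as established in Section~\ref{SSS:InteractionNumber}) itself a stable invariant. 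This boundedness makes the maximum well-defined.

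For \textbf{stability}, I would appeal to the same bounding argument together with the behaviour of (de)stabilization. Stabilization replaces a segment of an interval by a trivially embedded sphere linking nothing, which by definition adds only a unit interaction; it therefore introduces at most a unit factor and does not change the maximal number of \emph{prime} (nonunit) factors. Concretely, I would argue that any prime factorization of $M$ lifts to a prime factorization of the stabilized machine with the same number of prime factors, and conversely, so $\Omega$ is constant on stable equivalence classes. Here I would invoke the Reidemeister Theorem for Machines (Theorem~\ref{T:MachineReidemeisterTheorem}) to pass freely between the combinatorial and topological pictures, and the fact that unit factors (machines all of whose interactions are trivial, Definition~\ref{D:TrivialMachine}) contribute nothing to the prime count.

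For \textbf{additivity}, I want $\Omega(M_1 \begin{minipage}{8pt}\includegraphics[width=8pt]{hash.eps}\end{minipage} M_2) = \Omega(M_1) + \Omega(M_2)$. The inequality $\ge$ is immediate: concatenating a maximal prime factorization of $M_1$ with one of $M_2$ yields a factorization of the connect sum into $\Omega(M_1) + \Omega(M_2)$ prime factors. The reverse inequality $\le$ is the crux and is exactly where I expect the main obstacle to lie, since it amounts to a \emph{unique prime factorization} statement: I must show that any factorization of $M_1 \begin{minipage}{8pt}\includegraphics[width=8pt]{hash.eps}\end{minipage} M_2$ into primes cannot do better than the sum of the individual complexities. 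This is the content alluded to in the introduction, where the unique prime factorization theorem (Theorem~\ref{T:UniquePrimeFactorization}) is proven topologically by representing a colour-suppressed machine as a network of jointly embedded spheres and intervals in $\mathds{R}^4$ and invoking a decomposing-sphere / Schubert-style argument (the decomposing spheres pictured in Figure~\ref{F:tangle_4dnew}).

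The hard part will be the reverse additivity inequality, because it requires the full strength of the topological uniqueness machinery rather than an elementary counting argument. My strategy would be to defer this to Theorem~\ref{T:UniquePrimeFactorization}: once every nontrivial factorization of a non-split machine is known to have a unique maximal refinement up to unit factors, additivity of the prime count follows formally, since a maximal refinement of $M_1 \begin{minipage}{8pt}\includegraphics[width=8pt]{hash.eps}\end{minipage} M_2$ must restrict to maximal refinements of $M_1$ and of $M_2$ separately (the connect-sum sphere is itself a decomposing sphere, and any competing decomposing sphere can be isotoped to be compatible with it). I would therefore phrase the proof so that well-definedness and stability are handled directly and cheaply via the bound by nonunit interaction number, while additivity is reduced to the unique factorization theorem, flagging that the genuinely topological input — controlling the mutual position of decomposing spheres in $4$--space — is what makes the $\le$ direction nontrivial.
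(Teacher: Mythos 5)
Your proposal is correct and takes essentially the same route as the paper's own proof: well-definedness and finiteness via the bound by the nonunit interaction number, stability because (de)stabilization contributes only unit content, and additivity split into an elementary direction (combining factorizations of the summands) plus a hard direction deferred to the unique prime factorization theorem (Theorem~\ref{T:UniquePrimeFactorization}). If anything, your explicit identification of the $\le$ inequality as the one requiring Theorem~\ref{T:UniquePrimeFactorization} is stated more carefully than the paper's own wording, which labels the directions loosely but relies on exactly the same two ingredients.
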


\begin{proof}
Complexity is defined as a maximum over an equivalence class, therefore it is a machine invariant. It is a stable invariant, because stabilization preserves all properties in its definition. It is finite because it is bounded above by the number of nonunit interactions in machine $M$. Indeed, for any factorization $\mathcal{N}=N_1\begin{minipage}{8pt}\includegraphics[width=8pt]{hash.eps}\end{minipage} N_2$ of $M$, the number of nonunit interactions in $M$ equals the sum of nonunit interactions in $N_1$ and in $N_2$. If $N_1$ and $N_2$ are non-unit, each must contain at least one nonunit interaction. Domains of interaction functions of connect summands are disjoint, so by its definition as a maximum, complexity is sub-additive with respect to connect sums. Additivity with respect to connect sums follows from unique prime factorization, Theorem~\ref{T:UniquePrimeFactorization}, because a prime factor of a connect summand is also a prime factor of the connect sum.
\end{proof}

\subsection{The effect of false stabilization on complexity}\label{SS:FalseStabilization}

Consider the following move, which is \emph{not} an equivalence although it is a valid modification of a machine.

\begin{defn}[False stabilization]\label{D:FalseStabilization}
The following machine modification is called \emph{false stabilization}.
\begin{equation}
\begin{tikzcd}[row sep=1em,
column sep=1em]
\oodiamond & \ & \oodiamond \\
x \arrow{rr} \uar[dash, dashed]  & & x \uar[dash, dashed]
\end{tikzcd}\quad\longleftrightarrow \quad\  \begin{tikzcd}[row sep=1em,
column sep=1em]
\oodiamond  & & \oodiamond\\
\ & x \arrow[dash, dashed]{ul} \arrow[dash, dashed]{ur}
\end{tikzcd} \quad\ \longleftrightarrow \ \ \begin{tikzcd}[row sep=1em,
column sep=1em]
\oodiamond & \ & \oodiamond \\
x \arrow{rr} \arrow[dash, dashed]{urr}  & & x \arrow[dash, dashed]{ull}
\end{tikzcd}
\end{equation}
\end{defn}

In this section we explore the effect on complexity of false stabilization and destabilization, which we call \emph{joining} and \emph{resolution} correspondingly.

\begin{equation}\label{E:Resolution}
\begin{tikzcd}[row sep=1em,
column sep=1em]
\oodiamond & & \oodiamond \\
x \arrow{rr} \uar[dash, dashed]  & & x \uar[dash, dashed]
\end{tikzcd}\,
\begin{tikzcd}
\phantom{\rule{0pt}{12pt}a} \arrow[bend left=40]{rr}{Join} &  & \phantom{b} \arrow[bend left=40]{ll}{Resolve}
\end{tikzcd}
 \,\begin{tikzcd}[row sep=1em,
column sep=1em]
\oodiamond  & & \oodiamond\\
\ & x \arrow[dash, dashed]{ul} \arrow[dash, dashed]{ur}
\end{tikzcd}\,
\begin{tikzcd}
\phantom{a} \arrow[bend left=40]{rr}{Resolve} &  & \phantom{b} \arrow[bend left=40]{ll}{Join}
\end{tikzcd}\,
\begin{tikzcd}[row sep=1em,
column sep=1em]
\oodiamond & \ & \oodiamond \\
x \arrow{rr} \arrow[dash, dashed]{urr}  & & x \arrow[dash, dashed]{ull}
\end{tikzcd}
\end{equation}

\begin{prop}
Joining cannot increase complexity, and resolution cannot decrease complexity.
\end{prop}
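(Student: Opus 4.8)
The plan is to first observe that joining and resolution are mutually inverse modifications, so that the two assertions form a single inequality written in two directions. Write $M_{\mathrm{s}}$ for a machine in which two interactions $c_1,c_2$ are operated by \emph{distinct} agent registers $r_1,r_2$ of a common colour $x$, adjacent along a trivial edge, and write $M_{\mathrm{m}}$ for the machine obtained by joining, i.e.\ by merging $r_1,r_2$ into a single agent register $r$ operating both $c_1$ and $c_2$. Joining sends $M_{\mathrm{s}}$ to $M_{\mathrm{m}}$ and resolution sends $M_{\mathrm{m}}$ to $M_{\mathrm{s}}$, so ``joining cannot increase complexity'' reads $\Omega(M_{\mathrm{m}})\le\Omega(M_{\mathrm{s}})$ and ``resolution cannot decrease complexity'' reads $\Omega(M_{\mathrm{s}})\ge\Omega(M_{\mathrm{m}})$; these are the same inequality, which is what I would prove.

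I would establish $\Omega(M_{\mathrm{s}})\ge\Omega(M_{\mathrm{m}})$ by lifting factorizations. Start from a factorization of $M_{\mathrm{m}}$ into $k=\Omega(M_{\mathrm{m}})$ prime factors $N_1,\ldots,N_k$. The key structural step is the lemma that the single agent sphere $S$ representing $r$ lies entirely inside one factor, say $N_i$, so that both $c_1$ and $c_2$ belong to $N_i$. Granting this, resolution is a purely local operation inside $N_i$: splitting $S$ into the two spheres $S_1,S_2$ replaces $N_i$ by its resolution $N_i^{\mathrm{s}}$ and leaves every decomposing sphere separating the remaining factors untouched, so that $M_{\mathrm{s}}$ is the connect sum of $N_1,\ldots,N_{i-1},N_i^{\mathrm{s}},N_{i+1},\ldots,N_k$.

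To count prime factors I would note that resolution changes neither the crossings nor the patient colours, hence preserves the set of nonunit interactions; in particular $N_i^{\mathrm{s}}$ is nonunit because the prime factor $N_i$ is, and so $N_i^{\mathrm{s}}$ itself admits a factorization into at least one prime (existence of prime factorizations follows by induction on the nonunit interaction number, and does not require the uniqueness of Theorem~\ref{T:UniquePrimeFactorization}). The factors $N_j$ with $j\neq i$ are unchanged and remain prime, so $M_{\mathrm{s}}$ decomposes into at least $k$ prime factors, whence $\Omega(M_{\mathrm{s}})\ge k=\Omega(M_{\mathrm{m}})$, as required.

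The hard part is the structural lemma that $S$ occupies a single factor. I would establish it inside the sphere-and-interval picture of Section~\ref{SS:SphereInterval}: a decomposing hypersphere realizing a connect sum meets the tangle transversally only along its $1$--strata (the intervals) and is disjoint from its $2$--strata (the agent spheres); since $c_1$ and $c_2$ are precisely the loci at which under-strands pierce the one sphere $S$, no such decomposing sphere can place them on opposite sides. Pinning down the admissible intersections of a decomposing $3$--sphere with a sphere-and-interval tangle, and verifying that resolution genuinely commutes with the connect-sum gluing away from $S$, is where the real topological content sits; the remainder is bookkeeping with complexity as a maximum.
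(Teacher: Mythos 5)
Your reduction of the two assertions to the single inequality $\Omega(M_{\mathrm{m}})\le\Omega(M_{\mathrm{s}})$ is correct, and your bookkeeping (lift a prime factorization of $M_{\mathrm{m}}$, resolve inside the factor containing the merged agent, refine, count) is a sensible skeleton; it even makes explicit a counting step that the paper compresses into a single sentence. The structural lemma you single out as the hard part is also true, but it carries little weight: it is essentially immediate from the paper's definition of a system of decomposing spheres, which are required to meet the tangle only in finitely many points of its intervals and to be disjoint from its $2$--spheres.

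The genuine gap is the sentence ``resolution changes neither the crossings nor the patient colours, hence preserves the set of nonunit interactions.'' Unit-ness of an interaction is not a property of the colours appearing at the crossing: by Definition~\ref{D:TrivialInteraction}, an interaction with agent $r$ is unit if the ambient machine is \emph{equivalent} to one in which all patients of $r$ share $r$'s colour, and by Remark~\ref{R:TrivialUndecidable} such properties are in general undecidable, so they certainly cannot be read off from the local colour data. Joining and resolution change the ambient machine, hence change which equivalences are available. Concretely, your step ``$N_i^{\mathrm{s}}$ is nonunit because $N_i$ is'' is, in contrapositive form, the claim: if every interaction of the resolved machine is unit --- so there are equivalences $I_1$ and $I_2$ making the patients of $c_1$ match $r_1$'s colour and the patients of $c_2$ match $r_2$'s colour, respectively --- then the merged interaction $c$ is unit in the joined machine, i.e.\ there is a \emph{single} equivalence that does both simultaneously. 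This is not automatic: $I_1$ may destroy what $I_2$ achieves, since each may move and recolour strands near the other crossing. Producing the single witness is exactly where the paper spends its entire proof: it realizes $I_1$ and $I_2$ as ambient isotopies of a sphere-and-interval tangle acting on the two sides of a plane $P$ slicing the joined sphere $R$, glues them along a bicollar into an isotopy of $R$ fixing $P$, and then shrinks the result back to a genuine sphere-and-interval tangle (the shrinking technique from the proof of Theorem~\ref{T:MachineReidemeisterTheorem}), so that all segments through $R$ carry $R$'s colour. The same issue affects your claim that the untouched interactions of $N_i$ remain nonunit after resolution. Without this argument, or a substitute for it, the counting has nothing to stand on; with it, your write-up would be a correct and somewhat more detailed version of the paper's proof.
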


\begin{proof}
False stabilization contracts or expands an edge which is outside the domain of $\bm{\phi}$. If registers $r_1$ and $r_2$ join to form register $r$, then perforce $r_1$ and $r_2$ share the same colour. It remains to show that, if both registers incident to $\phi^{-1}(r_1)$ share the same colour as $r_1$ and also both registers incident to $\phi^{-1}(r_2)$ share the same colour as $r_2$, then both registers incident to $\phi^{-1}(r)$ must share the same colour as $r$. This will show that destabilization cannot create new nonunit interactions, and therefore that it cannot increase complexity.

We prove this claim topologically. Let $R_1$ and $R_2$ be standard ball-bounding spheres representing $r_1$ and $r_2$ correspondingly, and let $R$ be a standard ball-bounding sphere representing $r$, all inside a sphere-and-interval tangle for $M$ which by abuse of notation we also denote $M$. Let $P$ be a $2$--dimensional plane intersecting $R$ transversely, so that slicing $R$ along $P$ and smoothing has the effect of separating $R$ into $R_1$ and $R_2$. We also fix a $3$-dimensional hyperplane $H$ with respect to which we draw a Rosemeister diagram $D$ for $M$.

For the duration of this proof, we allow ourselves to act by ambient isotopy on one part of an embedded object while leaving another fixed. Technically this is accomplished by creating a \emph{bicollar} between what moves and what stays fixed, which acts as a `buffer' along which we to interpolate. See \textit{e.g.} \citep{Kosinski:07} for details. We also implicitly smooth all corners, so at every point in our argument, all objects live in the smooth category.

Let $I_1$ and $I_2$ be ambient isotopies of $K_1$ and of $K_2$ correspondingly, at the end of which all half-edges passing through $R_1$ share the same colour in $D$ as $R_1$ itself, and the same for $R_2$. Then, by the bicollar argument mentioned above, $I_1\circ I_2$ may be considered as an ambient isotopy of $R$ which leaves the cutting plane $P$ fixed pointwise. At the end of this ambient isotopy, which extends to ball $B$ with boundary $R$, the ball $I_1\circ I_2 (B)$ may intersect other balls bounded by other spheres in the projection to $H$ of the sphere-and-interval tangle for $M$. Imitating the proof of the Reidemeister Theorem for machines, Theorem~\ref{T:MachineReidemeisterTheorem}, we shrink $I_1\circ I_2(R)$ to a small ball around a point $p$ in $P\cap R$ while leaving the rest of $R$ fixed pointwise, interpolating between the original `big ball' and the current `small ball' with line segments and with broken planes, to again obtain a sphere-and-interval tangle for a machine. If all colours of intervals passing through a disc (the projection of a sphere) in $D$ share the same colour as the projected sphere, then they continue to hold the same colour as the projected sphere when the local picture is pushed through another sphere as in \ref{E:PushMove}. Thus we have exhibited a sphere-and-interval tangle for $M$ in which all line segments passing through $R$ share the same colour as $R$ in the Rosemeister diagram $D$, which means that indeed both registers incident to $\phi^{-1}(r)$ share the same colour as $r$.
\end{proof}

\subsection{Unique prime factorization}\label{SS:UniquePrime}

As Figure~\ref{F:m3131} illustrates, the factorization of a machine into prime machines is not unique. But as the same figure illustrates, there are a finite number of such \emph{prime factorizations}. Each of these represents an `equivalence class' of factorizations, and it is unique up to unit factors as a representative of this `equivalence class'. This claim is made precise below.

\begin{figure}[htb]
\centering
\begin{minipage}[t]{1\linewidth}
\psfrag{x}[c]{\small $x$}
\psfrag{y}[c]{\small $y$}
\psfrag{z}[c]{\small $y \trr x$}
\psfrag{s}[c]{\small $x \trr y$}
\psfrag{P}[c]{\small $P_1$}
\psfrag{Q}[c]{\small $P_2$}
\psfrag{R}[c]{\small $P_3$}
\psfrag{S}[c]{\small $P_4$}
\psfrag{I}[c]{\small $\bm{I_1}$}
\psfrag{J}[c]{\small $\bm{I_2}$}
    \centering
    \vspace{0pt}
    \includegraphics[width=0.9\textwidth]{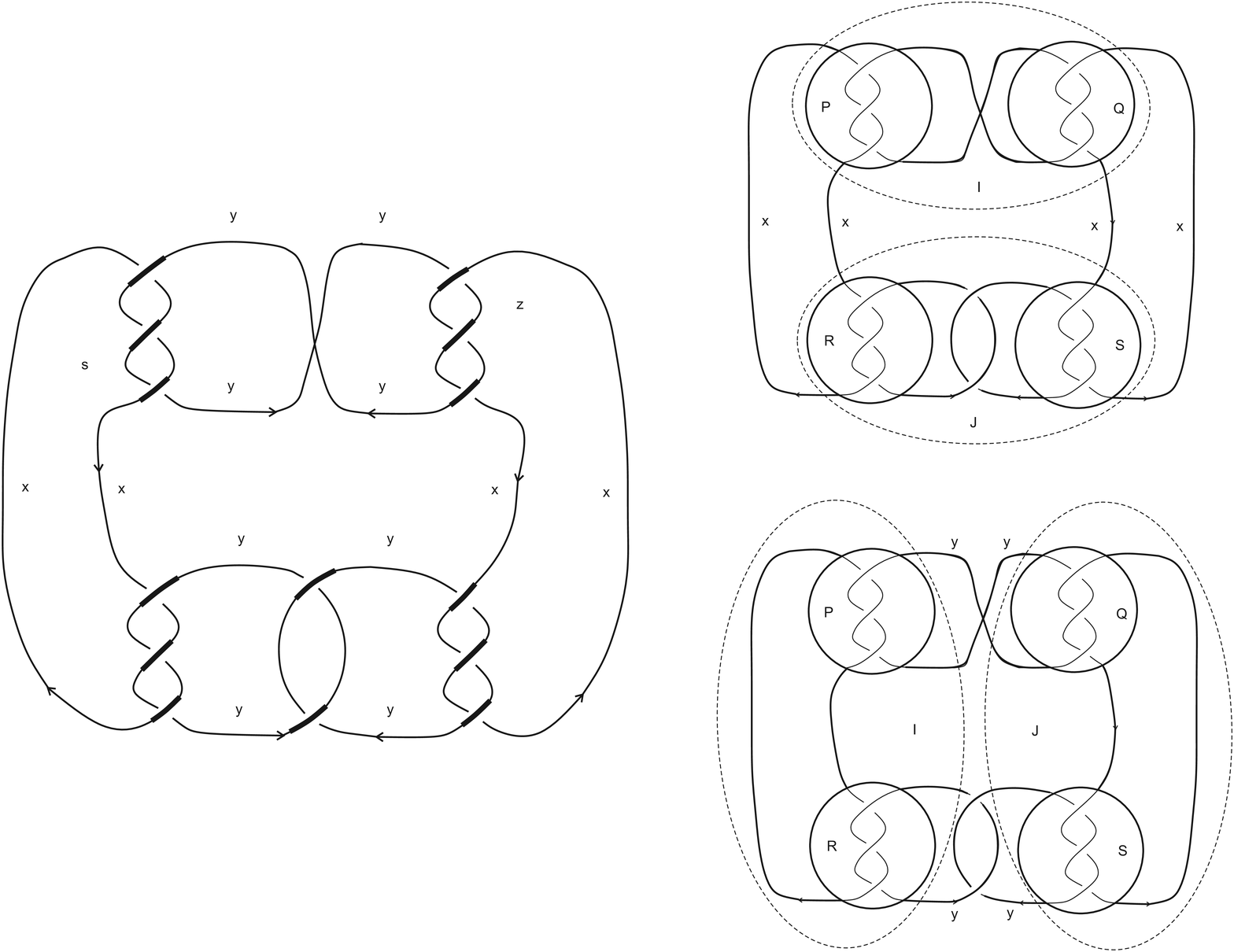}
\end{minipage}

\caption{\label{F:m3131}In this example the quandle with which the machine is coloured is commutative, \textit{i.e.} $x \trr y = y \trr x$. This machine has four non-unit irreducible factors, $P_1$ to $P_4$ (encircled on the right). It has two distinct prime factorizations: $\{P_1, P_2\}$ and $\{P_3, P_4\}$; or $\{P_1, P_3\}$ and $\{P_2, P_4\}$.}
\end{figure}

\begin{defn}[Refinement, topological equivalence]
 A \emph{refinement} $\mathcal{N}^\prime$ of a factorization $\mathcal{N}\ass N_1\begin{minipage}{8pt}\includegraphics[width=8pt]{hash.eps}\end{minipage} N_2\begin{minipage}{8pt}\includegraphics[width=8pt]{hash.eps}\end{minipage}\cdots\begin{minipage}{8pt}\includegraphics[width=8pt]{hash.eps}\end{minipage} N_k$ of machine $M$ is a factorization of $M$  obtained from $\mathcal{N}$ by factorizing one of its factors $N_i^\prime\begin{minipage}{8pt}\includegraphics[width=8pt]{hash.eps}\end{minipage} N_i^{\prime\prime} = N_i\in \mathcal{N}$ . Two factorizations which are related by a finite sequence of refinements and their inverses, via factorizations into two or more non-unit factors, are said to be \emph{topologically equivalent}.
\end{defn}

\begin{thm}[Unique prime factorization]\label{T:UniquePrimeFactorization}
Each topological equivalence class of factorizations of $M$ contains a prime factorization of $M$. Prime factorization $\mathcal{N}\ass N_1\begin{minipage}{8pt}\includegraphics[width=8pt]{hash.eps}\end{minipage} N_2\begin{minipage}{8pt}\includegraphics[width=8pt]{hash.eps}\end{minipage}\cdots \begin{minipage}{8pt}\includegraphics[width=8pt]{hash.eps}\end{minipage} N_k=M$ is unique in the following sense: If $\mathcal{N}^\prime\ass N^\prime_1\begin{minipage}{8pt}\includegraphics[width=8pt]{hash.eps}\end{minipage} N^\prime_2\begin{minipage}{8pt}\includegraphics[width=8pt]{hash.eps}\end{minipage}\cdots\begin{minipage}{8pt}\includegraphics[width=8pt]{hash.eps}\end{minipage} N^\prime_k=M$ is another prime factorization of $M$ that is topologically equivalent to $\mathcal{N}$, then there exists a permutation $\sigma$ on $k$ elements, and a set $\set{T_1,T_2,\ldots,T_k}$ of unit factors, such that $N_i= N_{\sigma(i)}^\prime\begin{minipage}{8pt}\includegraphics[width=8pt]{hash.eps}\end{minipage}\, T_i$ for all $i=1,2,\ldots,k$.
\end{thm}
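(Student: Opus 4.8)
The plan is to prove the two clauses of the theorem—existence of a prime factorisation in each topological equivalence class, and uniqueness up to permutation and unit factors—separately, leaning throughout on the identification of a colour-suppressed machine with a sphere-and-interval tangle (Theorem~\ref{T:MachineReidemeisterTheorem}) and on an innermost-sphere cut-and-paste argument paralleling the classical proof of unique prime factorisation for knots.

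For existence I would argue by termination of refinement. Starting from any factorisation in the prescribed topological equivalence class, I repeatedly refine whenever a factor splits into two non-units. By Theorem~\ref{T:Complexity} complexity is bounded above by the nonunit interaction number of $M$, and each such refinement strictly increases the number of non-unit factors; hence the process halts at a factorisation none of whose factors admits a splitting into two non-units, which is by definition a prime factorisation. Since every step is a refinement, the result stays in the original topological equivalence class.

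For uniqueness I would first fix a dictionary: represent $M$ by a sphere-and-interval tangle $T$ in $\mathds{R}^4$ and encode a factorisation by a system of disjoint embedded decomposing spheres in $T$, as indicated in Figure~\ref{F:tangle_4dnew}, each separating the ambient space and cutting $T$ into the pieces that realise the connect summands. Adjoining a decomposing sphere corresponds to refinement, and a factor is a unit precisely when the coloured region it bounds carries a single-colour (trivial) colouring. After reducing to the non-split case, I take topologically equivalent prime factorisations $\mathcal{N}$ and $\mathcal{N}^\prime$, represent them by sphere systems $\mathcal{S}$ and $\mathcal{S}^\prime$ in a common $T$, and put these in general position; their intersection is then a union of closed surfaces, which after the usual reductions may be taken to consist of $2$--spheres. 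Choosing an innermost such $2$--sphere $C$, which bounds a $3$--ball $D$ inside one system's sphere free of further intersections, I surger the other system's sphere along $D$; this strictly reduces the number of intersection components while changing neither factorisation's topological class. Iterating renders $\mathcal{S}$ and $\mathcal{S}^\prime$ disjoint, so $\mathcal{S}\cup\mathcal{S}^\prime$ is a common refinement of both. Because each of $\mathcal{N}$ and $\mathcal{N}^\prime$ is already maximal, the spheres must pair off—this pairing supplies the permutation $\sigma$—and the coloured region trapped between two paired spheres is forced to be a unit factor, so that each $N_i$ is the connect sum of $N^\prime_{\sigma(i)}$ with a unit factor $T_i$, as claimed.

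The main obstacle I expect is making the cut-and-paste compatible with the rack colouring. In the uncoloured setting the innermost-$3$--ball surgery is routine, but here each surgery reglues coloured regions, and one must verify both that the colouring remains consistent and that every region absorbed by a surgery is a genuine unit factor rather than a concealed non-unit summand. This is exactly what Section~\ref{SS:FalseStabilization} supplies: the argument there that colours propagate unchanged through a disc once all incident strands share the agent's colour shows that an innermost, colour-trivial region can be pushed across without disturbing any nonunit interaction, so the leftover pieces are forced to be units. Carrying this colour bookkeeping through the full induction, and checking that the surgered spheres remain valid decomposing spheres for honest factorisations at every stage, is the crux of the proof.
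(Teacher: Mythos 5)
Your existence half is essentially the paper's own argument: refinement terminates because complexity is bounded above by the nonunit interaction number (Theorem~\ref{T:Complexity}), so repeated refinement inside a topological equivalence class reaches a prime factorization. That part is fine.

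The uniqueness half has a genuine gap, and it is visible from the structure of your argument alone: after you represent $\mathcal{N}$ and $\mathcal{N}^\prime$ by sphere systems in a common tangle, your innermost-sphere surgery never again uses the hypothesis that the two prime factorizations are \emph{topologically equivalent}. If the surgery could be carried out as you describe, it would produce a common refinement of \emph{any} two prime factorizations of a non-split machine, and hence would prove that all prime factorizations agree up to permutation and unit factors. The paper's own example (Figure~\ref{F:m3131}) shows this is false: that machine has two genuinely distinct prime factorizations, not related by permutation and units, lying in different topological equivalence classes. The step that actually fails is the surgery itself: a decomposing sphere is required to meet the tangle in finitely many points \emph{all sharing a single colour}, and when you compress one system's sphere along an innermost $3$--ball of the other, the resulting spheres in general inherit intersection points of mixed colours, so they are no longer valid decomposing spheres. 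The colour-propagation argument of Section~\ref{SS:FalseStabilization} cannot repair this, since it applies only once all strands through a disc already share the agent's colour; it cannot create that situation. This colour constraint is precisely what separates the machine setting from Schubert-style arguments for classical knots, where spheres meet the knot in two points and no such constraint exists. The paper sidesteps the problem by invoking the Diamond Lemma: termination (Theorem~\ref{T:Complexity}) together with \emph{local} confluence --- any two refinements of the \emph{same} factorization share a common refinement, proved by bisecting decomposing spheres of a single system along hyperplanes --- which yields uniqueness only within each topological equivalence class, exactly what the theorem claims and no more. To salvage your approach you would have to explain how topological equivalence of $\mathcal{N}$ and $\mathcal{N}^\prime$ enters the surgery argument, and no natural mechanism for that presents itself.
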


Theorem~\ref{T:UniquePrimeFactorization} follows from the Diamond Lemma, whose hypotheses are satisfied by Theorem~\ref{T:Complexity} together with the following lemma.

\begin{lem}
Any two refinements $\mathcal{N}^\prime$ and $\mathcal{N}^{\prime\prime}$ of the same factorization $\mathcal{N}$ share a common refinement $\mathcal{N}^{\prime\prime\prime}$.
\end{lem}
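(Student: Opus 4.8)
The plan is to read a factorization of $M$ as a partition of the set of nonunit interactions $\mathrm{Domain}(\bm{\phi})$ into the blocks carried by the individual factors, so that a refinement is exactly the splitting of one block into two. With this dictionary the lemma becomes a local-confluence (diamond) statement with two cases. If $\mathcal{N}'$ and $\mathcal{N}''$ are obtained by splitting \emph{different} factors $N_i$ and $N_j$ of $\mathcal{N}$, the two moves are independent: performing both splits at once yields a factorization $\mathcal{N}'''$ refining each of $\mathcal{N}'$ and $\mathcal{N}''$, since the domains of the interaction functions of distinct factors are disjoint and connect sum is associative. This disposes of the lemma except when both refinements split the \emph{same} factor $N\ass N_i=N_j$, producing two factorizations $N=A\# B$ and $N=C\# D$; the entire content of the lemma is then to produce a factorization of $N$ refining both.

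To do this I would pass to topology via the Reidemeister Theorem for machines (Theorem~\ref{T:MachineReidemeisterTheorem}) and work with a sphere-and-interval tangle $T$ representing $N$ in $\mathds{R}^4$. A factorization of $N$ into two factors is realized by a \emph{decomposing sphere}: a smoothly embedded $S^3\subset\mathds{R}^4$ that bounds a ball, meets $T$ transversely in the standard connect-sum pattern, and cuts the tangle into the two summands (these are the spheres drawn in Figure~\ref{F:tangle_4dnew}). Thus $N=A\# B$ and $N=C\# D$ are represented by decomposing spheres $\Sigma_1$ and $\Sigma_2$, and a common refinement will be extracted from a suitable \emph{disjoint} family of decomposing spheres.

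The heart of the argument is an innermost-disc reduction, in the spirit of the classical innermost-disc proofs of uniqueness of prime decomposition. First isotope $\Sigma_1$ and $\Sigma_2$ into general position, so that $\Sigma_1\cap\Sigma_2$ is a closed $2$--manifold lying in each of them away from the branch, double, and triple points of the projection. I would choose a component innermost on $\Sigma_2$, bounding a piece of $\Sigma_2$ whose interior misses $\Sigma_1$, and surger $\Sigma_1$ across it; each surgery strictly decreases the number of intersection components, so after finitely many steps $\Sigma_1$ and $\Sigma_2$ are disjoint (or one lies in the ball bounded by the other, the nested case). The colouring is transported through each surgery: the surgeries are supported in balls that can be taken disjoint from the interactions, and I would homogenise the colours there by exactly the bicollar-and-ambient-isotopy device used in the proof that joining cannot increase complexity, so that each surgered sphere is again a genuine, correctly coloured decomposing sphere. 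The resulting disjoint (or nested) pair cuts $N$ into at most four pieces and gives a factorization $\mathcal{N}'''$ of $N$, hence of $M$, refining both $\{A,B\}$ and $\{C,D\}$.

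The main obstacle is precisely this surgery step, and it has two parts. Topologically, one must check that a surgered $\Sigma_1$ is still a decomposing sphere---that it bounds a ball meeting the tangle in the connect-sum pattern rather than some arbitrarily knotted $S^3$---which is where transversality with the spheres of $T$ and the structure of $\Sigma_1\cap\Sigma_2$ must be controlled. The genuinely new ingredient, absent from the purely knot-theoretic setting, is colour compatibility: one must guarantee that after the isotopies the intervals threading each new sphere are coloured consistently with the register it represents. I expect this to follow from the same homogenisation argument as in the false-stabilization section, but it is the step requiring the most care, and it is also the step that encodes why refinements of a \emph{common} factorization---unlike arbitrary pairs of factorizations, as Figure~\ref{F:m3131} illustrates---admit a common refinement at all.
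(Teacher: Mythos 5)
Your reduction to the case where both refinements split the same factor, and your handling of the distinct-factor case, match the paper. But in the same-factor case --- which you correctly identify as the entire content of the lemma --- your argument rests on an innermost-component surgery making the two decomposing spheres $\Sigma_1,\Sigma_2$ disjoint, and this step does not survive the passage from dimension $3$ to dimension $4$. In general position $\Sigma_1\cap\Sigma_2$ is a closed \emph{surface}, not a collection of circles: its components can have arbitrary genus, an ``innermost'' component need not bound a disc-like region of $\Sigma_2$, and cutting the $3$--sphere $\Sigma_1$ along such a component and gluing in two parallel copies of the region it bounds need not produce spheres at all. Even when it does, your requirement that the surgered sphere ``bounds a ball meeting the tangle in the connect-sum pattern'' is exactly the point at issue: in the smooth category in $\mathds{R}^4$, whether an embedded $S^3$ bounds a ball is the (open) smooth Schoenflies problem, so it cannot be obtained from transversality and control of $\Sigma_1\cap\Sigma_2$ alone. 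You flag this surgery step as ``the main obstacle,'' but the proof you propose contains no mechanism for overcoming it --- and it is precisely where the lemma lives. (The nested case you mention in passing is likewise left unresolved.)

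The paper avoids this problem rather than solving it. Its topological dictionary defines refinement so that splitting a factor is realized by \emph{bisecting} the corresponding decomposing sphere along a flat $3$--dimensional hyperplane. Two refinements of the same factorization then correspond to two hyperplanes $H^\prime$ and $H^{\prime\prime}$ cutting the same ball $B^{\mathrm{in}}_{m-1}$; one puts them in general position, cuts along both simultaneously, pushes off and smooths, and the resulting pieces are manifestly disjoint balls. No intersection-reduction, no surgery, and no Schoenflies-type input is ever needed. The colour-consistency step is then handled much as you anticipated: since the disjoint pieces do not interact, cancelling the factor inside one piece does not affect the colours appearing in the projection of its complement, so the newly created factors are independent and unit factors may be discarded. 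So your final paragraph is on target, but the key topological step should be replaced by the hyperplane-bisection picture, under which the common refinement exists essentially by general position.
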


\begin{proof}
We use the topology of sphere-and-interval tangles. Without the limitation of generality, machines are assumed to be non-split.

We first set up the necessary language.

Recall from Section~\ref{SS:MachinePrelim} that to \emph{cancel} a factor $N=(G,\bm{\phi}_H,\rho_H)$ in $M=(G,\bm{\phi},\rho)$ ($H$ denotes the domain of $\bm{\phi}_H$) is to replace $M$ by a machine $M-N\ass (G,\bm{\phi}_{G-H},\rho^H)$  where the $\rho^H$ satisfies $\rho^H(r)=\rho(r)$ for all $r\in G-H$. Topologically, we cancel a factor by replacing each of its spheres in $H$ by an interval connecting its incident segments. For concreteness, parameterizing $S^2$ as the unit sphere on the $xyz$ hyperplane in $\mathds{R}^4$, we replace $S^2$ by $(\cos(t),0,\sin(t),0)$ with $t\in[0,\pi]$, smoothing corners as required. See Figure~\ref{F:Trivialization4d}.

\begin{figure}[htb]
\includegraphics[width=4.5in]{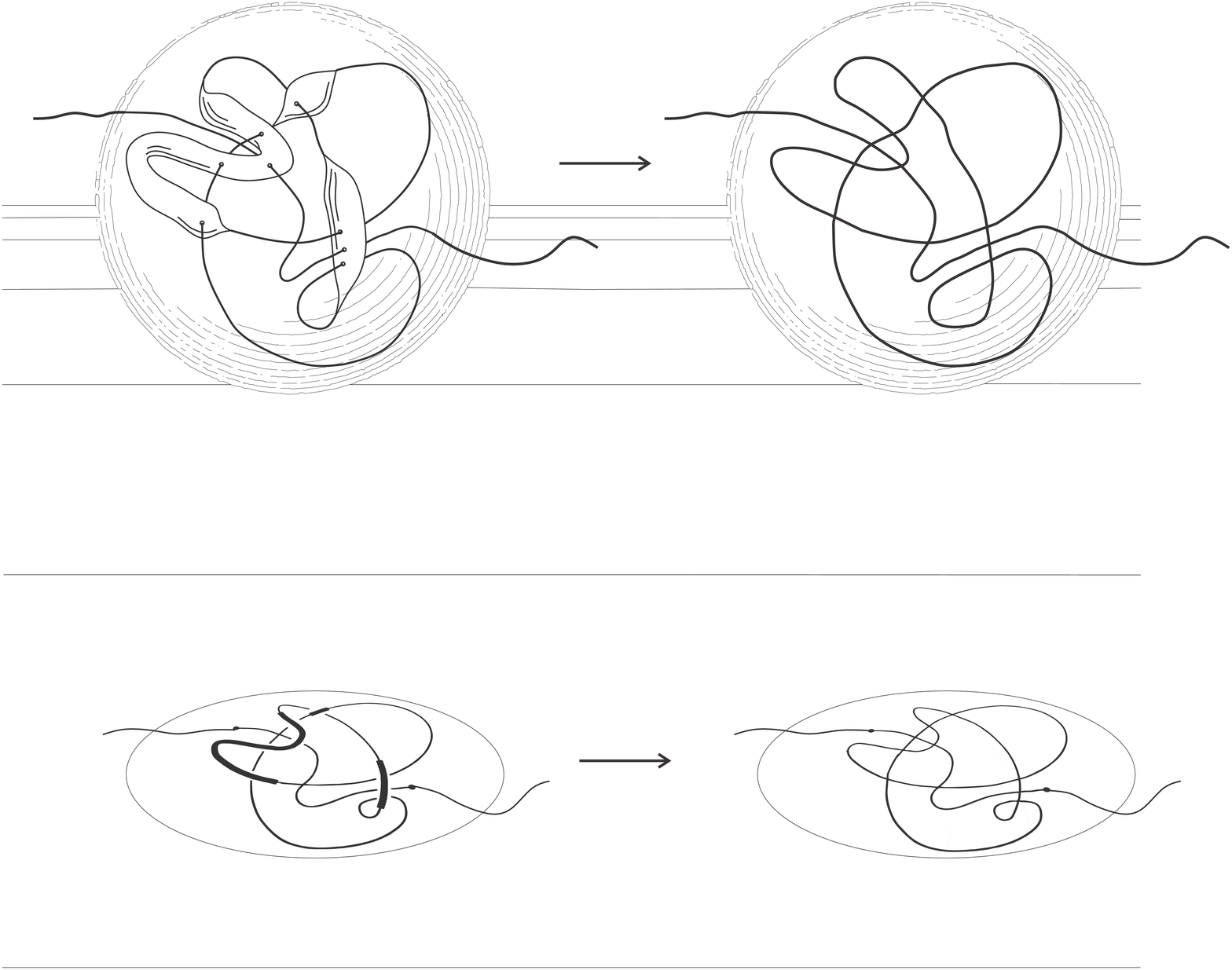}
\caption{\label{F:Trivialization4d}Cancelling a factor.}
\end{figure}


A \emph{system of decomposing spheres} for a sphere-and-interval tangle $K$ is a set of disjoint $3$--spheres $\mathcal{S}\ass\set{S_1,S_2,\ldots,S_k}$ embedded in $S^4\simeq \mathds{R}^4\cup \{\infty\}$ bounding $2k$ $4$--balls $B^\mathrm{in}_1,B^\mathrm{in}_2,\ldots,B^\mathrm{in}_k$ and $B^\mathrm{out}_1,B^\mathrm{out}_2,\ldots,B^\mathrm{out}_k$ in $S^4$, where $B^\mathrm{out}_i$ contains the point $\set{\infty}$ in its interior for $i=1,2,\ldots,k$. If $B_i$ properly contains $4$--balls $B_{\ell(1)},B_{\ell(2)},\ldots,B_{\ell(s)}$ then the \emph{domain} of $S_i$ is defined to be $B^\mathrm{in}_i$ minus the interiors of $B^\mathrm{in}_{\ell(1)},B^\mathrm{in}_{\ell(2)},\ldots,B^\mathrm{in}_{\ell(s)}$. We require that each sphere $S^i$ meets $K$ at a finite set of points (\textit{i.e.} each $S_i$ may intersect intervals of $K$, but does not intersect spheres of $K$), all of which share the same colour $x\in Q$ in a Rosemeister diagram $D_\mathcal{S}$ for $K$. We also require that $K\cap \bigcup_{i=1}^k B^{\mathrm{out}}_i$ is a unit machine, so that all of the `action' takes place inside the domains of $S_1,S_2,\ldots,S_k$. In the same vein, we also assume that $K\cap B^{\mathrm{in}}_i$ is non-unit for $i=1,2,\ldots,k$.

Lift a factorization of a machine to a system of decomposing spheres for a sphere-and-interval tangle representing it. To refine, bisect a decomposing sphere using a $3$--dimensional hyperplane $H\simeq \mathds{R}^3$, separating it into two spheres. For simplicity, we are ignoring the technical details of how to push off the resulting spheres relative to one another, smoothing corners, general position, \textit{etc.}

The factorization $\mathcal{N}$ corresponds to a set of decomposing spheres $\mathcal{S}_\mathcal{N}\ass \set{S_1,S_2,\ldots,S_{m-1}}$. If the refinements $\mathcal{N}^\prime$ and $\mathcal{N}^{\prime\prime}$ arise from bisections of distinct balls $B_i^{\mathrm{in}}$ and $B_j^{\mathrm{in}}$, we can perform both bisections simultaneously to obtain a common refinement $\mathcal{N}^{\prime\prime\prime}$ for both $\mathcal{N}^\prime$ and $\mathcal{N}^{\prime\prime}$. If both refinements are bisections of the same ball $B_{m-1}^{\mathrm{in}}$, let us take $\mathcal{S}_{\mathcal{N}^\prime}\ass\set{S_1,S_2,\ldots S_{m-2}, S^\prime_{m-1},S^\prime_{m}}$ as the system of decomposing spheres $\mathcal{N}^\prime$, and $\mathcal{S}_{\mathcal{N}^{\prime\prime}}\ass\set{S_1,S_2,\ldots S_{m-2}, S^{\prime\prime}_{m-1},S^{\prime\prime}_{m}}$ as the system of decomposing spheres $\mathcal{N}^{\prime\prime}$, where $(S_{m-1}^\prime,S_m^\prime)$ is induced by bisecting $S_{m-1}$ along a $3$--dimensional hyperplane $H^\prime$, and $(S_{m-1}^{\prime\prime},S_m^{\prime\prime})$ is induced by bisecting $S_{m-1}$ along a $3$--dimensional hyperplane $H^{\prime\prime}$.

Now that the statement of the theorem has been reformulated topologically, its proof becomes analogous to the proof of unique prime decomposition for knots (\textit{e.g} \citep{BurdeZieschang:03}). Assume general position, and cut along both $H^\prime$ and $H^{\prime\prime}$, pushing off and smoothing as required. The resulting balls are disjoint, and so there are no interactions between the factors of $M$ which they induce. Thus there exists a plane $P$ with respect to which there is a Rosemeister diagram $D_{\mathcal{N}^{\prime\prime\prime}}$ for $K$ in which the decomposing spheres appear as disjoint spheres which intersect $K$ only at segments. Cancelling the factor induced by $B_{m-1}^{\mathrm{in}}\in\mathcal{N}^{\prime}$ (the interior of $S_{m-1}^{\prime}$) does not affect the colours of the thin lines inside the projection $B_{m-1}^{\mathrm{out}}$ to $D_{\mathcal{N}^{\prime\prime\prime}}$ because that factor does not interact with these segments. Therefore, in particular, cancelling a factor of it will not effect those colours. Conversely, cancelling the factor induced by $B_{m}^{\mathrm{in}}\in\mathcal{N}^{\prime\prime}$ (the interior of $S_{m}^{\prime\prime}$) also does not affect the colours in the projection of $B_{m}^{\mathrm{out}}$ to $D_{\mathcal{N}^{\prime\prime\prime}}$, therefore in particular cancelling a factor of it will not effect those colours. Combining these two observations proves independence of the newly created factors. We are working modulo unit factors, so any unit factors that are created may be discarded. Thus, we have found the requisite common refinement.
\end{proof}

\begin{rem}
We would prefer to have an algebraic proof for Theorem~\ref{T:UniquePrimeFactorization}, but there is no such proof known even in the classical case of knots in $\mathds{R}^3$.
\end{rem}

We find prime factors in Rosemeister diagrams by trying out different systems of decomposing spheres in a Roseman diagram, and projecting those spheres down.

  What about Reidemeister diagrams? A system of decomposing spheres induces a \emph{system of cuts} for a tangle diagram of $K$, which are boundaries of discs in tangle diagrams which intersect thin strands transversely, and which may pass under over-strands. Cuts are drawn as dotted lines. A system of cuts is illustrated in Figure~\ref{F:m3131}. A system of cuts is transformed under Reidemeister moves as follows:

\begin{equation}
\begin{minipage}{4in}
\includegraphics[width=4in]{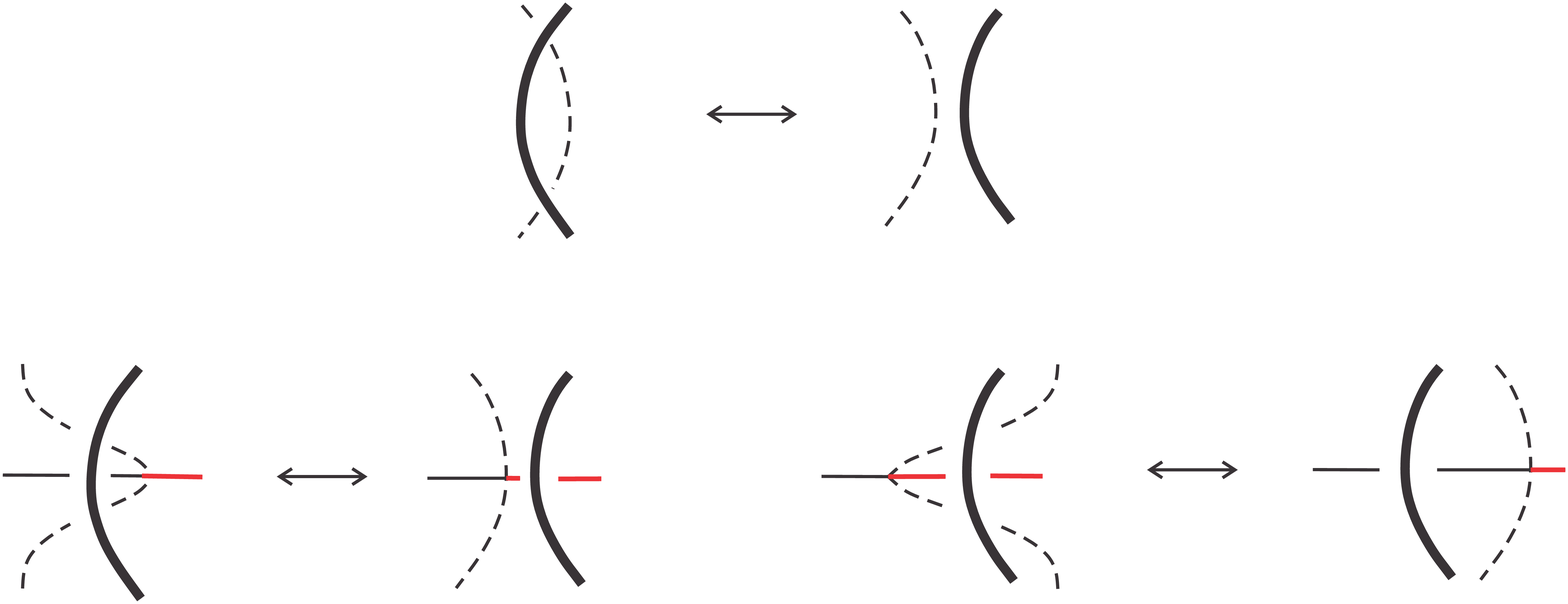}
\end{minipage}
\end{equation}

 As Figure~\ref{F:TrivBorromeanTrefoil} illustrates, it is not easy to find `good' systems of cuts for Reidemeister diagrams. The machine on the left of Figure~\ref{F:TrivBorromeanTrefoil} is an unit for any colouring, but the nontrivially coloured machines on the right are not units. They are both irreducible having a complexity of $1$. In principal, however, all factorizations do indeed arise from cut systems.

\begin{figure}
\centering
\begin{minipage}[t]{0.4\linewidth}
\centering
\vspace{0pt}
    \includegraphics[width=0.6\textwidth]{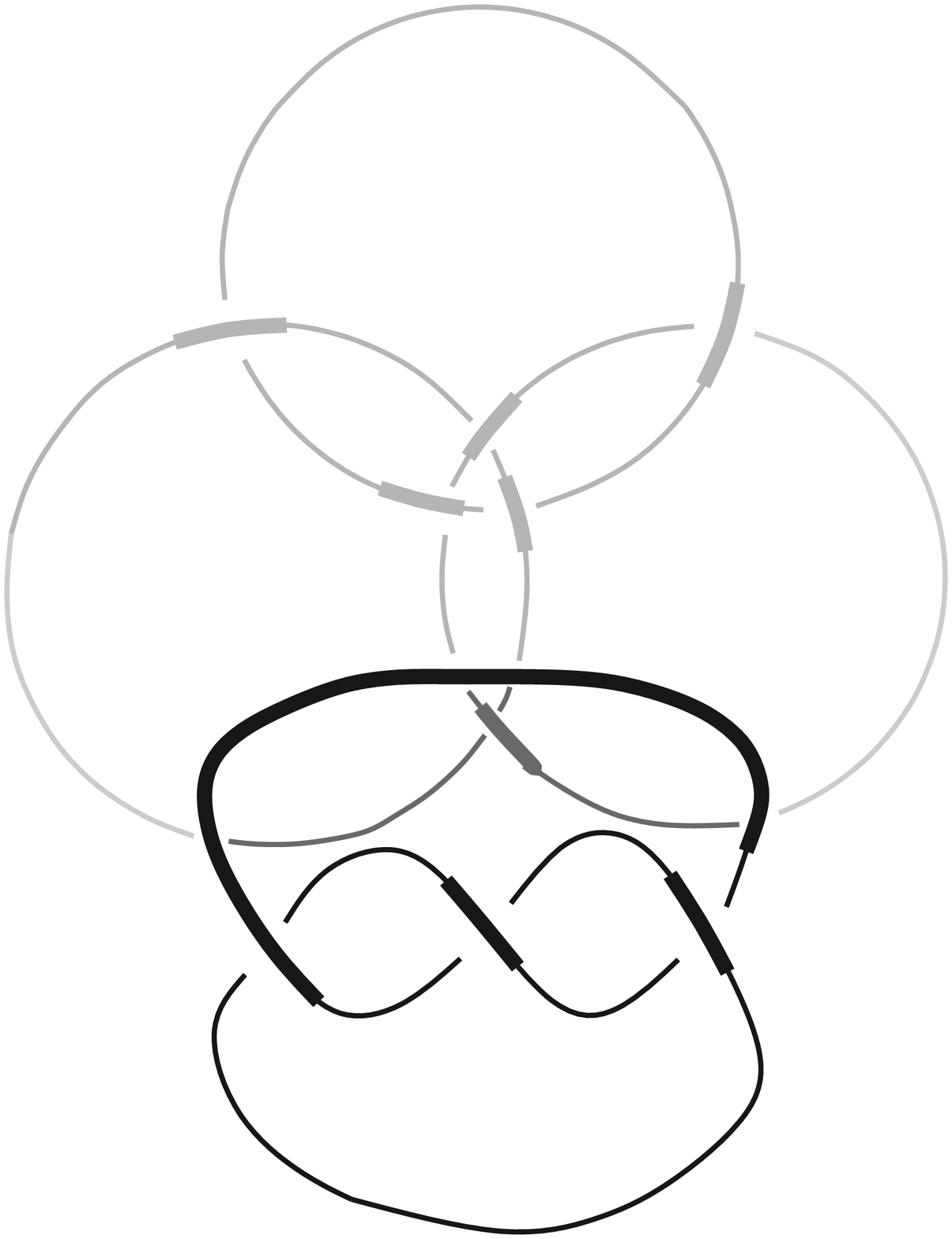}
\end{minipage}
\hspace{1mm}
\begin{minipage}[t]{0.4\linewidth}
\centering
\vspace{0pt}
    \includegraphics[width=0.6\textwidth]{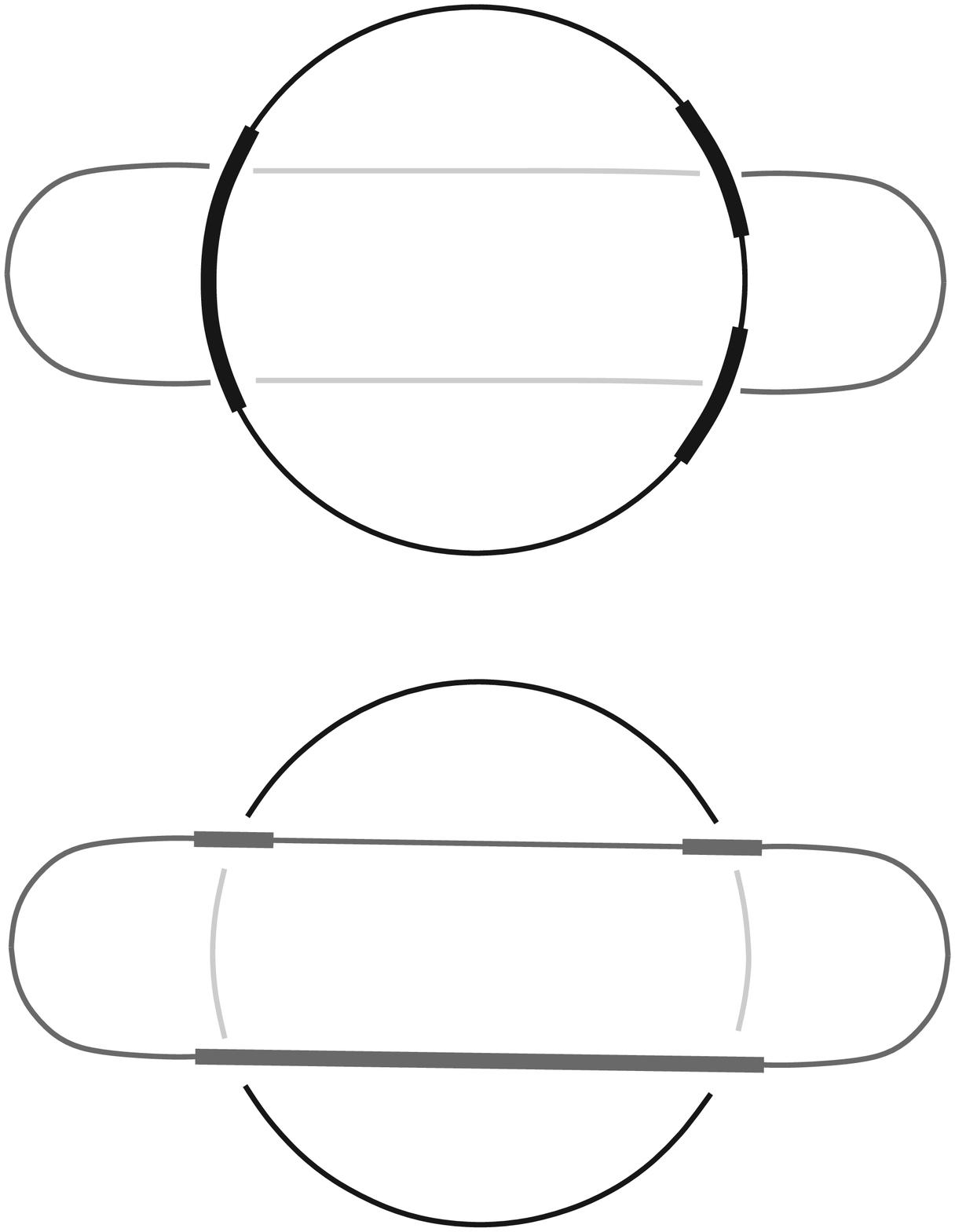}
\end{minipage}
\caption{\label{F:TrivBorromeanTrefoil}}
\end{figure}

\begin{prop}
Any factorization of a machine $M$ is induced by some system of cuts.
\end{prop}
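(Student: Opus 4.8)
The plan is to reduce the statement to the two correspondences already in hand, by first realizing the factorization topologically as a system of decomposing spheres and then projecting those spheres to produce the cuts. First I would invoke Theorem~\ref{T:MachineReidemeisterTheorem} to represent $M$ by a sphere-and-interval tangle $K$. Given a factorization $M = M_1 \,\#\, M_2 \,\#\, \cdots \,\#\, M_k$, the domains of the interaction functions of the factors partition the nonunit interactions of $M$ into disjoint blocks. As in the preceding lemma, this partition lifts to a system of decomposing spheres $\mathcal{S} = \set{S_1, \ldots, S_m}$ for $K$, where the domain of each $S_i$ collects exactly the interactions assigned to the corresponding factor and where $K \cap B^{\mathrm{out}}_i$ is a unit machine. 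This is precisely the passage from a factorization to its associated system of decomposing spheres that was used, in the reverse direction, in the proof of that lemma.

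Next I would project. Choose a $3$--dimensional hyperplane $H$, and a plane $P$, both generic with respect to $K$ \emph{and} to $\mathcal{S}$ simultaneously, and draw the resulting Rosemeister and Reidemeister diagrams of $M$. Each decomposing sphere $S_i$ meets $K$ only at a finite set of points lying on intervals, all sharing a common colour; under a generic projection these become finitely many transverse intersection points of the image of $S_i$ with thin strands, while $S_i$ itself descends to the boundary of an embedded disc which may pass beneath over-strands. By general position this projected boundary is exactly a cut in the sense defined above, so $\mathcal{S}$ projects to a system of cuts. It then remains to check that this cut system induces the original factorization: since projection alters neither which interactions lie in the domain of a given $S_i$ nor the disjointness of the balls $B^{\mathrm{in}}_i$, each interaction of $M$ lands inside the disc bounded by exactly the cut coming from the decomposing sphere to which it belonged. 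Cutting the diagram along the projected circles and separating therefore recovers precisely the factors $M_1, \ldots, M_k$.

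The main obstacle is controlling the projection in the second step. One must choose the projection direction generic to both $K$ and $\mathcal{S}$ so that each $3$--sphere $S_i$ descends to an embedded circle bounding a disc in the plane, meeting the thin strands transversely and without tangencies, passing cleanly beneath over-strands where they cross, and---crucially---enclosing in the plane exactly those projected interactions that belonged to the domain of $S_i$ in $S^4$. Establishing this general-position statement, so that the codimension-one spheres in $S^4$ descend faithfully to codimension-one cuts in $P$ without merging or splitting the regions they bound, is the technical heart of the argument; once it is in place, the claim that the resulting cut system induces the given factorization is combinatorial bookkeeping parallel to the treatment of decomposing spheres in the preceding lemma.
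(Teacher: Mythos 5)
Your first step---lifting the given factorization to a system of decomposing spheres $\mathcal{S}$ for a sphere-and-interval tangle $K$ representing $M$---is sound and matches the paper's machinery. The gap is in your second step, which you yourself flag as the ``technical heart'': you ask for a projection, generic with respect to $K$ and $\mathcal{S}$ simultaneously, under which each decomposing $3$--sphere $S_i$ descends to an embedded circle in the plane enclosing exactly the interactions of its factor. No such general-position statement holds, and no choice of generic direction can supply it. The image of $S_i$ (equivalently, of the $4$--ball $B^{\mathrm{in}}_i$ it bounds) under a projection $\mathds{R}^4\to\mathds{R}^2$ is a two-dimensional region, not a circle, and the obstruction to your requirement is codimension zero: the shadows of the disjoint balls $B^{\mathrm{in}}_1,\ldots,B^{\mathrm{in}}_m$ will in general overlap one another, and spheres and strands of $K$ belonging to \emph{other} factors will in general project into, or straddle the boundary of, the region $\pi\bigl(B^{\mathrm{in}}_i\bigr)$. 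This is the same phenomenon that forces crossings to appear in knot diagrams: disjointness upstairs never gives disjointness of shadows downstairs, for any generic direction, and perturbing the projection only moves the overlaps around. The failure can only be removed by moving $K$ itself, i.e.\ by an ambient isotopy---which is exactly the ingredient your argument omits.

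That ambient isotopy step is the actual content of the paper's (much shorter) proof. Because the decomposing spheres are disjoint, $K$ may be ambiently isotoped so that the factors occupy disjoint regions; by the Reidemeister Theorem for machines (Theorem~\ref{T:MachineReidemeisterTheorem}), the corresponding machine is equivalent to $M$ and has no interactions between the factors, and in that split diagram the system of cuts is evident. The paper then concludes by Theorem~\ref{T:UniquePrimeFactorization}, with the transformation rules for cuts under Reidemeister moves (displayed just before the proposition) providing the means of carrying such a cut system to any diagram of $M$ and identifying the induced factorization with the given one. So where you try to make the projection do the work after fixing the embedding, the paper fixes the projection and uses equivalence of machines to reposition the embedding; the former cannot be pushed through as stated, while the latter is precisely what the Reidemeister Theorem licenses.
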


\begin{proof}
By the existence of a system of decomposing spheres which by definition are disjoint, and by the Reidemeister Theorem for machines (Theorem~\ref{T:MachineReidemeisterTheorem}), $M$ is equivalent to a machine with no interactions between the factors. The proposition now follows from Theorem~\ref{T:UniquePrimeFactorization}.
 \end{proof}

\section{Conclusion}

We have exhibited colour-suppressed tangle machines as being diagrams for networks of jointly embedded spheres and intervals in standard Euclidean $\mathds{R}^4$. Our Reidemeister Theorem has demonstrated that two machines are (stably) equivalent if and only if any two sphere-and-interval tangles which they represent are (stably) equivalent.

We defined several invariants for machines:

\begin{itemize}
\item The underlying graph of a tangle machine, and its reduced version.
\item The sets of initial and terminal colours of a tangle machine.
\item The number of nontrivial interactions in a machine.
\item The fundamental rack or quandle of a machine.
\item The linking graph of a machine, and its reduced version. This contains information about relative influence of registers on processes.
\item The coloured linking graph. This contains information about the relative influence of colours of a registers on colours of processes.
\item The Shannon capacity of a machine, which measure how much information it can carry, or conversely how much information is required to encode the machine uniquely.
\item The complexity of a machine, that is its number of prime factors.
\end{itemize}

Additionally, we showed that false stabilization cannot decrease complexity, and we proved that the prime factorization of a machine is in a certain sense unique up to trivial factors. All proofs used the topological realization of a colour-suppressed machine.

\bibliographystyle{rspublicnat}

\end{document}